\documentclass[pra,aps,nopacs,onecolumn,twoside,superscriptaddress]{revtex4}

\usepackage{soul} 
\usepackage{multirow}
\usepackage{amsmath,amsfonts,amssymb,caption,color,epsfig,graphics,graphicx,hyperref,latexsym,mathrsfs,revsymb,theorem,url,verbatim,epstopdf,enumerate}
\usepackage{amsmath,amsfonts,amssymb,caption,hyperref,color,epsfig,graphics,graphicx,latexsym,mathrsfs,revsymb,theorem,url,verbatim,epstopdf,cleveref}
\usepackage{fontenc}
\hypersetup{colorlinks,linkcolor={blue},citecolor={blue},urlcolor={red}}

\usepackage{cleveref}

\newtheorem{definition}{Definition}
\newtheorem{proposition}[definition]{Proposition}
\newtheorem{lemma}[definition]{Lemma}

\newtheorem{theorem}[definition]{Theorem}
\newtheorem{corollary}[definition]{Corollary}
\newtheorem{conjecture}[definition]{Conjecture}

\newtheorem{remark}[definition]{Remark}
\newtheorem{example}[definition]{Example}
\newtheorem{question}[definition]{Question}
\newtheorem{memo}[definition]{Memo}

\def\squareforqed{\hbox{\rlap{$\sqcap$}$\sqcup$}}
\def\qed{\ifmmode\squareforqed\else{\unskip\nobreak\hfil
\penalty50\hskip1em\null\nobreak\hfil\squareforqed
\parfillskip=0pt\finalhyphendemerits=0\endgraf}\fi}
\def\endenv{\ifmmode\;\else{\unskip\nobreak\hfil
\penalty50\hskip1em\null\nobreak\hfil\;
\parfillskip=0pt\finalhyphendemerits=0\endgraf}\fi}
\newenvironment{proof}{\noindent \textbf{{Proof.~} }}{\qed}
\def\Dbar{\leavevmode\lower.6ex\hbox to 0pt
{\hskip-.23ex\accent"16\hss}D}
\makeatletter
\def\url@leostyle{%
  \@ifundefined{selectfont}{\def\UrlFont{\sf}}{\def\UrlFont{\small\ttfamily}}}
\makeatother
\def\bcj{\begin{conjecture}}
\def\ecj{\end{conjecture}}
\def\bcr{\begin{corollary}}
\def\ecr{\end{corollary}}
\def\bd{\begin{definition}}
\def\ed{\end{definition}}
\def\bea{\begin{eqnarray}}
\def\eea{\end{eqnarray}}
\def\bem{\begin{enumerate}}
\def\eem{\end{enumerate}}
\def\bex{\begin{example}}
\def\eex{\end{example}}
\def\bim{\begin{itemize}}
\def\eim{\end{itemize}}
\def\bl{\begin{lemma}}
\def\el{\end{lemma}}
\def\bma{\begin{bmatrix}}
\def\ema{\end{bmatrix}}
\def\bpf{\begin{proof}}
\def\epf{\end{proof}}
\def\bpp{\begin{proposition}}
\def\epp{\end{proposition}}
\def\bqu{\begin{question}}
\def\equ{\end{question}}
\def\br{\begin{remark}}
\def\er{\end{remark}}
\def\bt{\begin{theorem}}
\def\et{\end{theorem}}
\def\bmm{\begin{memo}}
\def\emm{\end{memo}}

\def\btb{\begin{tabular}}
\def\etb{\end{tabular}}

\newcommand{\nc}{\newcommand}

\nc{\as}{{\cal AS}}
\nc{\app}{{\cal AP}}

\def\a{\alpha}
\def\b{\beta}
\def\g{\gamma}
\def\d{\delta}

\def\l{\lambda}

\def\n{\nu}

\def\r{\rho}

\def\c{\chi}

 \nc{\bbA}{\mathbb{A}} \nc{\bbB}{\mathbb{B}} \nc{\bbC}{\mathbb{C}}
 \nc{\bbD}{\mathbb{D}} \nc{\bbE}{\mathbb{E}} \nc{\bbF}{\mathbb{F}}
 \nc{\bbG}{\mathbb{G}} \nc{\bbH}{\mathbb{H}} \nc{\bbI}{\mathbb{I}}
 \nc{\bbJ}{\mathbb{J}} \nc{\bbK}{\mathbb{K}} \nc{\bbL}{\mathbb{L}}
 \nc{\bbM}{\mathbb{M}} \nc{\bbN}{\mathbb{N}} \nc{\bbO}{\mathbb{O}}
 \nc{\bbP}{\mathbb{P}} \nc{\bbQ}{\mathbb{Q}} \nc{\bbR}{\mathbb{R}}
 \nc{\bbS}{\mathbb{S}} \nc{\bbT}{\mathbb{T}} \nc{\bbU}{\mathbb{U}}
 \nc{\bbV}{\mathbb{V}} \nc{\bbW}{\mathbb{W}} \nc{\bbX}{\mathbb{X}}
 \nc{\bbZ}{\mathbb{Z}}

 \nc{\bA}{{\bf A}} \nc{\bB}{{\bf B}} \nc{\bC}{{\bf C}}
 \nc{\bD}{{\bf D}} \nc{\bE}{{\bf E}} \nc{\bF}{{\bf F}}
 \nc{\bG}{{\bf G}} \nc{\bH}{{\bf H}} \nc{\bI}{{\bf I}}
 \nc{\bJ}{{\bf J}} \nc{\bK}{{\bf K}} \nc{\bL}{{\bf L}}
 \nc{\bM}{{\bf M}} \nc{\bN}{{\bf N}} \nc{\bO}{{\bf O}}
 \nc{\bP}{{\bf P}} \nc{\bQ}{{\bf Q}} \nc{\bR}{{\bf R}}
 \nc{\bS}{{\bf S}} \nc{\bT}{{\bf T}} \nc{\bU}{{\bf U}}
 \nc{\bV}{{\bf V}} \nc{\bW}{{\bf W}} \nc{\bX}{{\bf X}}
 \nc{\bZ}{{\bf Z}}

\nc{\cA}{{\cal A}} \nc{\cB}{{\cal B}} \nc{\cC}{{\cal C}}
\nc{\cD}{{\cal D}} \nc{\cE}{{\cal E}} \nc{\cF}{{\cal F}}
\nc{\cG}{{\cal G}} \nc{\cH}{{\cal H}} \nc{\cI}{{\cal I}}
\nc{\cJ}{{\cal J}} \nc{\cK}{{\cal K}} \nc{\cL}{{\cal L}}
\nc{\cM}{{\cal M}} \nc{\cN}{{\cal N}} \nc{\cO}{{\cal O}}
\nc{\cP}{{\cal P}} \nc{\cQ}{{\cal Q}} \nc{\cR}{{\cal R}}
\nc{\cS}{{\cal S}} \nc{\cT}{{\cal T}} \nc{\cU}{{\cal U}}
\nc{\cV}{{\cal V}} \nc{\cW}{{\cal W}} \nc{\cX}{{\cal X}}
\nc{\cZ}{{\cal Z}}

\nc{\hA}{{\hat{A}}} \nc{\hB}{{\hat{B}}} \nc{\hC}{{\hat{C}}}
\nc{\hD}{{\hat{D}}} \nc{\hE}{{\hat{E}}} \nc{\hF}{{\hat{F}}}
\nc{\hG}{{\hat{G}}} \nc{\hH}{{\hat{H}}} \nc{\hI}{{\hat{I}}}
\nc{\hJ}{{\hat{J}}} \nc{\hK}{{\hat{K}}} \nc{\hL}{{\hat{L}}}
\nc{\hM}{{\hat{M}}} \nc{\hN}{{\hat{N}}} \nc{\hO}{{\hat{O}}}
\nc{\hP}{{\hat{P}}} \nc{\hR}{{\hat{R}}} \nc{\hS}{{\hat{S}}}
\nc{\hT}{{\hat{T}}} \nc{\hU}{{\hat{U}}} \nc{\hV}{{\hat{V}}}
\nc{\hW}{{\hat{W}}} \nc{\hX}{{\hat{X}}} \nc{\hZ}{{\hat{Z}}}

\nc{\hn}{{\hat{n}}}

\def\diag{\mathop{\rm diag}}

\def\max{\mathop{\rm max}}

\def\dg{\dagger}

\def\ra{\rightarrow}

\newcommand{\ket}[1]{|#1\rangle}
\newcommand{\proj}[1]{| #1\rangle\!\langle #1 |}

\def\Dbar{\leavevmode\lower.6ex\hbox to 0pt
{\hskip-.23ex\accent"16\hss}D}

\begin{document}

\title{Extreme points of absolutely PPT states with exactly three distinct eigenvalues}

\date{\today}

\pacs{03.65.Ud, 03.67.Mn}

\author{Nalan Wang}\email[]{nalanwang@buaa.edu.cn}
\affiliation{LMIB(Beihang University), Ministry of Education, and School of Mathematical Sciences, Beihang University, Beijing 100191, China}

\author{Lin Chen}\email[]{linchen@buaa.edu.cn (corresponding author)}
\affiliation{LMIB(Beihang University), Ministry of Education, and School of Mathematical Sciences, Beihang University, Beijing 100191, China}

\author{Zhiwei Song}\email[]{zhiweisong@buaa.edu.cn (corresponding author)}
\affiliation{LMIB(Beihang University), Ministry of Education, and School of Mathematical Sciences, Beihang University, Beijing 100191, China}


\begin{abstract}
Whether the sets of absolutely separable (AS) and absolutely two-qutrit positive-partial-transpose (AP) states are the same has been an open problem in entanglement theory for decades. Since they are both convex sets, we investigate the boundary and extreme points of full-rank two-qutrit AP states with exactly three distinct eigenvalues. We show that every boundary point is an extreme point, with exactly one exception. We explicitly characterize the expressions of such points, each of which turns out to contain at most one parameter in some intervals. When the parameter approaches the ends of intervals, most points become the known extreme points of exactly two distinct eigenvalues. We present our results by tables and figures.  
\end{abstract}

\maketitle

Keywords: separable state, positive partial transpose (PPT), absolute separability, absolute PPT

\tableofcontents

\section{Introduction}

The separability problem is a long-standing open problem arising from quantum information theory, and known as an NP-hard problem in computational complexity. Determining whether a quantum state is separable (equivalently, non-entangled) is a key step to confirm the quantumness of many practically useful protocols such as quantum computing, cryptography and teleportation. From a theoretical perspective, non-separability may demonstrate the non-locality which is one of the fundamental features of quantum mechanics. As a result, much efforts have been devoted into the study of separability problem in the past decades, by developing various tools from linear and operator algebra. The most important tool is the positive partial transpose (PPT) criterion \cite{peres1996separability,horodecki1996necessary}. This is a linear positive map under which a separable state on $\bbC^m\otimes\bbC^n$ remains positive. The criterion is also sufficient when $mn\le6$. However, there exist entangled states which are positive partial transpose (PPT) for $mn>6$ \cite{horodecki1997separability}. The construction and detecting of such states have attracted lots of attentions in the quantum-information community. The separability of quantum states of simplest system, namely two-qubit states have been widely studied from more aspects such as volume measure \cite{zyczkowski1998volume,kus2001geometry,zyczkowski2003hilbert,huong2024separability}, eigenvalues \cite{slater2009eigenvalues,tanaka2014determining}, and operator monotone \cite{slater2021quasirandom}. Three-qubit separable states have also been investigated \cite{han2017separability}.
The boundary and length of separable states have also been studied \cite{chen2013dimensions,chen2015boundary,halder2021characterizing}. Separable and PPT states have also been studied numerically \cite{leinaas2007extreme,leinaas2010numerical}. 

As a proper subset of separable (resp. PPT) states, absolutely separable (resp. PPT) states were firstly proposed as a kind of separable (resp. PPT) states close to the maximally mixed states \cite{gurvits2002largest,verstraete2001maximally,adhikari2021constructing}. It turns out that the properties of absolute separability (AS) and absolute PPT (AP) are essentially determined by the spectrum of density matrices \cite{knill2003separability,gurvits2005better,hildebrand2007entangled,2015Positive}. AS and PPT states have been studied in terms of maps and channels \cite{filippov2017absolutely}. The difference has been shown between absolute separability and positive
partial transposition in the symmetric subspace \cite{louvet2025nonequivalence,serrano2024absolute}.
There have been lots of efforts for improving the bounds of absolute separability  \cite{abellanet2024improving}. The connection between resource theory and absolute separability was also constructed \cite{patra2023resource}.

By definition, AS and AP states are respectively separable and PPT up to any global unitary operation. Such states exist, say the maximally mixed states. Further, AS states are a subset of AP states. Whether AS and AP states coincide has been an open problem for decades \cite{arunachalam2014absolute}. The problem has a positive answer for qubit-qudit system \cite{Johnston2013Separability}. It is known that the sets of AS and AP states are both convex and compact. Hence they are convex hulls of extreme points. The open problem would have a positive answer when they have the same extreme points. Recently, the extreme points of qubit-qudit AP states have been totally characterized \cite{2025Extreme}. In the same paper, the extreme points of two-qutrit AP states with exactly two distinct eigenvalues have also been analytically constructed. 

In this paper, we shall continue in this vein to study the extreme points of full-rank two-qutrit AP states with exactly three distinct eigenvalues $a,b$ and $c$ in the decreasing order. We begin by formally introducing the convex sets of separable, PPT, absolutely separable and absolutely PPT states, as well as their extreme points as defined in Definition \ref{absolutely_separable_rho}. We then explore interior points from the perspective of line segments, as revealed by the lemma \ref{equivalent_definition_of_interiorpoint}. It helps readers independently check whether the points constructed in the tables later are boundary or interior points. We also review the necessary and sufficient conditions, by which a two-qutrit state lies in $\app_{3,3}$ and becomes a boundary/extreme point in Lemmas \ref{AP3,3_judge_positive_semi-definite} and \ref{boun-extr}. They actually have been used to generate all extreme points of two-qutrit AP states of exactly two distinct eigenvalues, as we show in Lemma \ref{le:two distinct eigenvalues}. From Sec. \ref{sec:mu(a)=1}, we introduce our findings of characterizing extreme points of $\app_{3,3}$. As the simplest case, we begin by investigating the boundary points of $\app_{3,3}$ with exactly one maximal eigenvalue 
in Lemmas \ref{le:a1b7c,a4b4c} and \ref{le:3cbbbbbccc}. They are mostly extreme points of $\app_{3,3}$, with the only exceptional point $\nu_{1,5,3}:=\diag(3c,b,b,b,b,b,c,c,c)$, which turns out to be a non-extreme point in Lemma \ref{le:nu(1,5,3)}. We further show that almost all boundary points of $\app_{3,3}$ with exactly three distinct eigenvalues are extreme points except $\nu_{1,5,3}$ by Theorem \ref{all_3eigen}. Then we proceed to characterize the expressions of above-mentioned extreme points in terms of the numbers of largest eigenvalues $\mu(a)$ in Sec. \ref{sec:mu(a)=1}-\ref{sec:mu(a)>2}. Most extreme points have exactly one parameter in some intervals. When the parameter approach the two ends of interval, the extreme points become the known extreme points of exactly two distinct eigenvalues. For the convenience of readers, we present our findings for $\mu(a)=1,2,...,7$ in Table \ref{tab:performance1}-\ref{tab:performance7}, respectively. We also show the connection of extreme points in Figure \ref{fig:fig1}-\ref{fig:fig7} and \ref{fig:umbrella}. The last figure shows the Umbrella Model concluding the main findings of this paper. Some calculations are presented in appendix \cref{Appendix}.

The rest of this paper is organized as follows. In Sec. \ref{sec:pre} we introduce the preliminary facts used in this paper. In Sec. \ref{sec:res} we study the existence of boundary and extreme points of two-qutrit AP states with various numbers of distinct eigenvalues. They are further characterized in Sec. \ref{sec:mu(a)=1}-\ref{sec:mu(a)>2}. Finally we conclude in Sec. \ref{sec:con}.

\section{Preliminaries}
\label{sec:pre}

In this section we introduce the preliminary knowledge of this paper. A bipartite state $\r$ is called separable if it is the convex sum of product states. It is known that a separable state is PPT though the converse usually fails. The definitions of absolute separable (AS) and absolute PPT (AP) states are presented as follows.

\begin{definition}
\label{absolutely_separable_rho}
    A separable (resp. PPT) state $\r$ is called absolutely separable (resp. PPT) if it remains separable (PPT) regardless of what global unitary operation is applied. That is, for any unitary matrix $U$, the state $U\r U^\dg$ is separable (resp. PPT). We shall respectively refer to $\as_{m,n}$ and $\app_{m,n}$ as the sets of $m\times n$ AS and AP states. 
\qed
\end{definition}
The set of separable and PPT states are both convex sets. An extreme point of a convex set cannot be expressed as a non-trivial convex combination of two  distinct points in the convex set. 
A state $\rho \in \as_{m,n}$ (resp. $\app_{m,n}$) is called an interior point if there exists $\epsilon > 0$ s.t. $\frac{1}{1-\epsilon} (\rho - \epsilon \frac{1}{mn} I_{mn}) \in \as_{m,n}$ (resp. $\app_{m,n}$). Otherwise, $ \rho$ is called a boundary point. So any extreme point of $\as_{m,n}$ (resp. $\app_{m,n}$) is a boundary point. We can explore interior points from the perspective of line segments, as revealed by the following lemma. It helps readers check whether the points constructed in Table \ref{tab:performance1}-\ref{tab:performance7} are boundary or interior points, independently from the argument used in subsequent paragraphs. 
\begin{lemma}
\label{equivalent_definition_of_interiorpoint}
    A state $\rho \in \as_{m,n}$ (resp. $\app_{m,n}$) is an interior point if and only if for any state $v \in \as_{m,n}$ (resp.\quad $\app_{m,n}$), there exists $\epsilon > 0$ s.t. for any $0 < \delta < \epsilon$, we have 
    \begin{equation}
        \frac{1}{1+\delta}(\rho + \delta v) \in \as_{m,n} (resp.\quad  \app_{m,n}) \ and \  \frac{1}{1-\delta}(\rho - \delta v) \in \as_{m,n} (resp.\quad  \app_{m,n}).
    \end{equation}
\end{lemma}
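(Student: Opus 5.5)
Write $S$ for either $\as_{m,n}$ or $\app_{m,n}$ and $\tau:=\frac{1}{mn}I_{mn}$. The argument uses only three facts. First, $S$ is convex. Second, $S$ is closed under conjugation by global unitaries. Third, and most important, $\tau$ is an interior point of $S$ in the affine hull of density matrices: there is $r>0$ such that every density matrix within distance $r$ of $\tau$ lies in $S$. For $\as$ this is the Gurvits--Barnum ball. For $\app$ it follows because the PPT states contain a ball around $\tau$, which is unitarily invariant. The plan is to prove both implications directly, using only convex combinations with $\tau$ and with $v$.

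For the \emph{if} direction, take $v=\tau$, which lies in $S$. The hypothesis then gives some $\delta\in(0,\epsilon)$ with $\frac{1}{1-\delta}(\rho-\delta\tau)\in S$. This is exactly the defining condition for $\rho$ to be an interior point, with the $\epsilon$ of the definition taken to be this $\delta$. So this direction is immediate.

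For the \emph{only if} direction, assume $\sigma:=\frac{1}{1-\epsilon_0}(\rho-\epsilon_0\tau)\in S$ for some $\epsilon_0>0$, and fix $v\in S$.

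\textbf{The plus sign.} The matrix $\frac{1}{1+\delta}(\rho+\delta v)$ is a convex combination of $\rho$ and $v$, so it lies in $S$ for every $\delta>0$ by convexity.

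\textbf{The minus sign.} This is the main obstacle. Write $\rho=(1-\epsilon_0)\sigma+\epsilon_0\tau$. Then
\[
\frac{\rho-\delta v}{1-\delta}=\frac{1-\epsilon_0}{1-\delta}\,\sigma+\frac{\epsilon_0}{1-\delta}\Big(\tau-\frac{\delta}{\epsilon_0}(v-\tau)\Big).
\]
The two coefficients are nonnegative and sum to $\frac{1-\delta}{1-\delta}=1$ when the second term is read as $\frac{\epsilon_0-\delta}{1-\delta}\cdot\tau'$ with
\[
\tau'=\frac{\epsilon_0\tau-\delta v+\delta\tau-\delta\tau}{\epsilon_0-\delta}.
\]
The cleaner route is to rewrite the whole expression as
\[
\frac{1-\epsilon_0}{1-\delta}\,\sigma+\frac{\epsilon_0-\delta}{1-\delta}\,w_\delta,\qquad w_\delta:=\frac{\epsilon_0\tau-\delta v}{\epsilon_0-\delta},
\]
valid for $\delta<\epsilon_0$. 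Here $w_\delta$ has trace one, and $w_\delta-\tau=\frac{\delta}{\epsilon_0-\delta}(\tau-v)$. Since $\|\tau-v\|$ is bounded (by 2 in trace norm), we get $\|w_\delta-\tau\|<r$ for all sufficiently small $\delta$. Hence $w_\delta$ is a density matrix in the $r$-ball around $\tau$, so $w_\delta\in S$. By convexity, $\frac{1}{1-\delta}(\rho-\delta v)\in S$. Choosing $\epsilon=\min\{\epsilon_0,\ \epsilon_0 r/(r+2)\}$ (or similar) makes this hold for all $0<\delta<\epsilon$.

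The key delicate step is justifying the ball around $\tau$. I would cite the known results (Gurvits--Barnum for $\as$, hence also for $\app\supseteq\as$) rather than reprove them. With that ball in hand, the rest is bookkeeping with convex combinations.
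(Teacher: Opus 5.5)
Your proof is correct. For the ``if'' direction and the plus sign it is the paper's argument. The paper writes $\frac{\rho+\delta v}{1+\delta}$ as a convex combination of $\rho_{\epsilon_0}$, $\frac{I_{mn}}{mn}$ and $v$. Your two-term combination of $\rho$ and $v$ is the same observation, and as you note it needs no smallness of $\delta$.

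The routes differ at the minus sign. The paper only rewrites $\frac{\rho-\delta v}{1-\delta}=\frac{(1-\epsilon_0)\rho_{\epsilon_0}+\epsilon_0\frac{I_{mn}}{mn}-\delta v}{1-\delta}$ and asserts membership. But $v$ carries a negative coefficient, so convexity alone gives nothing. In fact the implication fails for a general convex set in which $\tau$ is a boundary point. Take a triangle with vertices $\tau,x,y$, let $\rho$ be the midpoint of $[\tau,x]$ (interior in the paper's sense with $\epsilon_0=\frac12$, $\sigma=x$), and take $v=y$; then $\frac{\rho-\delta v}{1-\delta}$ leaves the triangle for every $\delta>0$. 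Your decomposition $\frac{1-\epsilon_0}{1-\delta}\sigma+\frac{\epsilon_0-\delta}{1-\delta}w_\delta$, combined with a ball of absolutely separable states around $\tau$, supplies exactly the ingredient the paper leaves implicit. You pay for this with one cited external fact (Gurvits--Barnum, passed to $\app_{m,n}\supseteq\as_{m,n}$). In return you get a complete argument where the paper only sketches the decisive step.

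Three small repairs are needed.
\begin{enumerate}
\item Your first display is wrong. Its right-hand side equals $\frac{\rho-\delta v+\delta\tau}{1-\delta}$, and its coefficients sum to $\frac{1}{1-\delta}$, not $1$. Delete it and the sentence about $\tau'$, and keep only the $w_\delta$ decomposition, which is correct.
\item You call $w_\delta$ a density matrix without checking positivity. One fix is to add $\delta\le\epsilon_0/(mn)$ to the choice of $\epsilon$; then $\epsilon_0\tau-\delta v\ge 0$ because $v\le I_{mn}$. Alternatively, use the Gurvits--Barnum ball in its Hilbert--Schmidt form of radius $1/\sqrt{mn(mn-1)}$. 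Inside it every trace-one Hermitian matrix is automatically positive semidefinite, and your trace-norm estimate dominates the Hilbert--Schmidt distance.
\item State explicitly that $0<\epsilon_0<1$, since you use it for $\frac{1-\epsilon_0}{1-\delta}\ge 0$. The definition of interior point must be read as requiring $\epsilon<1$; otherwise it is trivial.
\end{enumerate}
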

\begin{proof}
    Firstly, we prove the only if part. Suppose $\rho$ is an interior point of $\as_{m,n}$(resp. $\app_{m,n}$). From the definition of interior points, there exists $\epsilon_0 > 0$ s.t. 
    \begin{eqnarray}
        \rho_{\epsilon_0} := \frac{1}{1-\epsilon_0} \left( \rho - \epsilon_0 \frac{I_{mn}}{mn} \right) \in \mathcal{AS}_{m,n}(resp. \quad \app_{m,n}).
    \end{eqnarray}
    Take an arbitrary $v \in \mathcal{AS}_{m,n}$(resp. $\app_{m,n}$), and for sufficiently small $\delta$, we consider
    \begin{eqnarray}
    \label{eq:rho+delta v}\frac{\rho + \delta v}{1+\delta} = \frac{ (1-\epsilon_0)\rho_{\epsilon_0} + \epsilon_0 \frac{I_{mn}}{mn} + \delta v}{1+\delta}
    \end{eqnarray}
    where $\delta>0$. This can be expressed as a convex combination of $\rho_{\epsilon_0}, \frac{I_{mn}}{mn}, v$, with nonnegative coefficients. As long as $\delta$ is small enough, \eqref{eq:rho+delta v} belongs to the convex set $\mathcal{AS}_{m,n}$(resp. $\app_{m,n}$), and the following is also an AS (resp. AP) state,
\begin{eqnarray}
         \frac{\rho - \delta v}{1-\delta} = \frac{ (1-\epsilon_0)\rho_{\epsilon_0} + \epsilon_0 \frac{I_{mn}}{mn} - \delta v}{1-\delta}.
    \end{eqnarray}
    Therefore, for sufficiently small $\epsilon > 0$, we have
    \begin{equation}
        \frac{1}{1+\delta}(\rho + \delta v) \in \as_{m,n} (resp.\quad  \app_{m,n}) \ and \  \frac{1}{1-\delta}(\rho - \delta v) \in \as_{m,n} (resp.\quad  \app_{m,n})
    \end{equation}
    for all $0<\delta<\epsilon$. 

    Secondly, we prove the if part. In particular, take $v = \frac{I_{mn}}{mn}$, which clearly belongs to $\mathcal{AS}_{m,n}$(resp. \quad $\app_{m,n}$). Then there exists some $\delta_0 > 0$ s.t.
    \begin{eqnarray}
         \frac{1}{1-\delta_0} \left( \rho - \delta_0 \frac{I}{mn} \right) \in \mathcal{AS}_{m,n}(resp.\quad  \app_{m,n}).
    \end{eqnarray}
    Therefore, $\rho$ must be an interior point. The proof is complete.
\end{proof}

In the following, we review some necessary facts from \cite{2025Extreme}. They will be used in the derivation of novel boundary and extreme points of $\app_{3,3}$ in the next sections. We shall use frequently the following two $3\times3$ real symmetric matrices,
\begin{eqnarray}
&&
L_1(x):=
\bma 
2x_9 & x_8-x_1 & x_6-x_2 \\
x_8-x_1 & 2x_7 & x_5-x_3 \\
x_6-x_2 & x_5-x_3 & 2x_4 
\ema,
\\&&
l_1(x):=\det L_1(x),
\\&&
L_2(x):=
\bma 
2x_9 & x_8-x_1 & x_7-x_2 \\
x_8-x_1 & 2x_6 & x_5-x_3 \\
x_7-x_2 & x_5-x_3 & 2x_4 
\ema,
\\&&
l_2(x):=\det L_2(x),
\end{eqnarray}
where $x_1\ge \cdots \ge x_9>0$. 

\begin{lemma}
\label{AP3,3_judge_positive_semi-definite}
(i) The full-rank two-qutrit state $\r$ of eigenvalue vector $\l$ belongs to $\mathcal{AP}_{3,3}$ if and only if $L_1(\lambda), \ L_2(\lambda)$ are both positive semi-definite matrices. That is, $l_1(\l)\ge0$, $l_2(\l)\ge0$, and $4x_7x_9\ge(x_8-x_1)^2$. 

(ii) If $\r$ is a boundary point of $\mathcal{AP}_{3,3}$, then at least one of $l_1(\lambda), l_2(\lambda)$ equals zero. 
\end{lemma}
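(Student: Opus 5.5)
Part (i) is the Hildebrand characterization of absolutely PPT states in $\bbC^3\otimes\bbC^3$, specialized to full rank. I would invoke it and then simplify the positive-semidefiniteness conditions by hand. Part (ii) I would get from (i) together with the definition of interior point.

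\textbf{(i).} Hildebrand showed that $\r$ with decreasingly ordered spectrum $\l$ is absolutely PPT if and only if $L_1(\l)\succeq0$ and $L_2(\l)\succeq0$. The reason is that $\min_U \lambda_{\min}((U\r U^\dg)^\Gamma)$ is obtained by optimizing over orderings of the eigenvalues placed in a product basis, and only these two orderings survive the dominance reductions. I would cite this and not reprove it.

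Next I would reduce positive semidefiniteness to the three stated inequalities. The diagonal entries $2x_9,2x_7,2x_4,2x_6$ are positive because $\r$ has full rank. By Sylvester's criterion applied to the principal minors, it is enough to have a positive leading $1\times1$ minor, a nonnegative leading $2\times2$ minor, and a nonnegative determinant; I would check below whether this really covers the semidefinite case.

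The leading $2\times2$ minor of $L_1$ is $4x_7x_9-(x_8-x_1)^2$. For $L_2$ the same minor is $4x_6x_9-(x_8-x_1)^2$, which is at least as large since $x_6\ge x_7$. So the single condition $4x_7x_9\ge(x_8-x_1)^2$ handles both.

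In the semidefinite case the leading-minor test alone is not sufficient. However, when the $1\times1$ minor is strictly positive, a nonnegative $2\times2$ leading minor plus $\det\ge0$ does give $L\succeq0$. One way to see this is a Schur complement with respect to the $(1,1)$ entry. The complement is a $2\times2$ matrix with nonnegative determinant and a nonnegative diagonal entry, the latter coming from the $2\times2$ minor condition, so its trace is nonnegative and it is semidefinite. This gives the "that is" form.

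\textbf{(ii).} I would argue by contraposition: if $l_1(\l)>0$ and $l_2(\l)>0$, then $\r$ is interior.

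First I would show that $L_1(\l)$ and $L_2(\l)$ are then positive definite. The determinant is positive, the diagonal is positive, and by (i) the matrix is semidefinite, so it must be definite.

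Now set $\r_\epsilon=\frac{1}{1-\epsilon}(\r-\epsilon I/9)$. Its spectrum is $\l_\epsilon=(\l-\epsilon/9)/(1-\epsilon)$, with the ordering preserved. Because the matrices are affine in $x$, $L_i(\l_\epsilon)$ is a positive multiple of $L_i(\l)-\frac{\epsilon}{9}L_i(\mathbf 1)$, and $L_i(\mathbf 1)=2I_3$. Hence
\[
L_i(\l_\epsilon)=\tfrac{1}{1-\epsilon}\left(L_i(\l)-\tfrac{2\epsilon}{9}I_3\right),
\]
which stays positive definite for small $\epsilon$. The eigenvalues of $\r_\epsilon$ also stay positive, so $\r_\epsilon$ is a full-rank state. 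By (i), $\r_\epsilon\in\app_{3,3}$, so $\r$ is interior.

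The only delicate point is the Sylvester-type reduction in the semidefinite case. The rest is direct.
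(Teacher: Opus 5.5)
Citing Hildebrand for the characterization is in line with the paper, which gives no proof of this lemma and imports it from earlier work. Your contrapositive argument for (ii), via $L_i(\mathbf 1)=2I_3$, is correct.

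The gap is in reducing $L_1,L_2\succeq0$ to the three scalar inequalities. Your general claim is false: a positive $(1,1)$ entry, a nonnegative leading $2\times2$ minor and a nonnegative determinant do not force semidefiniteness. For instance, $\begin{pmatrix}1&1&0\\1&1&0\\0&0&-1\end{pmatrix}$ has leading minors $1,0,0$ and is not semidefinite. Your Schur-complement argument breaks exactly when the leading $2\times2$ minor vanishes. Then the complement $S$ has $S_{11}=0$, and a $2\times2$ matrix with nonnegative determinant and a zero diagonal entry need not have nonnegative trace; $\diag(0,-1)$ is an example. So "its trace is nonnegative" does not follow.

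The statement still holds, but only because of the ordering $x_1\ge\cdots\ge x_9>0$, which you never use at this step. For $L_1$, the complement with respect to $2x_9$ has
$S_{11}=\frac{4x_7x_9-(x_1-x_8)^2}{2x_9}$, $S_{12}=(x_5-x_3)-\frac{(x_1-x_8)(x_2-x_6)}{2x_9}$ and $S_{22}=2x_4-\frac{(x_2-x_6)^2}{2x_9}$, with $l_1=2x_9\det S$.

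Suppose $S_{11}=0$. Then $\det S=-S_{12}^2$, so $l_1\ge0$ forces $S_{12}=0$. Both terms of $S_{12}$ are nonpositive, and $x_1-x_8=2\sqrt{x_7x_9}>0$. Hence $x_3=x_5$ and $x_2=x_6$, so $S_{22}=2x_4>0$ and $S\succeq0$.

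For $L_2$, the leading minor $4x_6x_9-(x_1-x_8)^2$ is at least $4x_7x_9-(x_1-x_8)^2\ge0$. It can vanish only if $4x_7x_9=(x_1-x_8)^2$ and $x_6=x_7$, and in that case $L_2=L_1$. Once you add this degenerate case, your proof is complete.
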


\begin{lemma}
\label{boun-extr}
    Let $\r$ be a full-rank boundary point of $\mathcal{AP}_{3,3}$ and $U, V$ be the order-three real orthogonal matrices s.t.
    \begin{eqnarray}
    \label{U-orth}
        U^{T} L_1(\lambda) U = D_1, 
    \end{eqnarray}
    \begin{eqnarray}
    \label{V-orth}
        V^{T} L_2(\lambda) V = D_2,
    \end{eqnarray}
    where $U := [\vec{u_1} \quad \vec{u_2} \quad \vec{u_3}]$, $V := [\vec{v_1} \quad \vec{v_2} \quad \vec{v_3}]$, and $D_1, D_2$ are positive semi-definite and diagonal matrices with diagonal entries in nonincreasing order.
    
    (i) Suppose $l_1(\lambda) = 0,\ l_2(\lambda) > 0$. Then $\r$ is an extreme point if and only if the following linear equations in terms of $t_1, \ldots, t_9$ have only the trivial solution,
    \begin{eqnarray}
    \label{sum_ti}
        \sum_{i=1}^{9} t_i = 0, 
    \end{eqnarray}
    \begin{eqnarray}
    \label{t_k=t_k+1}
        t_k = t_{k+1} \text{ whenever } \lambda_k = \lambda_{k+1}, 
    \end{eqnarray}
    \begin{eqnarray}
    \label{u-equa}
        \begin{bmatrix}2t_9 & t_8 - t_1 & t_6 - t_2 \\ t_8 - t_1 & 2t_7 & t_5 - t_3 \\ t_6 - t_2 & t_5 - t_3 & 2t_4 \end{bmatrix} \cdot \vec{u_3} =  \begin{bmatrix} 0 \\ 0 \\ 0 \end{bmatrix}. 
    \end{eqnarray}

    (ii) Suppose $l_1(\lambda) > 0, l_2(\lambda) = 0$. Then $\r$ is an extreme point if and only if \eqref{sum_ti} and \eqref{t_k=t_k+1} hold, and the following linear equation in terms of $t_1, \ldots, t_9$ has only the trivial solution:
    \begin{eqnarray}
    \label{v-equa}
        \begin{bmatrix} 2t_9 & t_8 - t_1 & t_7 - t_2 \\ t_8 - t_1 & 2t_6 & t_5 - t_3 \\ t_7 - t_2 & t_5 - t_3 & 2t_4 \end{bmatrix} \cdot \vec{v_3} =  \begin{bmatrix} 0 \\ 0 \\ 0 \end{bmatrix}. 
    \end{eqnarray}

    (iii) Suppose $l_1(\lambda) = 0, l_2(\lambda) = 0$. Then $\r$ is an extreme point if and only if \eqref{sum_ti}-\eqref{v-equa} in terms of $t_1, \ldots, t_9$ have only the trivial solution.
\end{lemma}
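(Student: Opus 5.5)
The plan is to treat the problem purely at the level of spectra. By Lemma \ref{AP3,3_judge_positive_semi-definite}(i), membership in $\app_{3,3}$ depends only on the ordered eigenvalue vector $\l$, so the relevant convex set is
\[
S=\{x\in\bbR^9: x_1\ge\cdots\ge x_9\ge0,\ \textstyle\sum_i x_i=1,\ L_1(x)\succeq0,\ L_2(x)\succeq0\}.
\]
A full-rank $\r$ is extreme in $\app_{3,3}$ iff $\l$ is extreme in $S$. The forward direction holds because a nontrivial decomposition $\l=\frac12(\l+\epsilon t)+\frac12(\l-\epsilon t)$ in $S$ lifts to $\r=U\diag(\cdot)U^\dg$ with a common unitary. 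The converse holds because any decomposition $\r=\frac12(\r_1+\r_2)$ with $\r_i\in\app_{3,3}$ forces, by Ky Fan/majorization and the fact that the spectral conditions are Schur-type, a decomposition of $\l$ itself; I would take this reduction as the framework already used in \cite{2025Extreme}. Full rank means $x_9>0$, so the constraint $x_9\ge0$ is inactive.

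Next I determine when a direction is admissible. Take a perturbation $\l\pm\epsilon t$ with $t\neq0$. It must satisfy $\sum t_i=0$ to keep trace one. It must satisfy $t_k=t_{k+1}$ whenever $\l_k=\l_{k+1}$: otherwise one of $\l\pm\epsilon t$ breaks the ordering, which is why \eqref{t_k=t_k+1} appears. Strict inequalities $\l_k>\l_{k+1}$ survive small $\epsilon$. Now consider $L_j$, which is linear in $x$, so $L_j(\l\pm\epsilon t)=L_j(\l)\pm\epsilon L_j(t)$. If $l_j(\l)>0$ then $L_j(\l)\succ0$ and stays PSD for small $\epsilon$. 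If $l_j(\l)=0$, I use that $\r$ is a boundary point and $L_j(\l)\ne0$ has rank exactly two, with kernel spanned by $\vec u_3$ (resp. $\vec v_3$).

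The rank-two claim needs a check. The diagonal entries $2x_9,2x_7,2x_4$ are positive, so the rank is at least one. Rank one would force $(x_8-x_1)^2=4x_7x_9$ together with the other two $2\times2$ minors vanishing. One must exclude this for $\r$ in $\app_{3,3}$, or else handle it with a larger kernel; I would verify it via the principal minors and ordering. A standard perturbation argument then gives the key equivalence: $L_j(\l)\pm\epsilon L_j(t)\succeq0$ for both signs and small $\epsilon$ iff $L_j(t)\vec u_3=0$. The quadratic form on the kernel is $\pm\epsilon\,\vec u_3^T L_j(t)\vec u_3$, which forces this quantity to vanish. The off-kernel coupling term $\epsilon^2|\langle w,L_j(t)\vec u_3\rangle|^2/(\text{positive eigenvalue part})$ must then be dominated by the second-order kernel term, which is zero. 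So we need $P^\perp L_j(t)\vec u_3=0$ too, i.e.\ $L_j(t)\vec u_3=0$. Here $L_1(t)$ and $L_2(t)$ are exactly the matrices in \eqref{u-equa} and \eqref{v-equa}.

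Combining: $\l$ is non-extreme iff a nonzero $t$ satisfies the trace condition, the equal-eigenvalue constraints, and the kernel equation for each $j$ with $l_j(\l)=0$. Cases (i), (ii) and (iii) follow by listing which determinants vanish. Sufficiency of the conditions for the existence of a two-sided perturbation uses the converse direction of the same Schur-complement computation. The main obstacle is the second-order PSD perturbation argument, in particular the rank-exactly-two claim. If rank one occurred, I would first try to show it is incompatible with the ordering $x_1\ge\cdots\ge x_9>0$ and $L_{3-j}\succeq0$. Failing that, I would argue that the lemma as stated implicitly takes $\vec u_3$ to span the kernel.
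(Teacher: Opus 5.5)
The paper gives no proof of this lemma; it is quoted from \cite{2025Extreme}, so your argument has to stand on its own. The direction ``nontrivial solution $\Rightarrow$ not extreme'' is sound. Your local analysis is correct: when $L_j(\lambda)\succeq0$ has kernel spanned by $\vec{u_3}$, one has $L_j(\lambda)\pm\epsilon L_j(t)\succeq0$ for small $\epsilon$ iff $L_j(t)\vec{u_3}=0$. Taking $\diag(\lambda\pm\epsilon t)$ in the eigenbasis of $\rho$ then gives the decomposition. The rank-two point you leave open closes in one line. If $L_1(\lambda)=ww^T$, the positive diagonal $2\lambda_9,2\lambda_7,2\lambda_4$ forces every $w_i\neq0$, so the product of the three off-diagonal entries is $(w_1w_2w_3)^2>0$. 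But each of $\lambda_8-\lambda_1,\lambda_6-\lambda_2,\lambda_5-\lambda_3$ is $\le0$, a contradiction. The same argument works for $L_2$.

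The genuine gap is the other direction, ``only the trivial solution $\Rightarrow$ extreme''. You delegate it to the claim that a full-rank $\rho$ is extreme in $\mathcal{AP}_{3,3}$ iff its ordered spectrum is extreme in $S$, and as stated this is false. The vector $(\frac19,\dots,\frac19)$ is extreme in $S$, since the only traceless $t$ respecting all ties is $0$, yet $I/9$ is an interior point of $\mathcal{AP}_{3,3}$.

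Ky Fan only yields $\lambda\prec\frac12(\lambda(\rho_1)+\lambda(\rho_2))$, not a decomposition of $\lambda$. Your justification of \eqref{t_k=t_k+1} (``otherwise one of $\lambda\pm\epsilon t$ breaks the ordering'') is a statement about $S$ only. In $\mathcal{AP}_{3,3}$ the state $\diag(\lambda-\epsilon t)$ is still a legitimate candidate, with membership decided by its sorted spectrum. Moreover, $\rho_1,\rho_2$ need not commute with $\rho$ at all. So you must show, using that $\rho$ is a boundary point, that any decomposition which splits a degenerate eigenvalue or does not commute with $\rho$ forces a tie-respecting one.

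Here is a workable start. Write $\rho_{1,2}=\rho\pm\Delta$ and average over the unitaries commuting with $\rho$; this is allowed because $\mathcal{AP}_{3,3}$ is convex, closed and unitarily invariant. With $P_i$ the eigenprojections of $\rho$, this gives $\rho\pm s\sum_i \frac{\mathrm{tr}(P_i\Delta)}{\mathrm{tr}P_i}P_i\in\mathcal{AP}_{3,3}$ for $s\in[0,1]$. Hence your system has a nonzero solution unless $\mathrm{tr}(P_i\Delta)=0$ for every $i$.

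In that remaining case Ky Fan gives $\lambda(\rho\pm\Delta)\succ\lambda$, and at least one of $\lambda(\rho\pm\Delta)$ differs from $\lambda$ (otherwise the parallelogram law forces $\Delta=0$). This case still has to be excluded, or shown to produce a tie-respecting solution. One tool is the supporting functional $x\mapsto\vec{u_3}^TL_1(x)\vec{u_3}$ (resp.\ $\vec{v_3}^TL_2(x)\vec{v_3}$), combined with the second-order condition $L_1(d)\vec{u_3}=0$ on the displacement $d=\lambda(\rho\pm\Delta)-\lambda$. Nothing in your proposal addresses this case, and it is exactly where the boundary hypothesis has to be used.
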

Using the above lemmas, the authors in \cite{2025Extreme} have derived the sets of two-qutrit boundary and extreme points of $\app_{3,3}$ with exactly two distinct eigenvalues. That is, the two sets are the same. We present them as follows.
\begin{lemma}
\label{le:two distinct eigenvalues}
Let $\rho \in \mathcal{AP}_{3,3}$ has exactly two distinct eigenvalues. Then $\rho$ is an extreme point if and only if it is unitarily equivalent to one of the following eight states:
\begin{align}
\label{zeta_1}
\zeta_1 &= \frac{1}{11}\diag\{3, 1, 1, 1, 1, 1, 1, 1, 1\},\\
\label{zeta_2}
\zeta_2 &= \frac{1}{9 + 2\sqrt{2}}\diag\{\sqrt{2} + 1, \sqrt{2} + 1, 1, 1, 1, 1, 1, 1, 1\},\\
\label{zeta_3}
\zeta_3 &= \frac{1}{12}\diag\{2, 2, 2, 1, 1, 1, 1, 1, 1\},\\
\label{zeta_4}
\zeta_4 &= \frac{1}{10 + \sqrt{17}}\diag\{\frac{5 + \sqrt{17}}{4}, \frac{5 + \sqrt{17}}{4}, \frac{5 + \sqrt{17}}{4}, \frac{5 + \sqrt{17}}{4}, \frac{5 + \sqrt{17}}{4}, 1, 1, 1, 1\},\\
\label{zeta_5}
\zeta_5 &= \frac{1}{5x + 4}\diag\{x, x, x, x, x, 1, 1, 1, 1\},\\
\label{zeta_6}
\zeta_6 &= \frac{1}{21}\diag\{3, 3, 3, 3, 3, 3, 1, 1, 1\},\\
\label{zeta_7}
\zeta_7 &= \frac{1}{23 + 14\sqrt{2}}\diag\{3 + 2\sqrt{2}, \ldots, 3 + 2\sqrt{2}, 1, 1\},\\
\label{zeta_8}
\zeta_8 &= \frac{1}{8}\diag\{1, 1, 1, 1, 1, 1, 1, 1, 0\}.
\end{align}
where $x \approx 2.70928$ is a root of the equation $x^3 - x^2 - 5x + 1 = 0$.
\qed
\end{lemma}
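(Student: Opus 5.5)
The plan is to reduce everything to the spectrum, using Lemma \ref{AP3,3_judge_positive_semi-definite}. Since membership in $\app_{3,3}$ depends only on the eigenvalue vector, we may take $\rho$ diagonal with $\lambda=(x,\dots,x,1,\dots,1)$, where the value $x$ occurs $k$ times, $1\le k\le 8$, and $x>1$. The rank-deficient case, where the smaller eigenvalue is $0$, is deferred to the end.

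\textbf{Step 1: thresholds.} For each $k$, write down $L_1(\lambda)$ and $L_2(\lambda)$ explicitly. Their entries are of the form $2x$, $2$, $0$ or $\pm(x-1)$, depending on where the block boundary falls among $x_1,\dots,x_9$. Then compute $l_1(\lambda)$, $l_2(\lambda)$ and the $2\times2$ minor condition $4x_7x_9\ge(x_8-x_1)^2$ as polynomials in $x$.

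For $k=1$ we get $L_1=\bigl[\begin{smallmatrix}2&1-x&0\\1-x&2&0\\0&0&2\end{smallmatrix}\bigr]$. Hence $l_1=2(4-(x-1)^2)\ge0$ exactly when $x\le3$, which yields $\zeta_1$. Similarly the cases $k=2,3,4,6,7$ give quadratic threshold equations, whose roots are $1+\sqrt2$, $2$, $(5+\sqrt{17})/4$, $3$ and $3+2\sqrt2$. The case $k=5$ gives the cubic $x^3-x^2-5x+1=0$.

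In each case the admissible set of $x$ is an interval $[1,x_k^*]$. This holds because the family is the segment between $I/9$ and the threshold state, and $\app_{3,3}$ is convex. Points with $x<x_k^*$ are interior: they lie strictly between $I/9$ and an AP state, so the definition of interior point applies. Hence the only boundary point with this spectral pattern is at $x=x_k^*$, where $l_1$ or $l_2$ vanishes, consistent with Lemma \ref{AP3,3_judge_positive_semi-definite}(ii).

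\textbf{Step 2: extremality.} For the threshold point, apply Lemma \ref{boun-extr}, choosing case (i), (ii) or (iii) according to which determinants vanish. Condition \eqref{t_k=t_k+1} forces $t$ to be constant on each block, say $t=s$ on the first $k$ coordinates and $t=r$ on the rest. Condition \eqref{sum_ti} then gives $ks+(9-k)r=0$, so $t$ is a multiple of a single fixed vector.

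Next compute the kernel vector $\vec u_3$ (resp.\ $\vec v_3$) of the singular matrix $L_1(\lambda)$ (resp.\ $L_2(\lambda)$) at $x=x_k^*$. Substitute the one-parameter $t$ into \eqref{u-equa} or \eqref{v-equa}, and check that the resulting matrix applied to $\vec u_3$ is nonzero unless $s=0$. This forces the trivial solution, so the threshold point is extreme. This is a finite, case-by-case computation.

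\textbf{Step 3: the case $k=8$ and the rank-deficient state.} For $k=8$, $L_1$ and $L_2$ stay positive definite for every $x$, so there is no finite threshold. The family degenerates to $\zeta_8$ as the smaller eigenvalue tends to $0$. Lemma \ref{boun-extr} assumes full rank, so this point needs a separate argument.

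First, $\zeta_8\in\app_{3,3}$ by closedness of $\app_{3,3}$. Next suppose $\zeta_8=p\sigma_1+(1-p)\sigma_2$ with $\sigma_i\in\app_{3,3}$. Then both $\sigma_i$ are supported on the 8-dimensional range of $\zeta_8$, so each has a zero eigenvalue. Plugging $x_9=0$ into Lemma \ref{AP3,3_judge_positive_semi-definite}, the condition $4x_7x_9\ge(x_8-x_1)^2$ forces $x_1=\dots=x_8$. So each $\sigma_i$ is unitarily equivalent to $\zeta_8$ and supported on the same subspace, which forces $\sigma_1=\sigma_2=\zeta_8$.

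I expect the main obstacle to be the bookkeeping in Step 2 for the cases where both determinants vanish, or where the threshold comes from the $2\times2$ minor rather than from $l_1$ or $l_2$. It must be shown that the kernel equations are genuinely nondegenerate at the irrational thresholds, in particular at the cubic root for $k=5$. There I would first try substituting the exact minimal polynomial relation to simplify the entries before checking that $s=0$ is forced.
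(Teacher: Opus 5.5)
Your proposal is correct and follows the route the paper attributes to \cite{2025Extreme}; the paper only quotes this lemma, noting it was derived "using the above lemmas." Lemma \ref{AP3,3_judge_positive_semi-definite} together with convexity isolates a single boundary ratio $x_k^*$ for each multiplicity $k\le 7$, Lemma \ref{boun-extr} certifies extremality there, and $\zeta_8$ is handled by your support argument, to which you should add that the PSD conditions remain necessary for rank-deficient AP states (mix with $I_9/9$ and let the weight tend to $0$); the nondegeneracy check you anticipate in Step 2 is in fact automatic, since $t=\lambda$ always satisfies \eqref{u-equa}/\eqref{v-equa} while $t=(1,\dots,1)$ makes the matrix $2I_3$, so among block-constant $t$ only multiples of $\lambda$ survive and \eqref{sum_ti} kills them, and your $k=4$ computation yields four (not five, as misprinted) entries $\frac{5+\sqrt{17}}{4}$ in $\zeta_4$, matching its normalization $10+\sqrt{17}$.
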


\subsection{An inequality}

In this subsection, we shall use Lemma \ref{AP3,3_judge_positive_semi-definite} (i) to determine whether a given two-qutrit state is AP. To simplify Lemma \ref{AP3,3_judge_positive_semi-definite} (i), we show that
$l_1(\l)\ge0$ and $l_2(\l)\ge0$ imply the last inequality of Lemma \ref{AP3,3_judge_positive_semi-definite} (i), i.e.,  
\begin{eqnarray}
\label{an inequality} 4\lambda_7\lambda_9-(\lambda_1-\lambda_8)^2 \geq 0. 
\end{eqnarray}
If this is true then we can ignore the above inequality in subsequent sections.

We focus on the case of three distinct eigenvalues $a, b, c$, which means \eqref{an inequality} can be reformulated as 
\begin{eqnarray}
\label{an inequality_2}
    4c\lambda_7-(a-\lambda_8)^2 \geq 0.
\end{eqnarray}

\begin{lemma}
Suppose $\r=\diag(\l_1,....,\l_9)$ is full-rank with three distinct eigenvalues $a>b>c>0$. If the determinants $l_1(\lambda) \geq 0, \ l_2(\lambda) \geq 0$ and at least one of $l_1(\lambda), \ l_2(\lambda)$  is zero, then the inequality \eqref{an inequality_2} holds. In this case, we obtain $\r \in \mathcal{AP}_{3,3}$. 
\end{lemma}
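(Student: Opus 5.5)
The plan is to argue by contradiction. We may write $\lambda_1=a$ and $\lambda_9=c$, and the condition to prove, \eqref{an inequality_2}, says exactly that the leading $2\times2$ principal minor $m_{12}:=4\lambda_9\lambda_7-(\lambda_8-\lambda_1)^2$ of $L_1(\lambda)$ is nonnegative. If $\lambda_8=a$, then $\lambda_1=\dots=\lambda_8=a$, and $m_{12}=4c\lambda_7>0$, so there is nothing to prove. Hence we assume $\lambda_8\in\{b,c\}$ and, towards a contradiction, $m_{12}<0$.

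\textbf{Step 1: reduction to the structure of $L_1(\lambda)$.} The diagonal entries $2\lambda_9,2\lambda_7,2\lambda_4$ of $L_1(\lambda)$ are positive. Since $m_{12}<0$, the $2\times2$ block has one negative eigenvalue, and by Cauchy interlacing $L_1(\lambda)$ has at least one negative eigenvalue. Combined with $l_1(\lambda)\ge0$, exactly one of two things happens.
\begin{enumerate}[(a)]
\item $L_1(\lambda)$ has two negative eigenvalues (with $l_1(\lambda)>0$). Then, by interlacing again, every $2\times2$ principal submatrix has a negative eigenvalue. Because its diagonal is positive, every $2\times2$ principal minor of $L_1(\lambda)$ is then negative.
\item $l_1(\lambda)=0$, with eigenvalues of signs $(-,0,+)$.
\end{enumerate}
In case (a) we derive a contradiction by exhibiting a nonnegative $2\times2$ principal minor. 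The natural candidates are $4\lambda_7\lambda_4-(\lambda_5-\lambda_3)^2$ and $4\lambda_9\lambda_4-(\lambda_6-\lambda_2)^2$. In many multiplicity patterns one of these off-diagonal entries vanishes because two of the relevant eigenvalues coincide, which makes the corresponding minor positive.

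\textbf{Step 2: case analysis on multiplicities.} Since $\lambda$ takes only the values $a>b>c$, the vector is determined by the multiplicities $(\mu(a),\mu(b),\mu(c))$ with $\mu(a)+\mu(b)+\mu(c)=9$. We may restrict to the patterns with $\lambda_8\in\{b,c\}$ and, additionally, $(a-\lambda_8)^2>4c\lambda_7$.
\begin{itemize}
\item For each pattern we write $L_1(\lambda)$ and $L_2(\lambda)$ explicitly as matrices in $a,b,c$.
\item We then compute $l_1,l_2$ as explicit polynomials; most of these are already implicit in the later tables.
\item If some other principal minor of $L_1$ is obviously nonnegative, case (a) is excluded at once.
\item For case (b), and for the hypothesis that one of $l_1,l_2$ is zero, we use the Schur complement with respect to the $(3,3)$ entry:
\[
l_1(\lambda)=2\lambda_4\,\det\!\Bigl(B-\tfrac{1}{2\lambda_4}ww^T\Bigr),
\]
where $B$ is the leading $2\times2$ block of $L_1(\lambda)$ and $w=(\lambda_6-\lambda_2,\lambda_5-\lambda_3)^T$.
\end{itemize}
Since $\det B=m_{12}<0$ and subtracting a rank-one PSD term from an indefinite $2\times2$ matrix keeps its determinant negative (the negative direction of $B$ persists and its trace decreases), we get $l_1(\lambda)<0$. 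This contradicts $l_1(\lambda)\ge0$ and disposes of both (a) and (b) uniformly.

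I expect to verify the claim on rank-one perturbations carefully; this is the main obstacle. For a $2\times2$ symmetric $B$ with $\det B<0$ and $P\succeq0$, it is \emph{not} true in general that $\det(B-P)<0$: for example, with $B=\diag(1,-1)$ and $P=\diag(2,0)$, both eigenvalues of $B-P$ are negative. So the uniform argument fails, and one genuinely needs the pattern-by-pattern computation. In each pattern I would try first to check that the $(1,1)$ entry $2c-(\lambda_6-\lambda_2)^2/(2\lambda_4)$ of the Schur complement stays positive. Alternatively, I would use the fact that the hypothesis forces $l_1=0$ or $l_2=0$ to solve for one parameter (for instance $a$ in terms of $b,c$), substitute into $4c\lambda_7-(a-\lambda_8)^2$, and check its sign directly on the admissible range. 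This finite computation is where the real work lies.

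Finally, once \eqref{an inequality_2} is established, all three conditions of Lemma \ref{AP3,3_judge_positive_semi-definite} (i) hold, and hence $\rho\in\app_{3,3}$.
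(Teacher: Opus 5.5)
Your proposal does not yet prove the lemma. After Step~1 you drop the uniform Schur-complement argument and defer to a pattern-by-pattern computation that is never carried out. The only thing actually established is the exclusion of case (a) in patterns with $\lambda_2=\lambda_6$ or $\lambda_3=\lambda_5$. Case (b) remains open, and so does case (a) for patterns such as $(a,a,a,b,b,b,c,c,c)$.

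The missing idea is that the rank-one step you rejected is valid here, because of the sign pattern of $L_1(\lambda)$. Your counterexample $B=\diag(1,-1)$ has a negative diagonal entry, so it lies outside the relevant class. The right tool is the matrix determinant lemma:
\[
l_1(\lambda)=2\lambda_4\det\Bigl(B-\tfrac{1}{2\lambda_4}ww^T\Bigr)=2\lambda_4\det B-w^T\mathrm{adj}(B)\,w,\qquad \mathrm{adj}(B)=\begin{bmatrix}2\lambda_7&\lambda_1-\lambda_8\\ \lambda_1-\lambda_8&2\lambda_9\end{bmatrix}.
\]
Since $\lambda$ is nonincreasing, $\mathrm{adj}(B)$ is entrywise nonnegative. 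Both entries of $w=(\lambda_6-\lambda_2,\lambda_5-\lambda_3)^T$ are $\le0$, so $ww^T$ is entrywise nonnegative and $w^T\mathrm{adj}(B)w\ge0$. Written out, this is the identity
\[
l_1(\lambda)=2\lambda_4\bigl[4\lambda_7\lambda_9-(\lambda_1-\lambda_8)^2\bigr]-2\lambda_7(\lambda_2-\lambda_6)^2-2\lambda_9(\lambda_3-\lambda_5)^2-2(\lambda_1-\lambda_8)(\lambda_2-\lambda_6)(\lambda_3-\lambda_5),
\]
in which every term after the first is $\le0$. Hence $0\le l_1(\lambda)\le 2\lambda_4\bigl[4\lambda_7\lambda_9-(\lambda_1-\lambda_8)^2\bigr]$, and $\lambda_4>0$ gives $4c\lambda_7-(a-\lambda_8)^2\ge0$ immediately. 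In particular your cases (a) and (b) are both impossible, and no case analysis is needed.

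This argument uses only three things: $l_1(\lambda)\ge0$, the ordering $\lambda_1\ge\cdots\ge\lambda_9$, and $\lambda_4>0$. The hypotheses $l_2(\lambda)\ge0$, that one of $l_1,l_2$ vanishes, and that there are exactly three distinct eigenvalues play no role.

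The paper takes essentially the route you planned as a fallback. It splits on $\lambda_8\in\{b,c\}$ and on $\lambda_7$, uses monotonicity of the minor in the multiplicity of $b$ (or in $k_b,k_c$) to reduce to extremal patterns, and checks those by computation. The sign identity above replaces all of this. It also sidesteps a delicate point of such a reduction: the hypothesis $l_1=0$ or $l_2=0$ belongs to one specific multiplicity pattern and does not automatically carry over to the extremal pattern.
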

\begin{proof}
We outline the proof. We begin by studying $\lambda_8$, and then  classify  $\lambda_7$.  We prove by contradiction to show the inequality \eqref{an inequality_2}.
Because $a>b>c>0$, it holds that $\lambda_8<a$. We have two cases, namely (i) $\lambda_8=b$ and (ii) $\lambda_8=c$.

(i) Suppose $\lambda_8 = b$, then $\lambda_7 = a$ or $\lambda_7 = b$. If $\lambda_7 = a$, it suffices to consider \eqref{eq:rho711}. Here, $l_1(\lambda) = l_2(\lambda)$, so the conditions of the lemma implies $l_1(\lambda) = l_2(\lambda) = 0$. Suppose \eqref{an inequality_2} does not hold, then we obtain the system
\begin{eqnarray}
    4c\lambda_7-(a-\lambda_8)^2 = 4 \frac{1 - b - c}{7} c - (\frac{1 - b - c}{7} - b)^2 < 0,
\end{eqnarray}
with $0<c<b<\frac{1 - b - c}{7}$. By computation, this system has no solution. If $\lambda_7 = b$, then we assume that there are $k\in[2,7]$ eigenvalues $b$.
Then the left hand side of \eqref{an inequality_2} can be reformulated as 
\begin{eqnarray}
\label{f(k)1}
    f(k):=4bc - (\frac{1-kb-c}{8-k}-b)^2.
\end{eqnarray}
We have 
\begin{eqnarray}
    f'(k) = -2\frac{(1-8b-c)^2}{(8-k)^3}<0.
\end{eqnarray}
Therefore, the minimum of $f(k)$ in \eqref{f(k)1} is achieved at $k=7$. To prove \eqref{an inequality_2}, it suffices to show that $f(7)\ge0$ using
\eqref{eq:r_a7bc}. Here, $l_1(\lambda) = l_2(\lambda)$, so the condition of the lemma implies $l_1(\lambda) = l_2(\lambda) = 0$, namely the lhs of
\eqref{an inequality_2} equals zero. We have proven \eqref{an inequality_2}. 

(ii) Suppose $\lambda_8 = c$. Then $\lambda_7 = b$ or $\lambda_7 = c$. If $\lambda_7 = b$, then we also assume that there are $k\in[1,6]$ eigenvalues $b$.
Then the lhs of \eqref{an inequality_2} can be reformulated as 
\begin{eqnarray}
\label{f(k)2}
    g(k):=4bc - (\frac{1-kb-2c}{7-k}-c)^2. 
\end{eqnarray}
We have 
\begin{eqnarray}
    g'(k) = -2\frac{\left[1-kb-(9-k)c\right](1-7b-2c)}{(7-k)^3}<0. 
\end{eqnarray}
Therefore, the minimum of $g(k)$ in \eqref{f(k)1} is achieved at $k=6$. To prove \eqref{an inequality_2}, it suffices to show that $g(6)\ge0$ using
\eqref{eq:r_a6bc}. Here, $l_1(\lambda) = l_2(\lambda)$, so the condition of the lemma implies $l_1(\lambda) = l_2(\lambda) = 0$, namely the lhs of
\eqref{an inequality_2} equals zero. We have proven \eqref{an inequality_2}.  

If $\lambda_7 = c$, then we assume that there are $k_b\in[1,5]$ eigenvalues $b$ and $k_c\in[3,7]$ eigenvalues $c$, such that $4\le k_b+k_c\le8$. Then the lhs of \eqref{an inequality_2} can be reformulated as 
\begin{eqnarray}
\label{f(k)3}
    f(k_b,k_c):=4c^2 - (\frac{1-k_bb-k_cc}{9-k_b-k_c}-c)^2. 
\end{eqnarray}
We have 
\begin{eqnarray}
    \frac{\partial f}{\partial k_b} = -2 \frac{[1-k_bb-(9-k_b)c][1-(9-k_c)b-k_cc]}{(9-k_b-k_c)^3}<0,
\end{eqnarray}
\begin{eqnarray}
    \frac{\partial f}{\partial k_c} = -2 \frac{[1-k_bb-(9-k_b)c]^2}{(9-k_b-k_c)^3}<0.
\end{eqnarray}
To prove \eqref{an inequality_2}, it suffices to show that $f(k_b,8-k_b)\ge0$. We have
\begin{eqnarray}
    \frac{df}{dk_b} = -2[1-k_bb-(9-k_b)c](c-b)>0. 
\end{eqnarray}
To prove \eqref{an inequality_2}, it suffices to show that $f(1,7)\ge0$ using
$\lambda(\r)=(a,b,c,c,c,c,c,c,c)$. Here, $l_1(\lambda) = l_2(\lambda)$, so the condition of the lemma implies $l_1(\lambda) = l_2(\lambda) = 0$. Suppose \eqref{an inequality_2} does not hold, then we obtain the system
\begin{eqnarray}
    f(1,7)<0,l_1(\lambda) = 0.
\end{eqnarray}
By computation, this system has no solution. We have proven \eqref{an inequality_2}. 

To conclude, the proof is complete. 
\end{proof}

\section{Existence of extreme points with three distinct eigenvalues}
\label{sec:res}

In this section, we show that some boundary points $\n_{1,k,8-k}:=\diag(a,b,...,b,c,...,c)$ of two-qutrit AP states are extreme points of distinct eigenvalues $a>b>c>0$. We also characterize their expressions. Here $k\in[2,6]$ counts the multiplicity of $b$ being the eigenvalues of $\n_{1,k,8-k}$. Some facts are extended to AP states of two maximum eigenvalues. In Lemma \ref{le:a1b7c,a4b4c}, we consider the relatively simple cases, i.e.,  $\lambda(\r) = (a, b, c, c, c, c, c, c, c)$ and $\lambda(\r) = (a, b, b, b, b, c, c, c, c)$, whose proof illustrates a general methodology. We examine a system of linear equations in terms of $t_1, \cdots  t_9$ under a unitarily equivalent form of a state $\r$, and then apply similar techniques to other cases of $\r$ in Lemma \ref{le:3cbbbbbccc}. Next for the special point $\nu_{1,5,3}$, we construct the one-to-one correspondence between the set of them and the interior of convex hull of $\zeta_1$ and $\zeta_6$ in lemma \ref{le:nu(1,5,3)}. In addition, we obtain a more general conclusion applicable to other cases of boundary points in Theorem \ref{all_3eigen}. 

\begin{lemma}
\label{le:a1b7c,a4b4c}
    Suppose $\r \in \mathcal{AP}_{3,3}$ is full-rank with three distinct eigenvalues, and the largest eigenvalue of $\r$ is simple. If $\r$ is a boundary point of $\app_{3,3}$ with $\lambda(\r) = (a, b, c, c, c, c, c, c, c)$ or $\lambda(\r) = (a, b, b, b, b, c, c, c, c)$ where $a>b>c>0$, then $\r$ is an extreme point of $\app_{3,3}$.
\end{lemma}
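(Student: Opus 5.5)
In both spectra the $3\times3$ matrices $L_1(\lambda)$ and $L_2(\lambda)$ coincide, so the boundary condition forces Lemma \ref{boun-extr} (iii), and the resulting linear system in $t_1,\dots,t_9$ can be solved explicitly. The plan is to show it has only the trivial solution.

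\textbf{Step 1: reduce to Lemma \ref{boun-extr} (iii).} For $\lambda=(a,b,c,\dots,c)$ we have $x_1=a$, $x_2=b$ and $x_3=\dots=x_9=c$. Substituting gives
\[
L_1(\lambda)=L_2(\lambda)=\bma 2c & c-a & c-b\\ c-a & 2c & 0\\ c-b & 0 & 2c\ema .
\]
For $\lambda=(a,b,b,b,b,c,c,c,c)$ the same substitution gives the same matrix, except that the $(3,3)$ entry is $2b$. In both cases $l_1(\lambda)=l_2(\lambda)$. By Lemma \ref{AP3,3_judge_positive_semi-definite} (ii), a boundary point has at least one determinant equal to zero, so here both vanish. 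Hence we are in Lemma \ref{boun-extr} (iii). We must show that \eqref{sum_ti}--\eqref{v-equa} have only the trivial solution. Since $L_1=L_2$ here, the kernel vectors $\vec{u_3}$ and $\vec{v_3}$ coincide, as do the two coefficient matrices in \eqref{u-equa} and \eqref{v-equa}. So it suffices to treat \eqref{u-equa} together with \eqref{sum_ti} and \eqref{t_k=t_k+1}.

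\textbf{Step 2: compute the kernel vector $\vec{u_3}$.} The matrix $L_1$ is singular with leading entry $2c>0$, and its $(2,3)$ entry is zero. So the kernel can be read off from rows 2 and 3 once we normalize $u_1=1$; note $u_1\neq0$, since $u_1=0$ forces $u_2=u_3=0$.
\begin{itemize}
\item First spectrum: $\vec{u_3}\propto\bigl(1,\tfrac{a-c}{2c},\tfrac{b-c}{2c}\bigr)$.
\item Second spectrum: $\vec{u_3}\propto\bigl(1,\tfrac{a-c}{2c},\tfrac{b-c}{2b}\bigr)$.
\end{itemize}
In each case, row 1 of $L_1\vec{u_3}=0$ is equivalent to $l_1(\lambda)=0$:
\[
4c^2=(a-c)^2+(b-c)^2 \qquad\text{resp.}\qquad 4c=\frac{(a-c)^2}{c}+\frac{(b-c)^2}{b}.
\]
We use this identity in Step 3.

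\textbf{Step 3: solve the linear system in $t_1,\dots,t_9$.} Condition \eqref{t_k=t_k+1} collapses the unknowns.
\begin{itemize}
\item First spectrum: $t_3=\dots=t_9=:s$, with $t_1,t_2$ free. Rows 2 and 3 of \eqref{u-equa} give $t_1=sa/c$ and $t_2=sb/c$.
\item Second spectrum: $t_2=\dots=t_5=:p$ and $t_6=\dots=t_9=:s$, with $t_1$ free. Rows 2 and 3 give $t_1=sa/c$ and $p=sb/c$.
\end{itemize}
Row 1 then becomes $s$ times a multiple of the identity from Step 2, so it holds automatically and gives no new information. Finally, \eqref{sum_ti} reads
\[
s\,\frac{a+b+7c}{c}=0 \qquad\text{resp.}\qquad s\,\frac{a+4b+4c}{c}=0.
\]
Since $a,b,c>0$, this forces $s=0$, and hence all $t_i=0$. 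Lemma \ref{boun-extr} (iii) then shows that $\r$ is an extreme point.

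\textbf{Main obstacle.} The only delicate point is Step 3's use of row 1. It is exactly the determinant relation, so it is redundant rather than contradictory, and the triviality must come from the trace condition \eqref{sum_ti} combined with positivity of the eigenvalues. Carrying out the entry-by-entry check of $L_1$ versus $L_2$ in Step 1 carefully is what allows us to avoid the harder cases (i)/(ii) of Lemma \ref{boun-extr}.
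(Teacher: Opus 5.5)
Your proof is correct and follows the paper's route: you use $L_1(\lambda)=L_2(\lambda)$ to reduce to Lemma \ref{boun-extr}~(iii), show that the solutions of \eqref{u-equa} under \eqref{t_k=t_k+1} form the line through $\lambda$ itself, and eliminate that line with the trace condition \eqref{sum_ti}. The differences are in execution. You compute $\vec{u_3}$ explicitly, which replaces the paper's rank case analysis on $u_{33}$ and justifies its step ``the solution has the form $(ka,kb,kc)$''. You also treat the spectrum $(a,b,c,\dots,c)$ directly, whereas the paper cites earlier work for it.
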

\begin{proof}
First, it is known that the boundary point $\r \in \mathcal{AP}_{3,3}$ with $\lambda(\r) = (a, b, c, c, c, c, c, c, c)$ $(a>b>c>0)$ is an extreme point of $\app_{3,3}$ in \cite{2025Extreme}.

Second, given the boundary point 
\begin{eqnarray}
\label{eq:r_abbbbc}
    \lambda(\r) = (a, b, b, b, b, c, c, c, c) \quad (a>b>c>0),
\end{eqnarray}
the condition implies that $l_{1}(\lambda) = l_{2}(\lambda) = 0$ because $\lambda_6 = \lambda_7$. According to Lemma \ref{boun-extr}, to prove $\r$ in \eqref{eq:r_abbbbc} is an extreme point, it suffices to prove that the following equations have only the trivial solution, 
\begin{eqnarray}
    t_{1} + 4t_{2} + 4t_{9} = 0, 
    \quad
    t_2 = t_3 = t_4 = t_5, \quad t_6 = t_7 = t_8 =t_9, 
\end{eqnarray}
\begin{eqnarray}
\label{vec_u3_eq4b}
    \begin{bmatrix} 2t_{9} & t_{9} - t_{1} & t_{9} - t_{2} \\ t_{9} - t_{1} & 2t_{9} & 0 \\ t_{9} - t_{2} & 0 & 2t_{2} \end{bmatrix} \cdot \vec{u_3} = \begin{bmatrix} 0 \\ 0 \\ 0 \end{bmatrix}. 
\end{eqnarray}
Let $\vec{u_3} = \begin{bmatrix} u_{13} \\ u_{23} \\ u_{33} \end{bmatrix}$. Eq. \eqref{vec_u3_eq4b} can be reformulated as
\begin{eqnarray}
\label{vec_t_eq4b}
    \begin{bmatrix} -u_{23} & -u_{33} & 2u_{13} + u_{23} + u_{33} \\ -u_{13} & 0 & u_{13} + 2u_{23} \\ 0 & -u_{13} + 2u_{33} & u_{13} \end{bmatrix} \cdot \begin{bmatrix} t_{1} \\ t_{2} \\ t_{9} \end{bmatrix} = \begin{bmatrix} 0 \\ 0 \\ 0 \end{bmatrix}.
\end{eqnarray}
Performing elementary matrix operations on the coefficient matrix of \eqref{vec_t_eq4b}, we obtain $\begin{bmatrix} -u_{23} & -u_{33} & u_{13} \\ -u_{13} & 0 & u_{23} \\ 0 & -u_{13} + 2u_{33} & u_{33} \end{bmatrix}$. We have two cases (a) and (b) in terms of $u_{33}$.

(a) $u_{33} \neq 0$. Similar to the last condition, the rank of the coefficient matrix is at least two.

(b) $u_{33} = 0$. The coefficient matrix above is $\begin{bmatrix} -u_{23} & 0 & u_{13} \\ -u_{13} & 0 & u_{23} \\ 0 & -u_{13} & 0 \end{bmatrix}$. Since $\vec{u_3}$ is a column vector of an orthogonal matrix, $u_{13}^2 + u_{23}^2 \neq 0$, which implies that the first and third columns are linearly independent if $u_{13} \neq \pm u_{23}$ and then the rank of the coefficient matrix is at least two. If $u_{13} = \pm u_{23}$, the second and third columns are linearly independent because $-u_{13} \neq 0$ and then the rank of the coefficient matrix is at least two.

So in case (a) and (b), using the rank-nullity theorem, the dimension of the solution space of $\begin{bmatrix} t_1 \\ t_2  \\ t_9 \end{bmatrix}$ is at most one. Hence the solution has the form $\begin{bmatrix} ka \\ kb  \\ kc \end{bmatrix}$ for any $k \in \mathbb{R}$, which means that $t_{1}, t_{2}, t_{9}$ must be all zero. In conclusion, if the boundary point $\r$ is unitarily equivalent to $\diag \{a, b, b, b, b, c, c, c, c\}$, it is an extreme point.
\end{proof}

Reviewing the proof above, we have examined a system of linear equations in terms of $t_1, \cdots  t_9$ under a unitarily equivalent form of a state $\r$ in \eqref{eq:r_abbbbc}. We performed elementary row operations on the coefficient matrix to simplify the system and then analyzed the possible values of its entries. Then we could study the rank of the matrix. Using the rank‑nullity theorem, we have analyzed the solution space and conclude that when its dimension is at most one, only the trivial solution exists, which means that boundary points imply extreme points. Next, we apply similar techniques to other cases of $\r$, with the proof given in appendix \ref{sec:abbcccccc}.
\begin{lemma}
\label{le:3cbbbbbccc}
Suppose $\r \in \mathcal{AP}_{3,3}$ is full-rank with three distinct eigenvalues, and the largest eigenvalue of $\r$ is simple. If $\r$ is a boundary point of $\app_{3,3}$, then $\r$ is an extreme point of $\app_{3,3}$ unless $\lambda(\r) = (3c, b, b, b, b, b, c, c, c) \quad (3c>b>c>0)$ with $l_1(\r)=0$.
\end{lemma}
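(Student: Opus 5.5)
Since the largest eigenvalue is simple and there are exactly three distinct eigenvalues, the ordered spectrum has the form $\lambda(\r)=(a,\underbrace{b,\dots,b}_{k},\underbrace{c,\dots,c}_{8-k})$ with $k\in\{1,\dots,7\}$. The cases $k=1$ and $k=4$ are Lemma \ref{le:a1b7c,a4b4c}, so only $k\in\{2,3,5,6,7\}$ remain. For each $k$ I will write $L_1(\lambda)$ and $L_2(\lambda)$ explicitly in $a,b,c$. Lemma \ref{AP3,3_judge_positive_semi-definite}(ii) says a boundary point has $l_1=0$ or $l_2=0$, so I then apply Lemma \ref{boun-extr} in the appropriate subcase (i), (ii) or (iii). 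The constraint \eqref{t_k=t_k+1} collapses the unknowns to three: $t_1=\alpha$, $t_2=\dots=t_{k+1}=\beta$, and $t_{k+2}=\dots=t_9=\gamma$, subject to $\alpha+k\beta+(8-k)\gamma=0$.

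In each case the matrix equation \eqref{u-equa} (or \eqref{v-equa}) becomes a $3\times3$ linear system $M(\vec{u}_3)\,(\alpha,\beta,\gamma)^T=0$. Its coefficient matrix is linear in the entries of the kernel vector $\vec{u}_3$ (resp. $\vec{v}_3$), just as in \eqref{vec_t_eq4b}. The key observation, already used in Lemma \ref{le:a1b7c,a4b4c}, is that $(\alpha,\beta,\gamma)=(a,b,c)$ always solves this homogeneous system. Indeed, $L_1$ is linear in $x$ and $L_1(\lambda)\vec{u}_3=0$. However, $(a,b,c)$ violates the trace condition, because $a+kb+(8-k)c=1\neq0$. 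So it suffices to show that $\operatorname{rank} M(\vec{u}_3)\ge2$: then the solution space is spanned by $(a,b,c)$, and intersecting it with the trace hyperplane gives only the trivial solution.

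The rank argument proceeds by case analysis on vanishing patterns of $u_{13},u_{23},u_{33}$, as in cases (a) and (b) above. The normalization $\|\vec{u}_3\|=1$ rules out the zero vector. Each pattern in which all $2\times2$ minors of $M$ vanish must then be excluded by feeding that pattern back into $L_1(\lambda)\vec{u}_3=0$. When only $l_1$ vanishes, I use $\vec{u}_3$ alone. When only $l_2$ vanishes, I use $\vec{v}_3$ alone. When both vanish, I may combine the two systems, which only adds equations and helps.

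The main obstacle is the exceptional case $k=5$ with $l_1=0$. Here the rank drop should be genuine. With $\lambda=(a,b,b,b,b,b,c,c,c)$, the entries of $L_1$ are $2c,\ c-a,\ b-b=0,\ 2c,\ 0,\ 2b$, so $L_1=\begin{bmatrix}2c&c-a&0\\c-a&2c&0\\0&0&2b\end{bmatrix}$. Then $l_1=0$ forces $a=3c$ and $\vec{u}_3=(1,1,0)^T/\sqrt2$. With this vector, $M$ has rank one, and a nontrivial solution exists. I will verify this explicitly to explain the exception; non-extremality itself is deferred to Lemma \ref{le:nu(1,5,3)}. Since $\lambda_6=\lambda_7$, we have $l_1=l_2$ in this case, so the boundary condition always means both vanish. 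Subcase (iii) then adds the $\vec{v}_3$ equations, and I must check that these do not restore rank. For the remaining $k$ I expect the same block structure (several differences vanishing because of repeated eigenvalues) to make the minors easy. The delicate work is tracking, for $k=2,3$, which of $l_1,l_2$ vanishes, since there $\lambda_6=\lambda_7=c$ and $\lambda_2=\lambda_3=b$ again give $L_1=L_2$.
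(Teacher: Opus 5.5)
Your framework is the same as the paper's. You collapse to $(t_1,t_2,t_9)$ and note that $(a,b,c)$ always lies in the kernel of the coefficient matrix; the paper uses this only implicitly when it writes the solution as $(ka,kb,kc)$. So rank at least two plus the trace condition forces the trivial solution. You then exclude rank-deficient kernel vectors by substituting them back into $L_1(\lambda)\vec{u}_3=0$.

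The genuine error is at $k=5$. You assert $\lambda_6=\lambda_7$ there, hence $l_1=l_2$ and subcase (iii). That is false. For $\lambda=(a,b,b,b,b,b,c,c,c)$ one has $\lambda_6=b\neq c=\lambda_7$, and your own computation of $L_1$ uses $x_6=b$, $x_7=c$. In fact $k=5$ is the \emph{only} $k$ for which $L_1\neq L_2$:
\[
L_2(\lambda)=\begin{bmatrix}2c&c-a&c-b\\ c-a&2b&0\\ c-b&0&2b\end{bmatrix},\qquad l_2(\lambda)=2b\bigl[4bc-(a-c)^2-(b-c)^2\bigr],
\]
while $l_1(\lambda)=2b(a+c)(3c-a)$.

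At the exceptional point $a=3c$ this gives $l_2=2b(b-c)(5c-b)>0$, since $b<3c$. So the point lies in subcase (i): only the $\vec{u}_3$ equations apply, and $L_2$ is nonsingular. The check you propose would actually fail. For the nontrivial solution $(t_1,t_2,t_9)=(15,-6,5)$, the $\vec{v}_3$-coefficient matrix is $\begin{bmatrix}10&-10&11\\-10&-12&0\\11&0&-12\end{bmatrix}$, with determinant $4092\neq0$. Adjoining the $\vec{v}_3$ equations therefore leaves only $t=0$. You would then ``prove'' extremality of a point that Lemma \ref{le:nu(1,5,3)} exhibits as a proper convex combination of $\zeta_1$ and $\zeta_6$.

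For the statement itself, which excludes the exception, the real damage is that the $k=5$ case $l_2=0$, $l_1>0$ is never treated. The paper handles it in case (v.2) by a separate rank argument with $\vec{v}_3$. Alternatively, you can show this case is empty. $L_1(\lambda)\ge 0$ forces $a\le 3c$. Write $p=a-c\le 2c$ and $q=b-c\in(0,p)$. Then $l_2/(2b)=4c^2+4cq-p^2-q^2\ge q(4c-q)>0$.

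Hence every $k=5$ boundary point has $l_1=0$, i.e.\ $a=3c$, which is exactly the excluded exception. With that fix, and the routine rank checks you describe for the other $k$, your argument goes through. Note also that for $k=2,3$, as for $k=1,4,6,7$, $\lambda_6=\lambda_7$ gives $L_1=L_2$, so there is nothing to track there. The $l_1$-versus-$l_2$ bookkeeping you flagged as delicate is needed only at $k=5$, which is precisely where you assumed it away.
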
 
According to Lemma \ref{le:two distinct eigenvalues}, we then explore the convex combination decomposition of $\nu_{1,5,3}$ up to unitary similarity. 

\begin{lemma}
\label{le:nu(1,5,3)}
    Suppose $\nu_{1,5,3}$ is full-rank with $\lambda(\r) = (3c, b, b, b, b, b, c, c, c) \quad (3c>b>c>0)$. Then up to unitary similarity, the set of such $\nu_{1,5,3}$ is in one-to-one correspondence with the interior of convex hull of $\zeta_1$ in \eqref{zeta_1} and $\zeta_6$ in \eqref{zeta_6}, i.e.,
    \begin{eqnarray}
        \{\nu_{1,5,3}\} \leftrightarrow \{x\zeta_1 + (1-x)\zeta_6\} \ \text{where} \ x \in (0, 1). 
    \end{eqnarray}
\end{lemma}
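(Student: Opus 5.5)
The plan is to reduce everything to an explicit one-parameter computation on the spectrum. By Definition \ref{absolutely_separable_rho} and Lemma \ref{AP3,3_judge_positive_semi-definite}, membership in $\app_{3,3}$ depends only on the eigenvalue vector, so up to unitary similarity we may take $\nu_{1,5,3}=\diag(a,b,b,b,b,b,c,c,c)$ with $a>b>c>0$. The first step is to identify exactly which spectra of this shape are boundary points with $l_1=0$.

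Substituting $x_1=a$, $x_2=\cdots=x_6=b$, $x_7=x_8=x_9=c$ into $L_1$ gives
\begin{eqnarray}
L_1(\lambda)=\bma 2c & c-a & 0\\ c-a & 2c & 0\\ 0 & 0 & 2b\ema ,
\end{eqnarray}
so
\begin{eqnarray}
l_1(\lambda)=2b\left(4c^2-(a-c)^2\right).
\end{eqnarray}
Since $a>c$, the condition $l_1=0$ is equivalent to $a=3c$. This recovers the hypothesis $\lambda=(3c,b,b,b,b,b,c,c,c)$ and shows it is forced. For $L_2$ we have $x_7=c$ in the $(1,3)$ slot, and $L_2$ coincides with $L_1$ here, so $l_2=0$ as well. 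By the lemma of the subsection on the inequality \eqref{an inequality_2}, every such spectrum lies in $\app_{3,3}$. The normalization $6c+5b=1$ together with $c<b<3c$ then shows that the set $\{\nu_{1,5,3}\}$, up to unitary similarity, is a one-parameter family indexed by the ratio $r:=b/c\in(1,3)$.

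Next, compute the convex combination directly from \eqref{zeta_1} and \eqref{zeta_6}. Both are diagonal with nonincreasing entries, so $x\zeta_1+(1-x)\zeta_6$ is again ordered, and its spectrum is $(\alpha,\beta,\ldots,\beta,\gamma,\gamma,\gamma)$. Here $\beta$ has multiplicity five and
\begin{eqnarray}
\gamma=\frac{x}{11}+\frac{1-x}{21},\quad \beta=\frac{x}{11}+\frac{1-x}{7},\quad \alpha=\frac{3x}{11}+\frac{3(1-x)}{21}=3\gamma .
\end{eqnarray}
Thus every point of the open segment has exactly the form $(3c,b^{\times5},c^{\times3})$. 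One checks that $\beta>\gamma$ and $3\gamma>\beta$ for $x\in(0,1)$, so there are exactly three distinct eigenvalues. Since $\app_{3,3}$ is convex, the combination also lies in $\app_{3,3}$, consistent with the first step.

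The remaining step, which carries the actual content of the one-to-one claim, is bijectivity. The ratio is
\begin{eqnarray}
\frac{\beta}{\gamma}=\frac{33-12x}{11+10x},
\end{eqnarray}
a Möbius function with negative derivative. It is strictly decreasing from $3$ at $x=0$ to $1$ at $x=1$, hence a bijection $(0,1)\to(1,3)$. Because normalization fixes the scale, the spectrum of $\nu_{1,5,3}$ is determined by $r$. Therefore each $\nu_{1,5,3}$ equals, up to unitary similarity, exactly one $x\zeta_1+(1-x)\zeta_6$ with $x\in(0,1)$, and conversely.

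As a by-product, $\nu_{1,5,3}$ is a nontrivial convex combination of two distinct points of $\app_{3,3}$, so it is not extreme, which explains the exception in Lemma \ref{le:3cbbbbbccc}. The only delicate points are verifying that $l_2=0$ does not impose any extra constraint and that the ordering of eigenvalues is preserved under the convex combination; both follow from the explicit forms above.
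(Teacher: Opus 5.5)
Your core argument is correct and is essentially the paper's proof. You compute the spectrum of $x\zeta_1+(1-x)\zeta_6$, observe $\alpha=3\gamma$, and invert via the strictly decreasing ratio $\beta/\gamma=(33-12x)/(11+10x)$. The paper instead solves the $2\times 2$ linear system directly, obtaining $x=\frac{11(21c-1)}{10}$ with $c\in(\frac{1}{21},\frac{1}{11})$. The two inversions are equivalent.

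One auxiliary claim in your first step is false, however: $L_2$ does not coincide with $L_1$ here. That coincidence occurs only when $\lambda_6=\lambda_7$, and here $\lambda_6=b\neq c=\lambda_7$. Directly,
\[
L_2(\lambda)=\bma 2c & c-a & c-b\\ c-a & 2b & 0\\ c-b & 0 & 2b\ema ,
\]
and at $a=3c$ one gets $l_2(\lambda)=2b(b-c)(5c-b)>0$. This agrees with the paper's appendix, which finds eigenvalues $2b$ and $b+c\pm\sqrt{2[(b-c)^2+2c^2]}$. So these points have $l_1=0<l_2$, not $l_1=l_2=0$.

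This does not damage the lemma. The statement is purely about spectra up to unitary similarity and needs neither membership in $\app_{3,3}$ nor boundary status. The membership conclusion you draw also survives: the inequality lemma only needs $l_2\ge 0$, and alternatively convexity of $\app_{3,3}$ plus your bijection gives it directly. You should still correct that sentence and drop the closing remark about verifying that $l_2=0$ imposes no extra constraint.
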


\begin{proof}
(i) Given $\nu_{1,5,3}$, we use the method of undetermined coefficients to prove there exists a unique convex combination $x\zeta_1 + (1-x)\zeta_6$ where $x \in (0, 1)$ s.t. $\nu_{1,5,3} = x\zeta_1 + (1-x)\zeta_6$. 

Suppose $\nu_{1,5,3}$ is full-rank with \eqref{eq:r_abbbbbc} and $a=3c$. Since $3c+5b+3c = 1$, \eqref{eq:r_abbbbbc} can be reformulated as
\begin{eqnarray}
\label{eq:rho_a5bc.2}
    \lambda(\nu_{1,5,3}) = (3c, \frac{1-6c}{5}, \frac{1-6c}{5}, \frac{1-6c}{5}, \frac{1-6c}{5}, \frac{1-6c}{5}, c, c, c).
\end{eqnarray}
From \eqref{eq:rho_a5bc.2}, we also obtain that $0<c<\frac{1-6c}{5}<3c<1$. This implies $\frac{1}{21}<c<\frac{1}{11}$. The range of values for $c$ suggests a connection between the $\nu_{1,5,3}$ above and $\zeta_1$ in \eqref{zeta_1} and $\zeta_6$ in \eqref{zeta_6}. We consider the equation with undetermined coefficients
\begin{eqnarray}
    \nu_{1,5,3} = x \zeta_1 +y \zeta_6,
\end{eqnarray}
which can be reformulated as a system of linear equations: 
\begin{eqnarray}
    \frac{3}{11}x+\frac{1}{7}y = 3c,
\end{eqnarray}
\begin{eqnarray}
    \frac{1}{11}x+\frac{1}{7}y = \frac{1-6c}{5}.
\end{eqnarray}
Since the coefficient matrix of this system is of full rank, there is a unique solution 
\begin{eqnarray}
    x = \frac{11(21c-1)}{10}, \ y = \frac{21(1-11c)}{10}, 
\end{eqnarray}
where $\frac{1}{21}<c<\frac{1}{11}$. From the above analysis, for any full-rank $\nu_{1,5,3}$ satisfying the requirements of \eqref{eq:r_abbbbbc} and $a=3c>b>c>0$, there exists a unique $x \in (0, 1)$s.t.
\begin{eqnarray}
    \nu_{1,5,3} = x \zeta_1 + (1-x) \zeta_6,
\end{eqnarray}
where $\zeta_1 = \frac{1}{11}\diag\{3, 1, 1, 1, 1, 1, 1, 1, 1\}, \ \zeta_6 = \frac{1}{21}\diag\{3, 3, 3, 3, 3, 3, 1, 1, 1\}, \ x=\frac{11(21c-1)}{10}$. 

(ii) Given the convex combination of $\zeta_1$ in \eqref{zeta_1} and $\zeta_6$ in \eqref{zeta_6}, we have
\begin{eqnarray}
\label{eq:nv(1,5,3)=non-exe}
    \r = x \zeta_1 + (1-x) \zeta_6, \quad x\in (0,1). 
\end{eqnarray}
One can see $\r$ is full-rank with $\lambda(\r) = (3c, b, b, b, b, b, c, c, c) \quad (3c>b>c>0)$. As is demonstrated before, $\r$ is a $\nu_{1,5,3}$. 
\end{proof}
Now we apply the above technique to describe all extreme points in $\app_{3,3}$. The following fact is proven in appendix \ref{sec:all exe points of 3 distinct eigenvalues}.

\begin{theorem}
\label{all_3eigen}
    Suppose $\r \in \mathcal{AP}_{3,3}$ is full-rank with three distinct eigenvalues. If $\r$ is a boundary point of $\app_{3,3}$, then $\r$ is an extreme point of $\app_{3,3}$ unless $\r $ is unitarily equivalent to $ \nu_{1,5,3}$ in Lemma \ref{le:nu(1,5,3)}.
\end{theorem}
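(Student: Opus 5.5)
The plan is a finite case analysis over all multiplicity patterns of the three eigenvalues, reusing the machinery of Lemma \ref{boun-extr}. Write $\lambda(\r)=(a,\dots,a,b,\dots,b,c,\dots,c)$ with multiplicities $\mu(a)=p$, $\mu(b)=q$, $\mu(c)=r$, where $p+q+r=9$ and $p,q,r\ge1$. Up to unitary equivalence $\r$ is diagonal, and there are only finitely many triples $(p,q,r)$. The case $p=1$ is settled by Lemmas \ref{le:a1b7c,a4b4c} and \ref{le:3cbbbbbccc}. The unique exception there is $\nu_{1,5,3}$ with $l_1=0$, which is non-extreme by Lemma \ref{le:nu(1,5,3)}. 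So it remains to treat $p\ge2$, i.e. $p=2,\dots,7$.

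For each remaining triple, I would proceed in four steps.

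\textbf{Step 1.} Substitute the pattern into $L_1(\lambda)$ and $L_2(\lambda)$. Determine which of $l_1, l_2$ can vanish, using Lemma \ref{AP3,3_judge_positive_semi-definite}(ii). In many patterns, $\lambda_6=\lambda_7$ forces $L_1=L_2$, and hence $l_1=l_2=0$, as in the proof of Lemma \ref{le:a1b7c,a4b4c}.

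\textbf{Step 2.} Impose \eqref{t_k=t_k+1}. This collapses $t_1,\dots,t_9$ to three unknowns $t_a,t_b,t_c$, constrained by $pt_a+qt_b+rt_c=0$ from \eqref{sum_ti}.

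\textbf{Step 3.} Rewrite \eqref{u-equa} (and/or \eqref{v-equa}) as a $3\times3$ linear system $M(\vec u_3)\,(t_a,t_b,t_c)^T=0$, exactly as in \eqref{vec_t_eq4b}.

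\textbf{Step 4.} Show that $\operatorname{rank} M\ge2$. Then the solution space is at most the line spanned by $(a,b,c)$: indeed $(a,b,c)$ always solves the homogeneous system, because $L_i(\lambda)\vec u_3=0$. Since $pa+qb+rc=1\neq0$, the sum constraint kills this direction, leaving only the trivial solution.

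The rank argument in Step 4 follows the template already used. Split according to which entries of the kernel vector $\vec u_3$ (resp. $\vec v_3$) vanish. Use $\|\vec u_3\|=1$ to exclude the all-zero case. Exhibit two linearly independent columns, or two independent rows after elementary operations.

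The degenerate subcases are those where some $2\times2$ minors vanish identically. In these I would use the explicit kernel vector of the singular matrix $L_i(\lambda)$. It can be computed from the adjugate, since $L_i$ has rank two when it is boundary and full-rank. Together with the boundary equation $l_i(\lambda)=0$, this relates $a,b,c$. One then checks whether the resulting algebraic condition on $(a,b,c)$ is compatible with $a>b>c>0$. If it is incompatible, the rank is at least $2$ and extremality follows. If it is compatible, the point is a candidate exception. For the candidates, I would compute the rank of $M$ directly and, if needed, find a decomposition as in Lemma \ref{le:nu(1,5,3)} into extreme points from Lemma \ref{le:two distinct eigenvalues}. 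The goal is to confirm that $\nu_{1,5,3}$ is the only such point.

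The main obstacle is the bookkeeping in the degenerate subcases. When $l_1=0$ and $l_2>0$ (or vice versa), only one kernel equation is available, so rank deficiency is more likely. Here the relation $l_i(\lambda)=0$ must genuinely be used to rule out the special ratios where the columns of $M$ become dependent. I would first try to eliminate $a$ via $l_i=0$. Then I would show that the vanishing of all $2\times 2$ minors forces a polynomial identity in $b/c$ with no root in the admissible interval. I would expect this to work in all cases except the one already isolated for $p=1$.
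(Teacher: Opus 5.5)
Your proposal is correct and is essentially the paper's own argument. Both apply Lemma \ref{boun-extr} pattern by pattern, collapse $t_1,\dots,t_9$ to three unknowns, and show the $3\times 3$ coefficient matrix has rank at least two. The only exceptions are a few fixed kernel vectors, such as $\vec u_3=(0,\pm\tfrac{\sqrt2}{2},\mp\tfrac{\sqrt2}{2})^T$ or $(0,0,\pm1)^T$, for which $L_i(\lambda)\vec u_3=0$ would force a relation like $b=0$ or $a+c=0$; so only $\nu_{1,5,3}$, with $\vec u_3\propto(1,1,0)^T$ and $a=3c$, survives.

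Your explicit observation that $(a,b,c)^T$ always lies in the kernel and is removed by $pa+qb+rc=1$ makes precise what the paper compresses into ``the solution has the form $k(a,b,c)^T$''. Also, in every pattern the rank-deficient $\vec u_3$ turn out to be constant vectors of this kind. So the extra elimination via $l_i(\lambda)=0$ that you anticipate in the degenerate subcases is not actually needed.
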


Using \eqref{eq:nv(1,5,3)=non-exe}, we conjecture that every non-extreme boundary point of $\app_{3,3}$ can be expressed as a convex combination of $\zeta_1, \cdots,\zeta_8$ in \eqref{zeta_1}-\eqref{zeta_8}. The analysis above shows that the three components of such special $\r$ are governed by a single parameter, with two of them being proportional. Consequently, if such a special $\r$ exists, the two distinct eigenvalues of the two states of \eqref{zeta_1}-\eqref{zeta_8} in the convex combination must share the same proportionality. For example, $\lambda_1 = 3\lambda_2$ in \eqref{zeta_1} and $\lambda_6 = 3\lambda_7$ in \eqref{zeta_6}. However, no other two states within \eqref{zeta_1}-\eqref{zeta_8} ensemble exhibit this same relationship again.

\section{Extreme points with maximum eigenvalue of multiplicity one}
\label{sec:mu(a)=1}

We have shown the existence of extreme points $\r$ of $\app_{3,3}$ with three distinct eigenvalues in Theorem \ref{all_3eigen}. From this section we analytically construct these points. We begin by considering the case of exactly one maximum eigenvalue, i.e., $\r = \nu_{1,k,8-k} = \diag \{a,b,\dots,b,c,\dots,c\}$ where $k\in[1,7]$, and the multiplicity of $b$ is $k$. The results are presented in Table \ref{tab:performance1}. In the rest of this section, we shall derive the results.

We shall often use the software Mathematica to compute the roots of constant-coefficient equations and polynomial-coefficient equations. We follow the "real roots first, in ascending numerical order" principle when sorting them. When the computations are straightforward, we discuss the acceptance or rejection of each solution in detail. For more complex computations, we present the results directly using "Mathematica", due to the similarity of the underlying reasoning.

The same principle is used in the following discussion. For brevity, only case (ii) is shown in the main text to illustrate this technique, while the rest are provided in appendix \ref{(i)-(iii)-(vii)_of_tab1}.

(ii) We consider the boundary point 
\begin{eqnarray}
\label{nu1,2,6}
    \nu_{1,2,6} = \diag \{a, b, b, c\cdots, c\} = \diag \{1-2b-6c, b, b, c\cdots, c\}.
\end{eqnarray}
This condition implies that $l_{1}(\lambda) = l_{2}(\lambda) = 0$ because $\lambda_6 = \lambda_7$. We have
\begin{eqnarray}
\label{eq:nu1,2,6}
    \left|\begin{matrix}2c&2b+7c-1&c-b\\2b+7c-1&2c&c-b\\c-b&c-b&2c\\\end{matrix}\right|
    = 0,
\end{eqnarray}
i.e. 
\begin{eqnarray}
    4 b^3 - 2 b^2 (1 + 3 c) + 12 b (1 - 6 c) c - 2 c (1 - 13 c + 40 c^2) = 0,
\end{eqnarray}
and from \eqref{nu1,2,6}, we obtain
\begin{eqnarray}
    b \in (c,\frac{1-6c}{3}), i.e.\ c \in (0, \frac{1}{9}).
\end{eqnarray}
The solution of \eqref{eq:nu1,2,6} are 
\begin{eqnarray}
    b_1 = \frac{1-5c}{2}, \ 
    b_2 = 2c + \sqrt{12c^2 - c}, \ 
    b_3 = 2c - \sqrt{12c^2 - c}. 
\end{eqnarray}

(ii.1) $b_1 = \frac{1-5c}{2}$. Then $a=1-2b_1-6c=-c<0$. This leads to a contradiction, so we discard this solution. 

Consequently, the solution of $b$ is restricted to either $b_2$ or $b_3$. Thus, $12c^2 - c \geq 0$. So 
\begin{eqnarray}
\label{eq:c_range1_126}
    \frac{1}{12} \le c < \frac{1}{9}. 
\end{eqnarray}

(ii.2) $b_2 = 2c + \sqrt{12c^2 - c}$. Then $a=1-2b_2-6c=1-10c-2 \sqrt{12c^2 - c}>2c + \sqrt{12c^2 - c}$. This implies $1-12c>3 \sqrt{12c^2 - c}\geq0$, and so $c< \frac{1}{12}$. However, this leads to a contradiction with \eqref{eq:c_range1_126}, so we discard this solution.

(ii.3) $b_3 = 2c - \sqrt{12c^2 - c}$. Then $a=1-2b_3-6c=1-10c+2 \sqrt{12c^2 - c}>2c - \sqrt{12c^2 - c}$, which always holds by \eqref{eq:c_range1_126} except $c=1/12$. Besides, from $b_3>c$, we obtain $c<\frac{1}{11}$. Thus
\begin{eqnarray}
\label{c_range2_126}
    0.083333\cdots = \frac{1}{12} < c < \frac{1}{11} = 0.090909\cdots. 
\end{eqnarray}
Therefore, the boundary point $\r = \nu_{1,2,6}$ in \eqref{nu1,2,6} with \eqref{c_range2_126} satisfies
\begin{eqnarray}
    a=1-10c+2 \sqrt{12c^2 - c}, \ b=2c- \sqrt{12c^2 - c}. 
\end{eqnarray}
So $\n_{1,2,6}$ is an extreme point of $\app_{3,3}$. 

In addition, consider the limit of the extreme point $\nu_{1,2,6}$. We obtain
\begin{eqnarray}
    \lim_{c\rightarrow\frac{1}{11}} \nu_{1,2,6} = \zeta_1, \ \lim_{c\rightarrow\frac{1}{12}} \nu_{1,2,6} = \zeta_3. 
\end{eqnarray}

Using the same technique as in (i) (ii), the following fact is proven in appendix \ref{(i)-(iii)-(vii)_of_tab1} and we summarize them in Table \ref{tab:performance1}. 

\newpage

\begin{table}[htbp]
\centering
\caption{Classification of extreme points of $\nu_{1,k,8-k}$ where $k\in[1,7]$ with Remark \ref{rem:nu1,4,4}}
\label{tab:performance1}
\renewcommand{\arraystretch}{5}
\begin{tabular}{|l|c|c|c|c|}
\toprule
$\nu_{1,k,8-k}$ & a & b & range of c & limits \\ \hline

$\nu_{1,1,7}$ & $\frac{1-7c+\sqrt{-73 c^2 + 18 c - 1}}{2}$ & $\frac{1-7c-\sqrt{-73 c^2 + 18 c - 1}}{2}$ & $0.084542\cdots = \frac{9-2\sqrt{2}}{73} < c < \frac{1}{11} = 0.090909\cdots$ & \shortstack{$\lim_{c\rightarrow\frac{1}{11}} \nu_{1,1,7} = \zeta_1$\\$\lim_{c\rightarrow\frac{9-2\sqrt{2}}{73}} \nu_{1,1,7} = \zeta_2$} \\ \hline

$\nu_{1,2,6}$ & $1-10c+2 \sqrt{12c^2 - c}$ & $2c- \sqrt{12c^2 - c}$ & $0.083333\cdots = \frac{1}{12} < c < \frac{1}{11} = 0.090909\cdots$ & \shortstack{$\lim_{c\rightarrow\frac{1}{11}} \nu_{1,2,6} = \zeta_1$\\$\lim_{c\rightarrow\frac{1}{12}} \nu_{1,2,6} = \zeta_3$} \\ \hline

$\nu_{1,3,5}$ & $\frac{1+10c-3\sqrt{108c^2-20c+1}}{4}$ & $\frac{1-10c+\sqrt{108c^2-20c+1}}{4}$ & $0.070805 \cdots = \frac{10-\sqrt{17}}{83} < c < \frac{1}{11} = 0.090909 \cdots$ & \shortstack{$\lim_{c\rightarrow\frac{1}{11}} \nu_{1,3,5} = \zeta_1$\\$\lim_{c\rightarrow\frac{10-\sqrt{17}}{83}} \nu_{1,3,5} = \zeta_4$} \\ \hline

$\nu_{1,4,4}$ & $1-4b^{(1)}-4c$ & $b^{(1)}$ & $0.056991\cdots = y < c < \frac{1}{11} = 0.090909 \cdots$ & \shortstack{$\lim_{c\rightarrow\frac{1}{11}} \nu_{1,4,4} = \zeta_1$\\$\lim_{c\rightarrow y} \nu_{1,4,4} = \zeta_5$} \\ \hline

$\nu_{1,6,2}$ & $\frac{2c+\sqrt{6c-17c^2}}{3}$ & $\frac{3-8c-\sqrt{6c-17c^2}}{18}$ & $0.023365 \cdots = \frac{23-14\sqrt{2}}{137} < c < \frac{1}{11} = 0.090909 \cdots$ & \shortstack{$\lim_{c\rightarrow\frac{1}{11}} \nu_{1,6,2} = \zeta_1$\\$\lim_{c\rightarrow\frac{23-14\sqrt{2}}{137}} \nu_{1,6,2} = \zeta_7$} \\ \hline

$\nu_{1,7,1}$ & $\frac{4-11c+7\sqrt{8c-7c^2}}{32}$ & $\frac{4-3c-\sqrt{8c-7c^2}}{32}$ & $0 < c < \frac{1}{11} = 0.090909 \cdots$ & \shortstack{$\lim_{c\rightarrow\frac{1}{11}} \nu_{1,7,1} = \zeta_1$\\$\lim_{c\rightarrow0} \nu_{1,7,1} = \zeta_8$} \\ \hline
\end{tabular}
\end{table}

\begin{remark}
\label{rem:nu1,4,4}
    In particular, the eigenvalue $b^{(1)}$ of $\nu_{1,4,4}$ is the second root of the cubic equation $16x^3+(41c-8)x^2+(19c^2-10c+1)x+c^3=0$ with $y < c < \frac{1}{11}$ where $y=0.056991\cdots$ is the second root of the cubic equation $481y^3-37y^2-17y+1=0$.
\end{remark}

\newpage

\begin{figure}[htbp]
    \centering
    \includegraphics[width=0.8\textwidth]{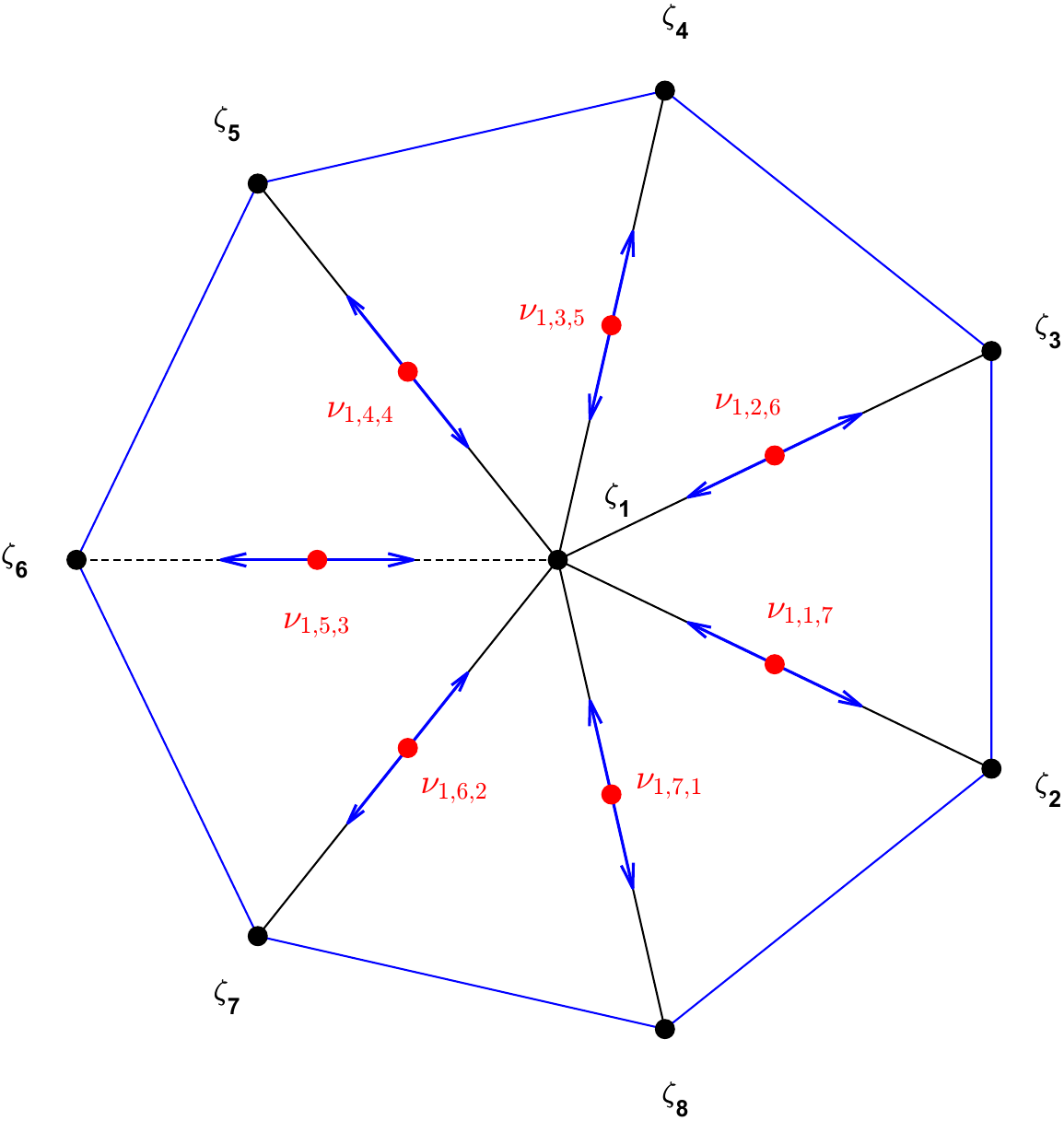}
    \caption{Classification of extreme points of $\nu_{1,k,8-k}$ where $k\in[1,7]$ with Remark \ref{rem:nu1,4,4}}
    \label{fig:fig1}
\end{figure}

\section{Extreme points with maximum eigenvalue of multiplicity two}
\label{sec:mu(a)=2}

In this section, we characterize the extreme points $\r$ of $\app_{3,3}$, when they have the maximum eigenvalue of multiplicity exactly two. Let $\r = \nu_{2,k,7-k} = \diag \{a,a,b,\dots,b,c,\dots,c\}$ where $k\in[1,6]$. The results are presented in Table \ref{tab:performance2}. In the rest of this section, we shall derive the results succinctly, as they are similar to the derivation of Table \ref{tab:performance1}. 

For brevity, while the remaining cases, i.e., (i)-(iii) and (v)-(vi), are presented in appendix \ref{other_of_tab2}.

(iv) We consider the boundary point 
\begin{eqnarray}
\label{nu2,4,3}
    \begin{aligned}
        \nu_{2,4,3} &= \diag \{a, a, b, b, b, b, c, c, c\} \\
        &= \diag \{\frac{1-4b-3c}{2}, \frac{1-4b-3c}{2}, b, b, b, b, c, c, c\}.
    \end{aligned}
\end{eqnarray}
Because $b>c$, we have two cases namely $l_1(\lambda)=0$ and $l_2(\lambda)=0$.

Firstly, if $l_1(\lambda)=0$ and $L_2(\lambda)$ is a positive semi-definite matrix, we assume $\nu_{2,4,3}=\nu_{2,4,3}^{(1)}$. We consider the system 
\begin{eqnarray}
\label{eq:nu2,4,3(1)}
    \begin{aligned}
        &\left|\begin{matrix}2c&\frac{4b+5c-1}{2}&\frac{6b+3c-1}{2}\\\frac{4b+5c-1}{2}&2c&0\\\frac{6b+3c-1}{2}&0&2b\\\end{matrix}\right| \\
        =&\frac{1}{2}[-16b^3+(8-76c)b^2+(-45c^2+22c-1)b-c(1-3c)^2]=0,
    \end{aligned}
\end{eqnarray}
\begin{eqnarray}
\label{neq:nu2,4,3(1)}
    L_2(\lambda)\geq O. 
\end{eqnarray}
By direct computation, using \eqref{nu2,4,3}, \eqref{eq:nu2,4,3(1)} and \eqref{neq:nu2,4,3(1)}, we obtain  
\begin{eqnarray}
    0.047619 \cdots=\frac{1}{21}<c\le\frac{85-14\sqrt{10}}{585}=0.069620\cdots. 
\end{eqnarray}
Furthermore, the value of $b$ is the second root of the cubic equation \eqref{eq:nu2,4,3(1)}, once $c$ is fixed. In addition, we consider the limit of the extreme point $\nu_{2,4,3}^{(1)}$. We obtain
\begin{eqnarray}
\lim_{c\rightarrow\frac{1}{21}} \nu_{2,4,3}^{(1)} = \zeta_6. 
\end{eqnarray}
When $c\ra\frac{85-14\sqrt{10}}{585}$, we will show below \eqref{eq:c->v243} that $l_1(\l)=l_2(\l)=0$, and $\n_{2,4,3}$ has three distinct eigenvalues in \eqref{eq:abc=nv243-1}-\eqref{eq:abc=nv243}.  

Secondly, if $l_2(\lambda)=0$ and $L_1(\lambda)$ is a positive semi-definite matrix, we assume $\nu_{2,4,3}=\nu_{2,4,3}^{(2)}$. We consider the system
\begin{eqnarray}
\label{eq:nu2,4,3(2)}
    \begin{aligned}
        &\left|\begin{matrix}2c&\frac{4b+5c-1}{2}&\frac{4b+5c-1}{2}\\\frac{4b+5c-1}{2}&2b&0\\\frac{4b+5c-1}{2}&0&2b\\\end{matrix}\right| \\
        =&-b[16b^2+8(4c-1)b+(1-5c)^2]=0,
    \end{aligned}
\end{eqnarray}
\begin{eqnarray}
\label{neq:nu2,4,3(2)}
    L_1(\lambda)\geq O. 
\end{eqnarray}
By direct computation and using \eqref{nu2,4,3}, \eqref{eq:nu2,4,3(2)} and \eqref{neq:nu2,4,3(2)}, we obtain that 
\begin{eqnarray}
    a=\frac{c+\sqrt{2c-9c^2}}{2}, \ 
    b=\frac{1-4c-\sqrt{2c-9c^2}}{4}
\end{eqnarray}
with
\begin{eqnarray}
    0.069620\cdots=\frac{85-14\sqrt{10}}{585}\le c <\frac{9-2\sqrt{2}}{73}=0.084542\cdots. 
\end{eqnarray}
In addition, we consider the limit of the extreme point $\nu_{2,4,3}^{(2)}$. We obtain
\begin{eqnarray}
\label{eq:c->v243}
\lim_{c\rightarrow\frac{9-2\sqrt{2}}{73}} \nu_{2,4,3}^{(2)} = \zeta_2. 
\end{eqnarray}
Moreover, from the above discussion one can also see that if $l_1(\lambda)=l_2(\lambda)=0$, there exists a unique solution
\begin{align}
\label{eq:abc=nv243-1}
    &a=\frac{1}{2}\left\{1+\frac{14\sqrt{10}-85}{195}-\frac{4\left[36-\frac{19}{117}(85-14\sqrt{10})\right]}{7(85-14\sqrt{10})}\right\}=0.189421\cdots, \\
    &b=\frac{36-\frac{19}{117}(85-14\sqrt{10})}{7(85-14\sqrt{10})}=0.103073\cdots, \\
    &c=\frac{85-14\sqrt{10}}{585}=0.069620\cdots.
    \label{eq:abc=nv243}
\end{align}
They generate a common extreme point $\nu_{2,4,3}^{(3)}=\diag(a,a,b,b,b,b,c,c,c)$ of the two systems \eqref{eq:nu2,4,3(1)}-\eqref{neq:nu2,4,3(1)} and \eqref{eq:nu2,4,3(2)}-\eqref{neq:nu2,4,3(2)}. 

Using the same technique as in (iv), the others are proven in appendix \ref{other_of_tab2} and we summarize them in Table \ref{tab:performance2}.

\newpage

\begin{table}[htbp]
\centering
\caption{Classification of extreme points of $\nu_{2,k,7-k}$ where $k\in[1,6]$ with Remark \ref{rem:table2}}
\label{tab:performance2}
\renewcommand{\arraystretch}{5}
\begin{tabular}{|l|c|c|c|c|}
\toprule
$\nu_{2,k,7-k}$ & a & b & range of c & limits \\ \hline

$\nu_{2,1,6}$ & $2c+\sqrt{12c^2-c}$ & $1-10c-2\sqrt{12c^2-c}$ & $0.083333\cdots = \frac{1}{12} < c < \frac{9-2\sqrt{2}}{73} = 0.084542\cdots$ & \shortstack{$\lim_{c\rightarrow\frac{9-2\sqrt{2}}{73}} \nu_{2,1,6} = \zeta_2$\\$\lim_{c\rightarrow\frac{1}{12}} \nu_{2,1,6} = \zeta_3$} \\ \hline

$\nu_{2,2,5}$ & $\frac{1-2b^{(1)}-5c}{2}$ & $b^{(1)}$ & $0.070805 \cdots=\frac{10 - \sqrt{17}}{83}<c<\frac{9-2\sqrt{2}}{73}=0.084542\cdots$ & \shortstack{$\lim_{c\rightarrow\frac{9-2\sqrt{2}}{73}} \nu_{2,2,5} = \zeta_2$\\$\lim_{c\rightarrow\frac{10 - \sqrt{17}}{83}} \nu_{2,2,5} = \zeta_4$} \\ \hline

$\nu_{2,3,4}$ & $\frac{1-3b^{(2)}-4c}{2}$ & $b^{(2)}$ & $0.056991\cdots=y<c<\frac{9-2\sqrt{2}}{73}=0.084542\cdots$ & \shortstack{$\lim_{c\rightarrow\frac{9-2\sqrt{2}}{73}} \nu_{2,3,4} = \zeta_2$\\$\lim_{c\rightarrow y} \nu_{2,3,4} = \zeta_5$} \\ \hline

$\nu_{2,4,3}^{(1)}$ & $\frac{1-4b^{(3)}-3c}{2}$ & $b^{(3)}$ & $0.047619 \cdots=\frac{1}{21}<c\le\frac{85-14\sqrt{10}}{585}=0.069620\cdots$ & $\lim_{c\rightarrow\frac{1}{21}} \nu_{2,4,3}^{(1)} = \zeta_6$ \\ 

$\nu_{2,4,3}^{(2)}$ & $\frac{c+\sqrt{2c-9c^2}}{2}$ & $\frac{1-4c-\sqrt{2c-9c^2}}{4}$ & $0.069620\cdots=\frac{85-14\sqrt{10}}{585}\le c <\frac{9-2\sqrt{2}}{73}=0.084542\cdots$ & $\lim_{c\rightarrow\frac{9-2\sqrt{2}}{73}} \nu_{2,4,3}^{(2)} = \zeta_2$ \\ 

$\nu_{2,4,3}^{(3)}$ & $1-4b^{(4)}-3c$ & \shortstack{$b^{(4)}:= $\\ $\frac{36-\frac{19}{117}(85-14\sqrt{10})}{7(85-14\sqrt{10})}$} & $\frac{85-14\sqrt{10}}{585}$ &  \\ \hline 

$\nu_{2,5,2}$ & $\frac{7 - 9 c + 5 \sqrt{-201 c^2 + 66 c - 1}}{37}$ & $\frac{6 - 13 c - \sqrt{-201 c^2 + 66 c - 1}}{37}$ & $0.023365 \cdots = \frac{23-14\sqrt{2}}{137}<c<\frac{9-2\sqrt{2}}{73}=0.084542\cdots$ & \shortstack{$\lim_{c\rightarrow\frac{9-2\sqrt{2}}{73}} \nu_{2,5,2} = \zeta_2$\\$\lim_{c\rightarrow\frac{23-14\sqrt{2}}{137}} \nu_{2,5,2} = \zeta_7$} \\ \hline

$\nu_{2,6,1}$ & $\frac{2 - 5c + 3\sqrt{4 c - 3 c^2}}{16}$ & $\frac{2 - c - \sqrt{4 c - 3 c^2}}{16}$ & $0<c<\frac{9-2\sqrt{2}}{73}=0.084542\cdots$ & \shortstack{$\lim_{c\rightarrow\frac{9-2\sqrt{2}}{73}} \nu_{2,6,1} = \zeta_2$\\$\lim_{c\rightarrow0} \nu_{2,6,1} = \zeta_8$} \\ \hline

\end{tabular}
\end{table}

\begin{remark}
\label{rem:table2}
Here, when $\lambda_6 \neq \lambda_7$, there are two conditions that may produce extreme points that as follows. $l_1(\lambda)=0$ and $L_2 \geq O$ ($\nu_{2,4,3}^{(1)}$); $l_2(\lambda)=0$ and $L_1 \geq O$ ($\nu_{2,4,3}^{(2)}$). Moreover, we have

1) $b^{(1)}$ of $\nu_{2,2,5}$ is the second root of the cubic equation $-4b^3+(4-30c)b^2+(-37c^2+14c-1)b-2c=0$. 

2) $b^{(2)}$ of $\nu_{2,3,4}$ is the second root of the cubic equation $-9b^3+(6-45c)b^2+(-56c^2+18c-1)b-c(1-6c)^2=0$ with $y<c<\frac{9-2\sqrt{2}}{73}$ where $y=0.056991\cdots$ is the second root of the equation $481y^3-37y^2-17y+1=0$.

3) $b^{(3)}$ of $\nu_{2,4,3}^{(1)}$ is the second root of the cubic equation $-16b^3+(8-76c)b^2+(-45c^2+22c-1)b-c(1-3c)^2=0$.    
\end{remark}

\newpage

\begin{figure}[htbp]
    \centering
    \includegraphics[width=0.8\textwidth]{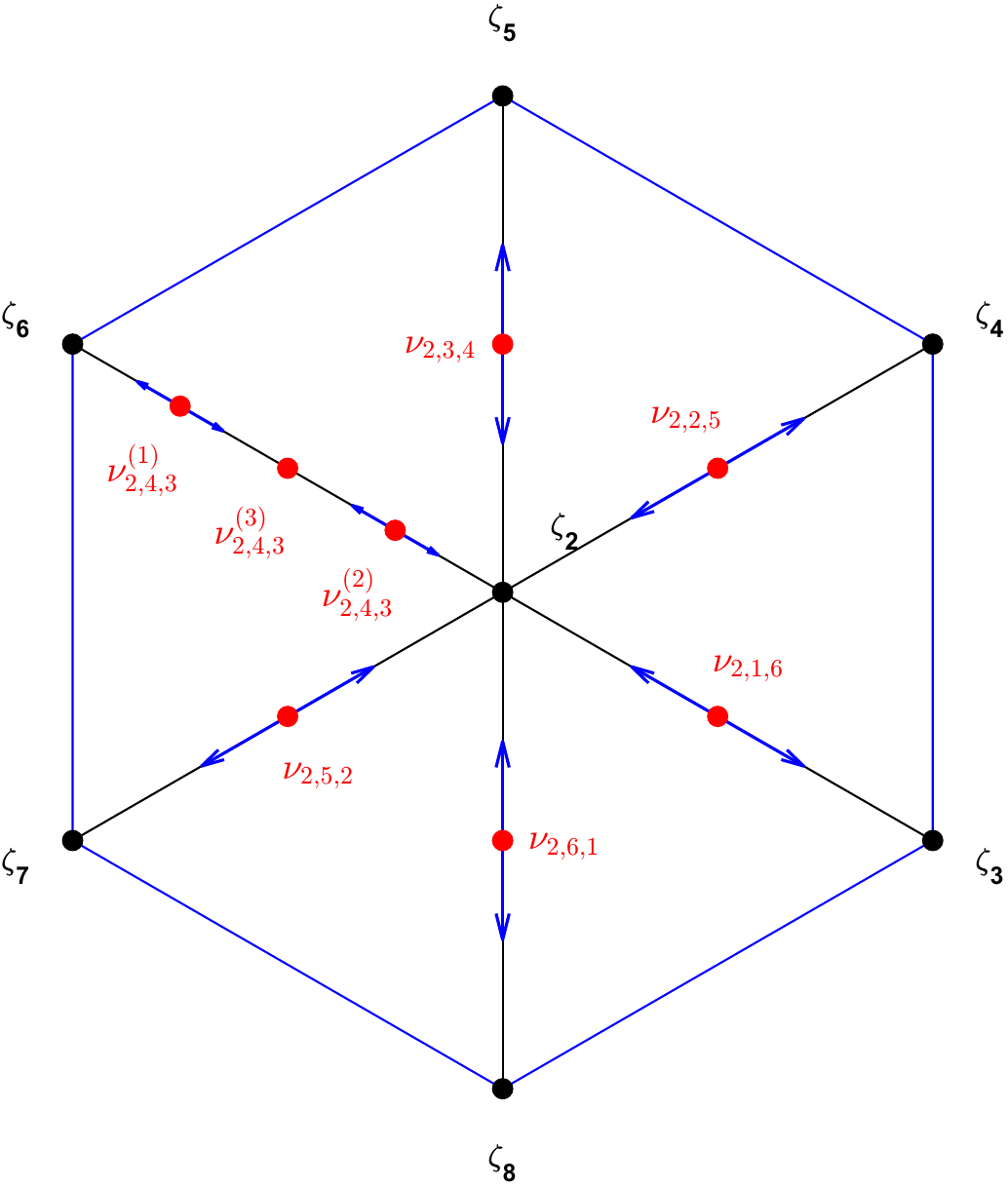}
    \caption{Classification of extreme points of $\nu_{2,k,7-k}$ where $k\in[1,6]$ with Remark \ref{rem:table2}}
    \label{fig:fig2}
\end{figure}

\section{Extreme points with maximum eigenvalue of multiplicity more than two}
\label{sec:mu(a)>2}



In this section, we classify extreme points of $\app_{3,3}$ when the maximum eigenvalue has the multiplicity more than two. Because the derivation is similar to those in the previous sections, we shall ignore the concrete derivation of most extreme points. We present our results on $\nu_{3,k,6-k}$ in Table \ref{tab:performance3}, $\nu_{4,k,5-k}$ in Table \ref{tab:performance4}, $\nu_{5,k,4-k}$ in Table \ref{tab:performance5}, $\nu_{6,k,3-k}$ in Table \ref{tab:performance6}, and $\nu_{7,1,1}$ in Table \ref{tab:performance7}.

\subsection{multiplicity three}
\label{sec:multi=three}

To illustrate Table \ref{tab:performance3} with a concrete example, we focus on the derivation of $\nu_{3,3,3}$. Similar to $\nu_{1,5,3}$ and $\nu_{2,4,3}$, the difficulty of deriving $\nu_{3,3,3}$ arises from the condition $\lambda_6 \neq \lambda_7$, yet its specific form is unique.
We consider the boundary point 
\begin{eqnarray}
\label{nu3,3,3}
    \begin{aligned}
        \nu_{3,3,3} &= \diag \{a, a, a, b, b, b, c, c, c\} \\
        &= \diag \{\frac{1-3b-3c}{3}, \frac{1-3b-3c}{3}, \frac{1-3b-3c}{3}, b, b, b, c, c, c\}.
    \end{aligned}
\end{eqnarray}
We have two cases, namely $l_1(\lambda)=0$, $l_2(\lambda)\ge0$ and $l_2(\lambda)=0$, $l_1(\lambda)\ge0$.

First if $l_1(\lambda)=0$ and $L_2(\lambda)$ is a positive semi-definite matrix, then we assume $\nu_{3,3,3}=\nu_{3,3,3}^{(1)}$. We consider the system 
\begin{eqnarray}
\label{eq:nu3,3,3(1)}
    \begin{aligned}
        &\left|\begin{matrix}2c&\frac{3b+6c-1}{3}&\frac{6b+3c-1}{3}\\\frac{3b+6c-1}{3}&2c&\frac{6b+3c-1}{3}\\\frac{6b+3c-1}{3}&\frac{6b+3c-1}{3}&2b\\\end{matrix}\right| \\
        =&\frac{2}{27} (3 b - 1) \left[27 b^2 - 9 b + (1 - 3 c)^2\right]=0,
    \end{aligned}
\end{eqnarray}
\begin{eqnarray}
\label{neq:nu3,3,3(1)}
    L_2(\lambda)\geq O. 
\end{eqnarray}
From \eqref{nu3,3,3}, we have 
\begin{eqnarray}
\label{neq:range_3,3,3(1)}
    \frac{1-3b-3c}{3}>b>c>0.
\end{eqnarray}
By solving the system of \eqref{eq:nu3,3,3(1)}-\eqref{neq:range_3,3,3(1)}, we obtain
\begin{eqnarray}
    a=\frac{3-18c+\sqrt{3(-1+24c-36c^2)}}{18}, \ 
    b=\frac{3-\sqrt{3(-1+24c-36c^2)}}{18}
\end{eqnarray}
with
\begin{eqnarray}
    0.047619 \cdots=\frac{1}{21} < c < \frac{1}{12} = 0.083333\cdots. \end{eqnarray}
In addition, one can verify that
\begin{eqnarray}
\lim_{c\rightarrow\frac{1}{12}} \nu_{3,3,3}^{(1)} = \zeta_3, \ 
\lim_{c\rightarrow\frac{1}{21}} \nu_{3,3,3}^{(1)} = \zeta_6. 
\end{eqnarray}  

Second if $l_2(\lambda)=0$ and $L_1(\lambda)$ is a positive semi-definite matrix, then we assume $\nu_{3,3,3}=\nu_{3,3,3}^{(2)}$. We consider the system
\begin{eqnarray}
\label{eq:nu3,3,3(2)}
    \begin{aligned}
        &\left|\begin{matrix}2c&\frac{3b+6c-1}{3}&\frac{3b+6c-1}{3}\\\frac{3b+6c-1}{3}&2b&\frac{6b+3c-1}{3}\\\frac{3b+6c-1}{3}&\frac{6b+3c-1}{3}&2b\\\end{matrix}\right| \\
        =&\frac{2}{27} (3 c - 1) (9 b^2 - 6 b + 27 c^2 - 9 c + 1)=0,
    \end{aligned}
\end{eqnarray}
\begin{eqnarray}
\label{neq:nu3,3,3(2)}
    L_1(\lambda)\geq O. 
\end{eqnarray}
However, the system consisting of \eqref{neq:range_3,3,3(1)}, \eqref{eq:nu3,3,3(2)}, and \eqref{neq:nu3,3,3(2)} has no solution. 

\newpage

\begin{table}[htbp]
\centering
\caption{Classification of extreme points of $\nu_{3,k,6-k}$ where $k\in[1,5]$ with Remark \ref{rem:table3}}
\label{tab:performance3}
\renewcommand{\arraystretch}{5}
\begin{tabular}{|l|c|c|c|c|}
\toprule
$\nu_{3,k,6-k}$ & a & b & range of c & limits \\ \hline

$\nu_{3,1,5}$ & $\frac{1 + 2 c - \sqrt{1 - 20 c + 132 c^2}}{4}$ & $\frac{1 - 26 c + 3\sqrt{1 - 20 c + 132 c^2}}{4}$ & $ 0.070805 \cdots=\frac{10 - \sqrt{17}}{83} < c < \frac{1}{12} = 0.083333\cdots$ & \shortstack{$\lim_{c\rightarrow\frac{1}{12}} \nu_{3,1,5} = \zeta_3 
$\\$\lim_{c\rightarrow\frac{10 - \sqrt{17}}{83}} \nu_{3,1,5} = \zeta_4$} \\ \hline

$\nu_{3,2,4}$ & $\frac{1-2b^{(1)}-4c}{4}$ & $b^{(1)}$ & $ y < c < \frac{1}{12} = 0.083333\cdots$ & \shortstack{$\lim_{c\rightarrow\frac{1}{12}} \nu_{3,2,4} = \zeta_3 
$\\$\lim_{c\rightarrow y} \nu_{3,2,4} = \zeta_5$} \\ \hline

$\nu_{3,3,3}^{(1)}$ & $\frac{3-18c+\sqrt{3(-1+24c-36c^2)}}{18}$ & $\frac{3-\sqrt{3(-1+24c-36c^2)}}{18}$ & $0.047619 \cdots=\frac{1}{21} < c < \frac{1}{12} = 0.083333\cdots$ & \shortstack{$\lim_{c\rightarrow\frac{1}{12}} \nu_{3,3,3}^{(1)} = \zeta_3 
$\\$\lim_{c\rightarrow\frac{1}{21}} \nu_{3,3,3}^{(1)} = \zeta_6$} \\ \hline

$\nu_{3,4,2}$ & $\frac{1-4b^{(2)}-2c}{4}$ & $b^{(2)}$ & $ 0.023365 \cdots = \frac{23-14\sqrt{2}}{137} < c < \frac{1}{12} = 0.083333\cdots$ & \shortstack{$\lim_{c\rightarrow\frac{1}{12}} \nu_{3,4,2} = \zeta_3 
$\\$\lim_{c\rightarrow\frac{23-14\sqrt{2}}{137}} \nu_{3,4,2} = \zeta_7$} \\ \hline

$\nu_{3,5,1}$ & $\frac{8-43c+5\sqrt{16c+33c^2}}{64}$ & $\frac{8+13c-3\sqrt{16c+33c^2}}{64}$ & $0 < c < \frac{1}{12} = 0.083333\cdots$ & \shortstack{$\lim_{c\rightarrow\frac{1}{12}} \nu_{3,5,1} = \zeta_3 
$\\$\lim_{c\rightarrow0} \nu_{3,5,1} = \zeta_8$} \\ \hline

\end{tabular}
\end{table}

\begin{remark}
\label{rem:table3}
When $\lambda_6 \neq \lambda_7$, there are two conditions that may produce extreme points, $l_1(\lambda)=0$ and $L_2 \geq O$; $l_2(\lambda)=0$ and $L_1 \geq O$. However, the latter system has no solution, while the former has a solution. We denote the solution as $\nu_{3,3,3}^{(1)}$. Moreover,

1) the value $b^{(1)}$ of $\nu_{3,2,4}$ is the first root
of the cubic equation $8b^3+(-12 - 15 c) b^2+ (6 - 30 c + 114 c^2)b-1 + 12 c -  39c^2 + c^3=0$ with $y<c<\frac{9-2\sqrt{2}}{73}$ where $y=0.056991\cdots$ is the second root of the equation $481y^3-37y^2-17y+1=0$.

2) The value $b^{(2)}$ of $\nu_{3,4,2}$ is the first root of the cubic equation $b^3+ (-39 + 114 c) b^2 +(12 - 30 c - 15 c^2) b -1 + 6 c - 12 c^2 + 8 c^3 =0$. 
\end{remark}

\newpage

\begin{figure}[htbp]
    \centering
    \includegraphics[width=0.8\textwidth]{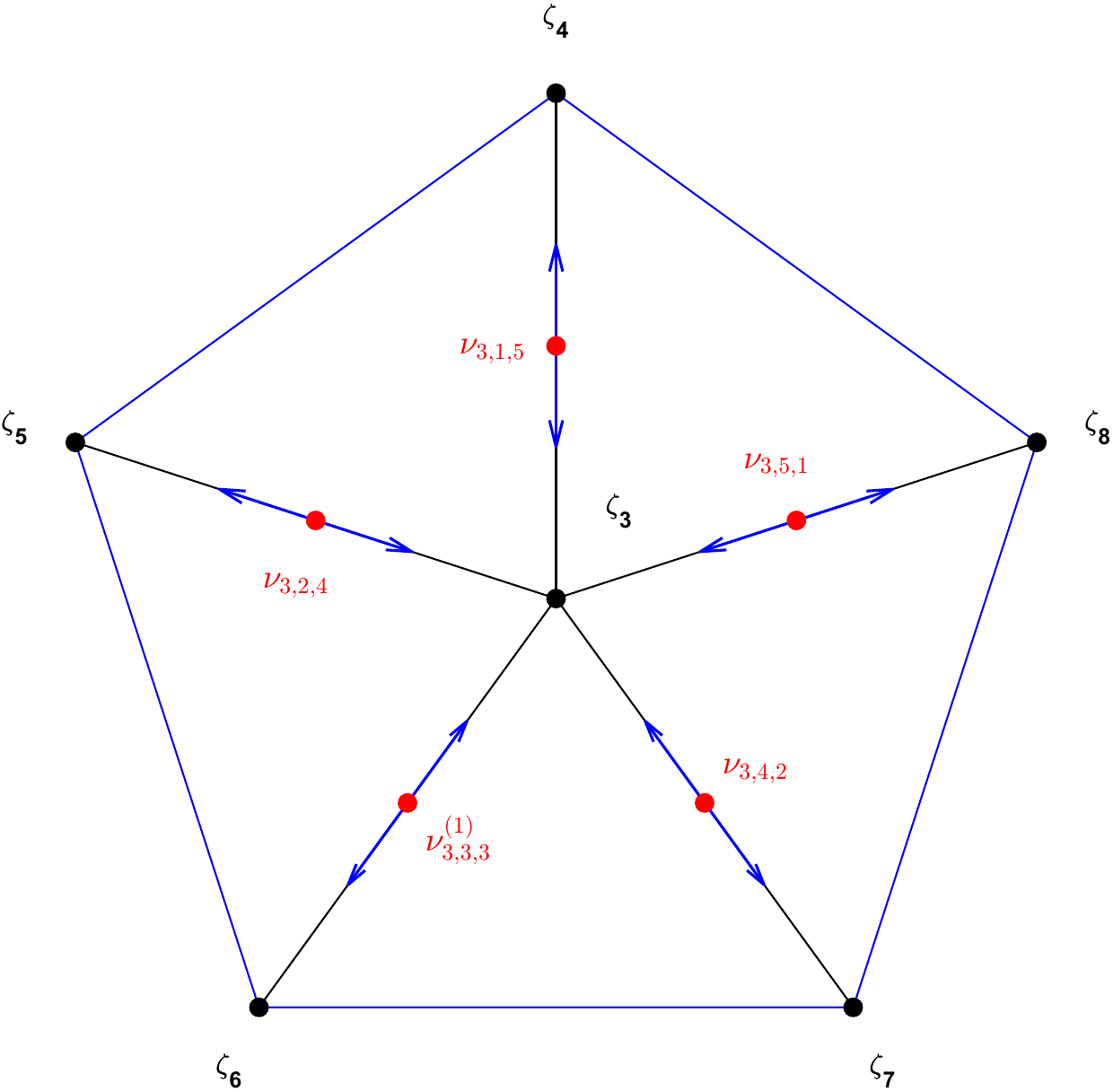}
    \caption{Classification of extreme points of $\nu_{3,k,6-k}$ where $k\in[1,5]$ with Remark \ref{rem:table3}.}
    \label{fig:fig3}
\end{figure}

\subsection{multiplicity four}
\label{sec:multi=four}

To illustrate Table \ref{tab:performance4} by a concrete example, we focus on the derivation of $\nu_{4,2,3}$. Similar to $\nu_{1,5,3}$, $\nu_{2,4,3}$ and $\nu_{3,3,3}$, the difficulty of deriving $\nu_{4,2,3}$ arises from the condition $\lambda_6 \neq \lambda_7$.
We consider the boundary point 
\begin{eqnarray}
\label{nu4,2,3}
    \begin{aligned}
        \nu_{4,2,3} &= \diag \{a, a, a, a, b, b, c, c, c\} \\
        &= \diag \{\frac{1-2b-3c}{4}, \frac{1-2b-3c}{4}, \frac{1-2b-3c}{4}, \frac{1-2b-3c}{4}, b, b, c, c, c\}.
    \end{aligned}
\end{eqnarray}

Firstly, if $l_1(\lambda)=0$ and $L_2(\lambda)$ is a positive semi-definite matrix, we assume $\nu_{4,2,3}=\nu_{4,2,3}^{(1)}$. We consider the system 
\begin{eqnarray}
\label{eq:nu4,2,3(1)}
    \begin{aligned}
        &\left|\begin{matrix}2c&\frac{2b+7c-1}{4}&\frac{6b+3c-1}{4}\\\frac{2b+7c-1}{4}&2c&\frac{6b+3c-1}{4}\\\frac{6b+3c-1}{4}&\frac{6b+3c-1}{4}&\frac{1-2b-3c}{2}\\\end{matrix}\right| \\
        =&\frac{1}{16} \left[40 b^3 + 4 (13 c - 9) b^2+ 2 (9 c^2 - 26 c + 5)b - 27 c^3 - 15 c^2 + 11 c - 1\right]=0,
    \end{aligned}
\end{eqnarray}
\begin{eqnarray}
\label{neq:nu4,2,3(1)}
    L_2(\lambda)\geq O. 
\end{eqnarray}
From \eqref{nu4,2,3}, we have 
\begin{eqnarray}
\label{neq:range_4,2,3(1)}
    \frac{1-2b-3c}{4}>b>c>0.
\end{eqnarray}
By solving the system of \eqref{eq:nu4,2,3(1)}-\eqref{neq:range_4,2,3(1)}, we obtain
\begin{eqnarray}
    a=\frac{3-6c+\sqrt{-1 + 24 c - 54 c^2}}{20}, \ b=\frac{2 - 9 c-\sqrt{-1 + 24 c - 54 c^2}}{10}
\end{eqnarray}
with
\begin{eqnarray}
    0.047619 \cdots=\frac{1}{21} < c < \frac{10 - \sqrt{17}}{83}=0.070805 \cdots. 
\end{eqnarray}
In addition, we have
\begin{eqnarray}
\lim_{c\rightarrow\frac{10 - \sqrt{17}}{83}} \nu_{4,2,3}^{(1)} = \zeta_4, \ 
\lim_{c\rightarrow \frac{1}{21}} \nu_{4,2,3}^{(1)} = \zeta_6. 
\end{eqnarray}  
Second, if $l_2(\lambda)=0$ and $L_1(\lambda)$ is a positive semi-definite matrix, then we assume $\nu_{4,2,3}=\nu_{4,2,3}^{(2)}$. We consider the system
\begin{eqnarray}
\label{eq:nu4,2,3(2)}
    \begin{aligned}
        &\left|\begin{matrix}2c&\frac{2b+7c-1}{4}&\frac{2b+7c-1}{4}\\\frac{2b+7c-1}{4}&2b&\frac{6b+3c-1}{4}\\\frac{2b+7c-1}{4}&\frac{6b+3c-1}{4}&\frac{1-2b-3c}{2}\\\end{matrix}\right| \\
        =&\frac{1}{16} \left[8 b^3 - 4 (17 c + 3) b^2 + 2 (7 c^2 - 6 c + 3)b + 129 c^3 - 79 c^2 + 15 c - 1\right]=0,
    \end{aligned}
\end{eqnarray}
\begin{eqnarray}
\label{neq:nu4,2,3(2)}
    L_1(\lambda)\geq O. 
\end{eqnarray}
However, the system consisting of \eqref{neq:range_4,2,3(1)}, \eqref{eq:nu4,2,3(2)}, and \eqref{neq:nu4,2,3(2)} has no solution. 

\newpage

\begin{table}
\centering
\caption{Classification of extreme points of $\nu_{4,k,5-k}$ where $k\in[1,4]$ with Remark \ref{rem:table4}}
\label{tab:performance4}
\renewcommand{\arraystretch}{5}
\begin{tabular}{|l|c|c|c|c|}
\toprule
$\nu_{4,k,5-k}$ & a & b & range of c & limits \\ \hline

$\nu_{4,1,4}$ & $\frac{1-b^{(1)}-4c}{4}$ & $b^{(1)}$ & $ y < c < \frac{10 - \sqrt{17}}{83}=0.070805 \cdots$ & \shortstack{$\lim_{c\rightarrow\frac{10 - \sqrt{17}}{83}} \nu_{4,1,4} = \zeta_4$
\\$\lim_{c\rightarrow y} \nu_{4,1,4} = \zeta_5$} \\ \hline

$\nu_{4,2,3}^{(1)}$ & $\frac{3-6c+\sqrt{-1 + 24 c - 54 c^2}}{20}$ & $\frac{2 - 9 c-\sqrt{-1 + 24 c - 54 c^2}}{10}$ & $0.047619 \cdots=\frac{1}{21} < c < \frac{10 - \sqrt{17}}{83}=0.070805 \cdots$ & \shortstack{$\lim_{c\rightarrow\frac{10 - \sqrt{17}}{83}} \nu_{4,2,3}^{(1)} = \zeta_4$
\\$\lim_{c\rightarrow \frac{1}{21}} \nu_{4,2,3}^{(1)} = \zeta_6$} \\ \hline

$\nu_{4,3,2}$ & $\frac{1-3b^{(2)}-2c}{4}$ & $b^{(2)}$ & $ 0.023365 \cdots = \frac{23-14\sqrt{2}}{137}  < c < \frac{10 - \sqrt{17}}{83}=0.070805 \cdots$ & \shortstack{$\lim_{c\rightarrow\frac{10 - \sqrt{17}}{83}} \nu_{4,3,2} = \zeta_4$
\\$\lim_{c\rightarrow \frac{23-14\sqrt{2}}{137} } \nu_{4,3,2} = \zeta_7$} \\ \hline

$\nu_{4,4,1}$ & $\frac{1-4b^{(3)}-c}{4}$ & $b^{(3)}$ & $0 < c < \frac{10 - \sqrt{17}}{83}=0.070805 \cdots$ & \shortstack{$\lim_{c\rightarrow\frac{10 - \sqrt{17}}{83}} \nu_{4,4,1} = \zeta_4$
\\$\lim_{c\rightarrow 0 } \nu_{4,4,1} = \zeta_8$} \\ \hline

\end{tabular}
\end{table}

\begin{remark}
\label{rem:table4}
When $\lambda_6 \neq \lambda_7$, there are two conditions that may produce extreme points, $l_1(\lambda)=0$ and $L_2 \geq O$; $l_2(\lambda)=0$ and $L_1 \geq O$. However, the latter system has no solution, while the former has a solution. We denote the solution as $\nu_{4,2,3}^{(1)}$. Moreover, we have

1) $b^{(1)}$ of $\nu_{4,1,4}$ is the first root of the cubic equation $ 3b^3  - 7b^2 + (5 - 48 c + 112 c^2)b -1 + 16 c - 48 c^2 - 32 c^3 
=0$ with $ y < c < \frac{10 - \sqrt{17}}{83}$ where $y=0.056991\cdots$ is the second root of the equation $481y^3-37y^2-17y+1=0$.

2) $b^{(2)}$ of $\nu_{4,3,2}$ is the second root of the cubic equation $ 11 b^3 + (31 - 2 c)b^2 + (-11 + 36 c - 52 c^2) b+1- 10 c + 36 c^2 - 
 40 c^3  =0$. 

 3) $b^{(3)}$ of $\nu_{4,4,1}$ is the first root of the cubic equation $ 256b^3+ (-128 - 128 c)b^2 + (20 + 24 c - 44 c^2)b -1 + c + c^2 - c^3 =0$. 
\end{remark}

\newpage

\begin{figure}[htbp]
    \centering
    \includegraphics[width=0.8\textwidth]{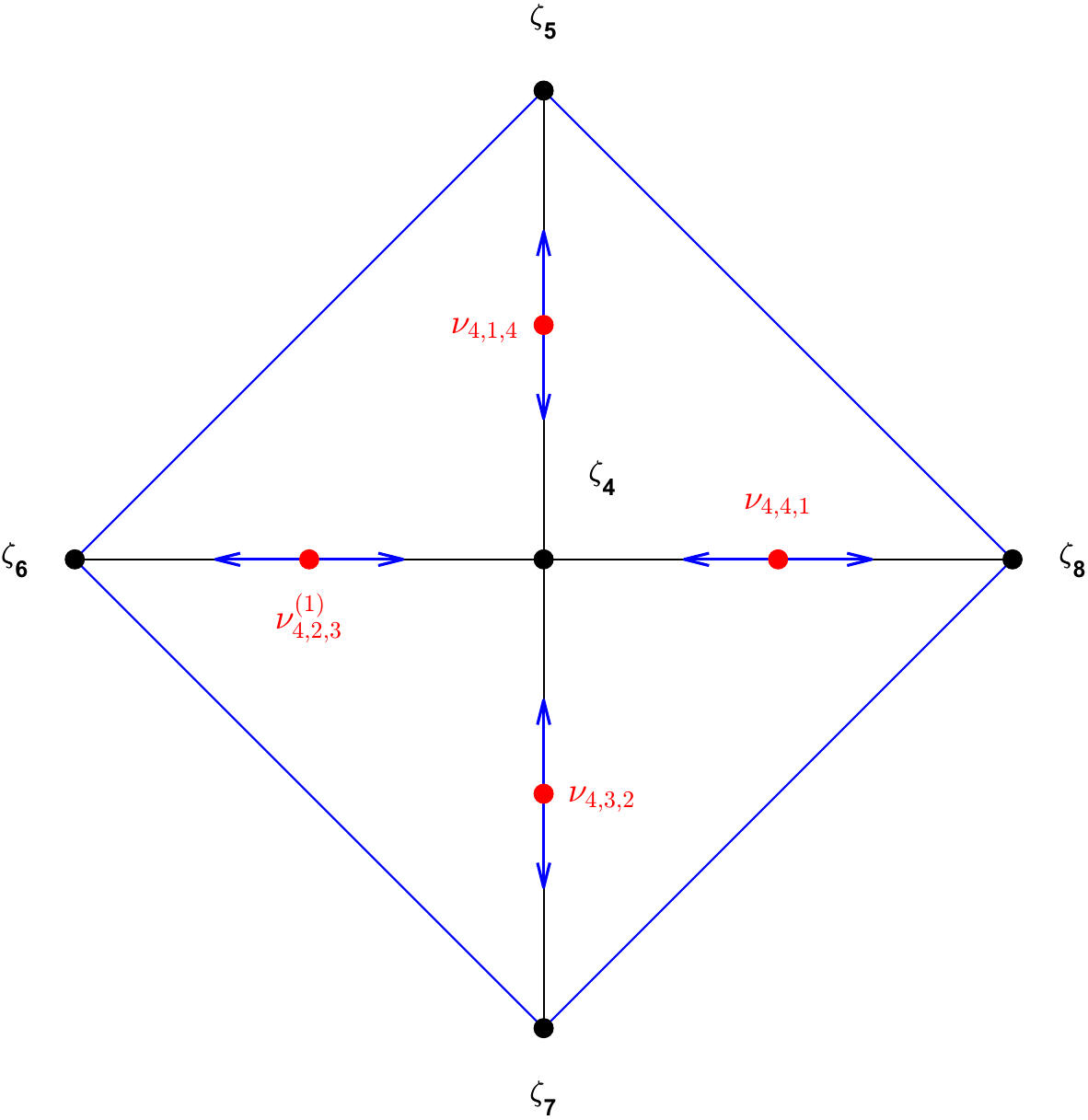}
    \caption{Classification of extreme points of $\nu_{4,k,5-k}$ where $k\in[1,4]$ with Remark \ref{rem:table4}}
    \label{fig:fig4}
\end{figure}

\subsection{multiplicity five}
\label{sec:multi=five}

To illustrate Table \ref{tab:performance5} by a concrete example, we focus on the derivation of $\nu_{5,1,3}$. Similar to $\nu_{1,5,3}$, $\nu_{2,4,3}$, $\nu_{3,3,3}$ and $\nu_{4,2,3}$, the difficulty of deriving $\nu_{5,1,3}$ arises from the condition $\lambda_6 \neq \lambda_7$.
We consider the boundary point 
\begin{eqnarray}
\label{nu5,1,3}
    \begin{aligned}
        \nu_{5,1,3} =& \diag \{a, a, a, a, a, b, c, c, c\} \\
        =& \diag \{\frac{1-b-3c}{5}, \cdots, \frac{1-b-3c}{5}, b, c, c, c\}.
    \end{aligned}
\end{eqnarray}
Firstly, if $l_1(\lambda)=0$ and $L_2(\lambda)$ is a positive semi-definite matrix, we assume $\nu_{5,1,3}=\nu_{5,1,3}^{(1)}$. We consider the system 
\begin{eqnarray}
\label{eq:nu5,1,3(1)}
    \begin{aligned}
        &\left|\begin{matrix}2c&\frac{b+8c-1}{5}&\frac{6b+3c-1}{5}\\\frac{b+8c-1}{5}&2c&0\\\frac{6b+3c-1}{5}&0&\frac{2(1-b-3c)}{5}\\\end{matrix}\right| =0,
    \end{aligned}
\end{eqnarray}
\begin{eqnarray}
\label{neq:nu5,1,3(1)}
    L_2(\lambda)\geq O. 
\end{eqnarray}
From \eqref{nu5,1,3}, we have 
\begin{eqnarray}
\label{neq:range_5,1,3(1)}
    \frac{1-b-3c}{5}>b>c>0.
\end{eqnarray}
By solving the system of \eqref{eq:nu5,1,3(1)}-\eqref{neq:range_5,1,3(1)}, we obtain
\begin{eqnarray}
    \frac{1}{21} < c < y 
\end{eqnarray}
where $y=0.056991\cdots$ is the second root of the equation $481y^3-37y^2-17y+1=0$. Furthermore, for a fixed $c$, $b$ is the first root of the cubic equation $ b^3  + (-3 - 161 c) b^2 + (3 + 22 c - 168 c^2) b -1 + 14 c + 18 c^2 - 153 c^3 =0$.

In addition, we have
\begin{eqnarray}
\lim_{c\rightarrow y} \nu_{5,1,3}^{(1)} = \zeta_5, \
\lim_{c\rightarrow \frac{1}{21}} \nu_{5,1,3}^{(1)} = \zeta_6. 
\end{eqnarray} 

Second, if $l_2(\lambda)=0$ and $L_1(\lambda)$ is a positive semi-definite matrix, then we assume $\nu_{5,1,3}=\nu_{5,1,3}^{(2)}$. We consider the system
\begin{eqnarray}
\label{eq:nu5,1,3(2)}
    \begin{aligned}
        &\left|\begin{matrix}2c&\frac{b+8c-1}{5}&\frac{b+8c-1}{5}\\\frac{b+8c-1}{5}&2b&0\\\frac{b+8c-1}{5}&0&\frac{2(1-b-3c)}{5}\\\end{matrix}\right|=0,
    \end{aligned}
\end{eqnarray}
\begin{eqnarray}
\label{neq:nu5,1,3(2)}
    L_1(\lambda)\geq O. 
\end{eqnarray}
However, the system consisting of \eqref{neq:range_5,1,3(1)}, \eqref{eq:nu5,1,3(2)}, and \eqref{neq:nu5,1,3(2)} has no solution.

\newpage

\begin{table}
    \centering
    \caption{Classification of extreme points of $\nu_{5,k,4-k}$ where $k\in[1,3]$ with Remark \ref{rem:table5}}
\label{tab:performance5}
    \renewcommand{\arraystretch}{5}
    \begin{tabular}{|l|c|c|c|c|}
    \toprule
    $\nu_{5,k,4-k}$ & a & b & range of c & limits \\ \hline

    $\nu_{5,1,3}^{(1)}$ & $\frac{1-b^{(1)}-3c}{4}$ & $ \quad b^{(1)} \quad$ & $ 0.047619 \cdots=\frac{1}{21} < c < y$ & \shortstack{$\lim_{c\rightarrow y} \nu_{5,1,3}^{(1)} = \zeta_5$\\$\lim_{c\rightarrow \frac{1}{21}} \nu_{5,1,3}^{(1)} = \zeta_6$} \\ \hline

    $\nu_{5,2,2}$ & $\frac{1-2b^{(2)}-2c}{4}$ & $b^{(2)}$ & $0.023365 \cdots = \frac{23-14\sqrt{2}}{137} < c < y$ & \shortstack{$\lim_{c\rightarrow y} \nu_{5,2,2}= \zeta_5$\\$\lim_{c\rightarrow \frac{23-14\sqrt{2}}{137} } \nu_{5,2,2} = \zeta_7$} \\ \hline

    $\nu_{5,3,1}$ & $\frac{1-3b^{(3)}-c}{4}$ & $b^{(3)}$ & $0< c < y$ & \shortstack{$\lim_{c\rightarrow y} \nu_{5,3,1}= \zeta_5$\\$\lim_{c\rightarrow 0 } \nu_{5,3,1} = \zeta_8$} \\ \hline
    
    \end{tabular}
    
\end{table}

\begin{remark}
\label{rem:table5}
When $\lambda_6 \neq \lambda_7$, there are two conditions that may produce extreme points, $l_1(\lambda)=0$ and $L_2 \geq O$; $l_2(\lambda)=0$ and $L_1 \geq O$. However, the latter system has no solution, while the former has a solution. We denote the solution as $\nu_{5,1,3}^{(1)}$. Moreover, we have

1) $b^{(1)}$ of $\nu_{5,1,3}^{(1)}$ is the first root of the cubic equation $ b^3  + (-3 - 161 c) b^2 + (3 + 22 c - 168 c^2) b -1 + 14 c + 18 c^2 - 153 c^3 =0$ with $\frac{1}{21} < c < y$ where $y=0.056991\cdots$ is the second root of the equation $481y^3-37y^2-17y+1=0$.

2) $b^{(2)}$ of $\nu_{5,2,2}$ is the second root of the cubic equation $ 237 b^3 + (-58 + 276 c)b^2+ (-1 - 56 c + 66 c^2)b +1 - 16 c + 77 c^2 - 98 c^3 =0$ with $\frac{23-14\sqrt{2}}{137} < c < y$ where $y=0.056991\cdots$ is the second root of the equation $481y^3-37y^2-17y+1=0$.

3) $b^{(3)}$ of $\nu_{5,3,1}$ is the second root of the cubic equation $128 b^3+ (32 + 268 c) b^2 +(-14 - 72 c + 86 c^2) b +1 - 3 c + 3 c^2 - c^3 =0$ with $\frac{23-14\sqrt{2}}{137} < c < y$ where $y=0.056991\cdots$ is the second root of the equation $481y^3-37y^2-17y+1=0$.
\end{remark}

\newpage

\begin{figure}[htbp]
    \centering
    \includegraphics[width=0.8\textwidth]{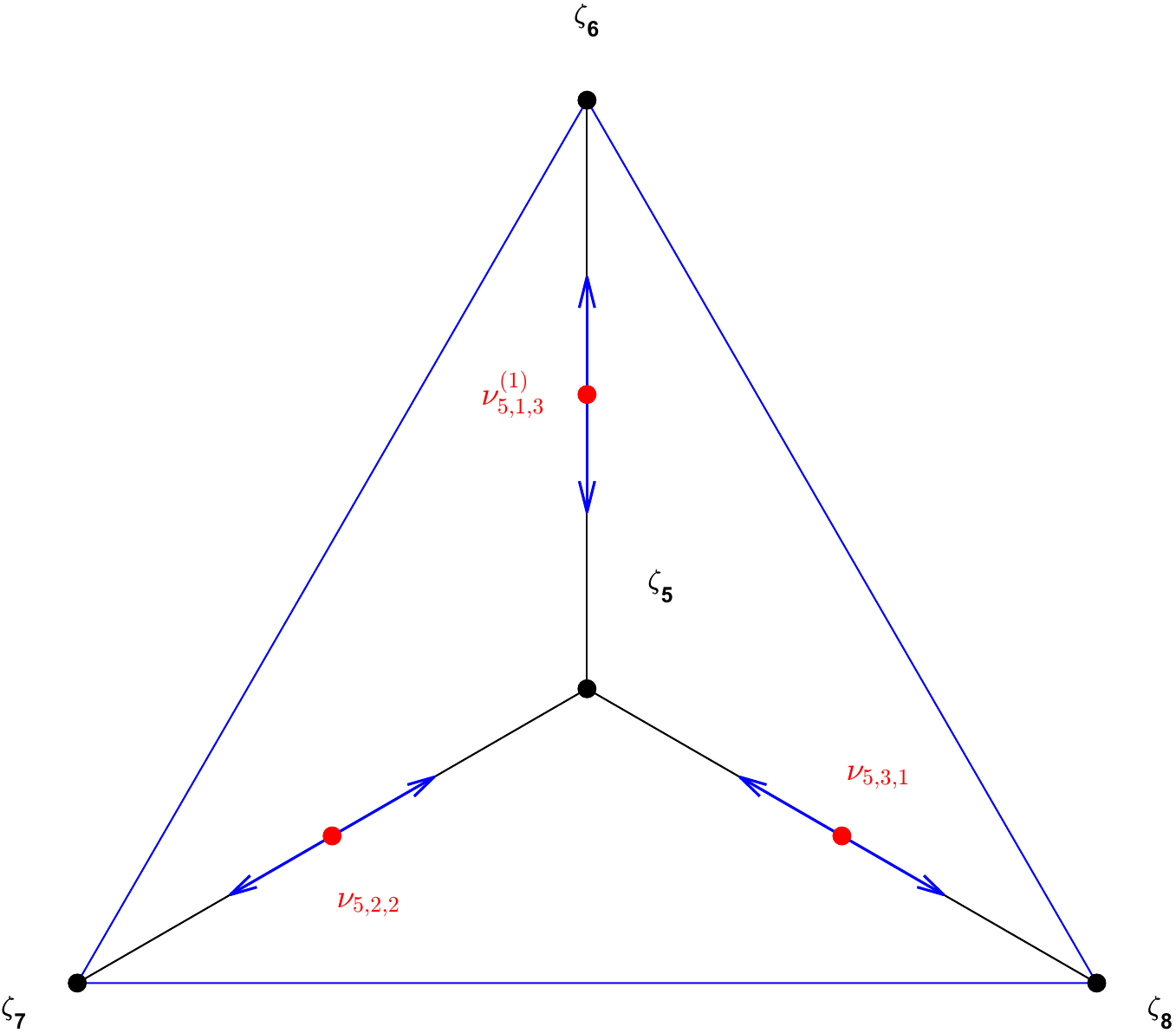}
    \caption{Classification of extreme points of $\nu_{5,k,4-k}$ where $k\in[1,3]$ with Remark \ref{rem:table5}}
    \label{fig:fig5}
\end{figure}

\subsection{multiplicity six}
\label{sec:multi=six}

To illustrate Table \ref{tab:performance6} by a concrete example, we focus on the derivation of $\nu_{6,1,2}$ and $\nu_{6,2,1}$. Similar to $\nu_{1,5,3}$, $\nu_{2,4,3}$, $\nu_{3,3,3}$, $\nu_{4,2,3}$ and $\nu_{5,1,3}$, the difficulty of deriving $\nu_{6,1,2}$ arises from the condition $\lambda_6 \neq \lambda_7$.
We consider the boundary point
\begin{eqnarray}
\label{nu6,1,2}
    \begin{aligned}
        \nu_{6,1,2} =& \diag \{a, a, a, a, a, a, b, c, c\} \\
        =& \diag \{\frac{1-b-2c}{6}, \cdots, \frac{1-b-2c}{6}, b, c, c\}.
    \end{aligned}
\end{eqnarray}

Firstly, if $l_1(\lambda)=0$ and $L_2(\lambda)$ is a positive semi-definite matrix, we assume $\nu_{6,1,2}=\nu_{6,1,2}^{(1)}$. We consider the system 
\begin{eqnarray}
\label{eq:nu6,1,2(1)}
    \begin{aligned}
        &\left|\begin{matrix}2c&\frac{b+8c-1}{6}&0\\\frac{b+8c-1}{6}&2b&0\\0&0&\frac{1-b-2c}{3}\\\end{matrix}\right|=0,
    \end{aligned}
\end{eqnarray}
\begin{eqnarray}
\label{neq:nu6,1,2(1)}
    L_2(\lambda)\geq O. 
\end{eqnarray}
From \eqref{nu6,1,2}, we have 
\begin{eqnarray}
\label{neq:range_6,1,2(1)}
    \frac{1-b-2c}{6}>b>c>0.
\end{eqnarray}
By solving the system of \eqref{eq:nu6,1,2(1)}-\eqref{neq:range_6,1,2(1)}, we obtain
\begin{eqnarray}
    a=-11c+2\sqrt{c + 28 c^2}, \
    b=1+64c-12\sqrt{c + 28 c^2}
\end{eqnarray}
with
\begin{eqnarray}
    0.023365 \cdots = \frac{23-14\sqrt{2}}{137} < c <\frac{1}{21}=0.047619 \cdots. 
\end{eqnarray}

In addition, we have
\begin{eqnarray}
\lim_{c\rightarrow \frac{1}{21}} \nu_{6,1,2}^{(1)} = \zeta_6, \
\lim_{c\rightarrow \frac{23-14\sqrt{2}}{137}} \nu_{6,1,2}^{(1)} = \zeta_7. 
\end{eqnarray} 

Second, if $l_2(\lambda)=0$ and $L_1(\lambda)$ is a positive semi-definite matrix, then we assume $\nu_{6,1,2}=\nu_{6,1,2}^{(2)}$. We consider the system
\begin{eqnarray}
\label{eq:nu6,1,2(2)}
    \begin{aligned}
        &\left|\begin{matrix}2c&\frac{b+8c-1}{6}&\frac{7b+2c-1}{6}\\\frac{b+8c-1}{6}&\frac{1-b-2c}{3}&0\\\frac{7b+2c-1}{6}&0&\frac{1-b-2c}{3}\\\end{matrix}\right|=0,
    \end{aligned}
\end{eqnarray}
\begin{eqnarray}
\label{neq:nu6,1,2(2)}
    L_1(\lambda)\geq O. 
\end{eqnarray}
However, the system consisting of \eqref{neq:range_6,1,2(1)}, \eqref{eq:nu6,1,2(2)}, and \eqref{neq:nu6,1,2(2)} has no solution.

The difficulty of deriving $\nu_{6,2,1}$ also arises from the condition $\lambda_6 \neq \lambda_7$.
We consider the boundary point 
\begin{eqnarray}
\label{nu6,2,1}
    \begin{aligned}
        \nu_{6,2,1} =& \diag \{a, a, a, a, a, a, b, b, c\} \\
        =& \diag \{\frac{1-2b-c}{6}, \cdots, \frac{1-2b-c}{6}, b, b, c\}.
    \end{aligned}
\end{eqnarray}

Firstly, if $l_1(\lambda)=0$ and $L_2(\lambda)$ is a positive semi-definite matrix, we assume $\nu_{6,2,1}=\nu_{6,2,1}^{(1)}$. We consider the system 
\begin{eqnarray}
\label{eq:nu6,2,1(1)}
    \begin{aligned}
        &\left|\begin{matrix}2c&\frac{8b+c-1}{6}&0\\\frac{8b+c-1}{6}&2b&0\\0&0&\frac{1-2b-c}{3}\\\end{matrix}\right| 
        =0,
    \end{aligned}
\end{eqnarray}
\begin{eqnarray}
\label{neq:nu6,2,1(1)}
    L_2(\lambda)\geq O. 
\end{eqnarray}
By direct computation and using \eqref{nu6,2,1}, \eqref{eq:nu6,2,1(1)} and \eqref{neq:nu6,2,1(1)}, we obtain that 
\begin{eqnarray}
    a=\frac{1-4c+\sqrt{2 c + 7 c^2}}{8}, \ 
    b=\frac{1+8c-3\sqrt{2 c + 7 c^2}}{8} 
\end{eqnarray}
with
\begin{eqnarray}
    0.017543 \cdots = \frac{1}{57} \le c <\frac{1}{21}=0.047619 \cdots. 
\end{eqnarray}
In addition, we consider the limit of the extreme point $\nu_{6,2,1}^{(1)}$. We obtain
\begin{eqnarray}
\lim_{c\rightarrow \frac{1}{21}} \nu_{6,2,1}^{(1)} = \zeta_6. 
\end{eqnarray}
When $c\ra\frac{1}{57}$, we will show below \eqref{eq:c->v621} that $l_1(\l)=l_2(\l)=0$, and $\n_{6,2,1}$ has three distinct eigenvalues in \eqref{eq:abc=nv621-1}-\eqref{eq:abc=nv621}.  

Secondly, if $l_2(\lambda)=0$ and $L_1(\lambda)$ is a positive semi-definite matrix, we assume $\nu_{6,2,1}=\nu_{6,2,1}^{(2)}$. We consider the system
\begin{eqnarray}
\label{eq:nu6,2,1(2)}
    \begin{aligned}
        &\left|\begin{matrix}2c&\frac{8b+c-1}{6}&\frac{8b+c-1}{6}\\\frac{8b+c-1}{6}&\frac{1-2b-c}{3}&0\\\frac{8b+c-1}{6}&0&\frac{1-2b-c}{3}\\\end{matrix}\right| 
        =0,
    \end{aligned}
\end{eqnarray}
\begin{eqnarray}
\label{neq:nu6,2,1(2)}
    L_1(\lambda)\geq O. 
\end{eqnarray}
By direct computation and using \eqref{nu6,2,1}, \eqref{eq:nu6,2,1(2)} and \eqref{neq:nu6,2,1(2)}, we obtain that 
\begin{eqnarray}
    a=\frac{2-c+\sqrt{4 c - 3 c^2}}{16}, \ 
    b=\frac{2-5c-3\sqrt{4 c - 3 c^2}}{16} 
\end{eqnarray}
with
\begin{eqnarray}
    0< c \le \frac{1}{57}=0.017543 \cdots. 
\end{eqnarray}
In addition, we consider the limit of the extreme point $\nu_{6,2,1}^{(2)}$. We obtain
\begin{eqnarray}
\label{eq:c->v621}
\lim_{c\rightarrow 0} \nu_{6,2,1}^{(2)} = \zeta_8. 
\end{eqnarray}
Moreover, from the above discussion one can also see that if $l_1(\lambda)=l_2(\lambda)=0$, there exists a unique solution
\begin{align}
\label{eq:abc=nv621-1}
    &a=\frac{8}{57}=0.140350 \cdots, \\
    &b=\frac{4}{57}=0.070175 \cdots, \\
    &c=\frac{1}{57}=0.017543 \cdots.
    \label{eq:abc=nv621}
\end{align}
They generate a common extreme point $\nu_{6,2,1}^{(3)}={1\over57}\diag(8,8,8,8,8,8,4,4,1)$ of the two systems \eqref{eq:nu6,2,1(1)}-\eqref{neq:nu6,2,1(1)} and \eqref{eq:nu6,2,1(2)}-\eqref{neq:nu6,2,1(2)}. 

\newpage

\begin{table}
    \centering
    \caption{Classification of extreme points of $\nu_{6,k,3-k}$ where $k\in[1,2]$ with Remark \ref{rem:table6}}
\label{tab:performance6}
    \renewcommand{\arraystretch}{5}
    \begin{tabular}{|l|c|c|c|c|}
    \toprule
    $\nu_{6,k,3-k}$ & a & b & range of c & limits \\ \hline

    $\nu_{6,1,2}^{(1)}$ & $-11c+2\sqrt{c + 28 c^2}$ & $ 1+64c-12\sqrt{c + 28 c^2}$ & $0.023365 \cdots = \frac{23-14\sqrt{2}}{137} < c <\frac{1}{21}=0.047619 \cdots$ & \shortstack{$\lim_{c\rightarrow \frac{1}{21}} \nu_{6,1,2}^{(1)} = \zeta_6$\\$\lim_{c\rightarrow \frac{23-14\sqrt{2}}{137}} \nu_{6,1,2}^{(1)} = \zeta_7$} \\ \hline

    $\nu_{6,2,1}^{(1)}$ & $\frac{1-4c+\sqrt{2 c + 7 c^2}}{8}$ & $ \frac{1+8c-3\sqrt{2 c + 7 c^2}}{8}$ & $0.017543 \cdots = \frac{1}{57} \le c <\frac{1}{21}=0.047619 \cdots$ & $\lim_{c\rightarrow \frac{1}{21}} \nu_{6,2,1}^{(1)} = \zeta_6$ \\ 

    $\nu_{6,2,1}^{(2)}$ & $\frac{2-c+\sqrt{4 c - 3 c^2}}{16}$ & $ \frac{2-5c-3\sqrt{4 c - 3 c^2}}{16}$ & $ 0< c \le \frac{1}{57}=0.017543 \cdots$ & $\lim_{c\rightarrow 0} \nu_{6,2,1}^{(2)} = \zeta_8$ \\ 

    $\nu_{6,2,1}^{(3)}$ & $\frac{8}{57}=0.140350 \cdots$ & $\frac{4}{57}=0.070175 \cdots$ & $\frac{1}{57}=0.017543 \cdots$ &  \\ \hline
    
    \end{tabular}
\end{table}

\begin{remark}
\label{rem:table6}
When $\lambda_6 \neq \lambda_7$, there are two conditions that may produce extreme points, $l_1(\lambda)=0$ and $L_2 \geq O$; $l_2(\lambda)=0$ and $L_1 \geq O$. 

For $\nu_{6,1,2}$, the latter system has no solution, while the former has a solution. We denote the solution as $\nu_{6,1,2}^{(1)}$. 

For $\nu_{6,2,1}$, there are two conditions that may produce extreme points: $l_1(\lambda)=0$ and $L_2 \geq O$ ($\nu_{6,2,1}^{(1)}$); $l_2(\lambda)=0$ and $L_1 \geq O$ ($\nu_{6,2,1}^{(2)}$).
\end{remark}

\newpage

\begin{figure}[htbp]
    \centering
    \includegraphics[width=0.8\textwidth]{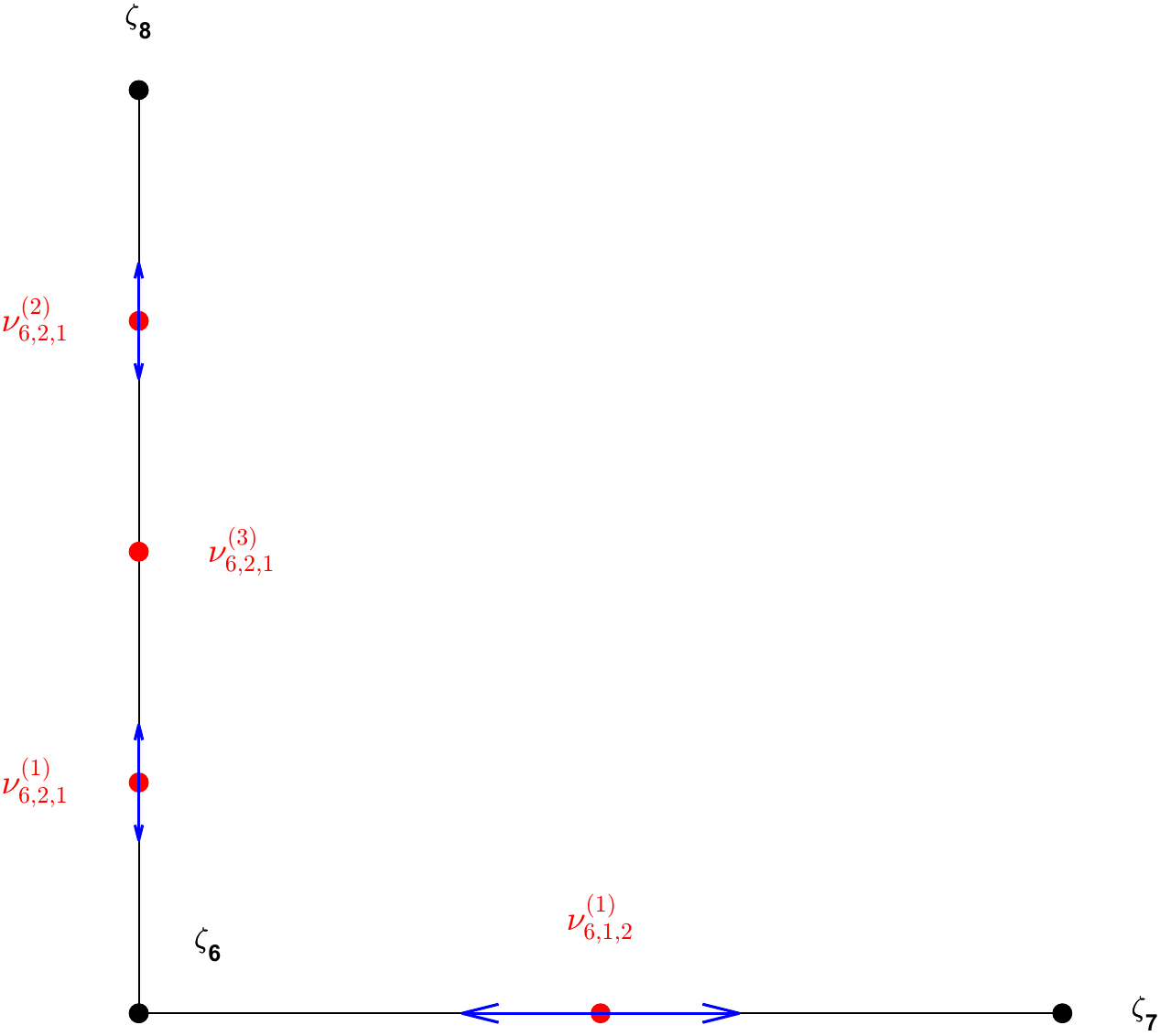}
    \caption{Classification of extreme points of $\nu_{6,k,3-k}$ where $k\in[1,2]$ with Remark \ref{rem:table6}}
    \label{fig:fig6}
\end{figure}

\subsection{multiplicity seven}
\label{sec:multi=seven}

\newpage

\begin{table}
    \centering
    \caption{Classification of extreme point $\nu_{7,1,1}$}
\label{tab:performance7}
    \renewcommand{\arraystretch}{5}
    \begin{tabular}{|l|c|c|c|c|}
    \toprule
    $\nu_{7,1,1}$ & a & b & range of c & limits \\ \hline

    $\nu_{7,1,1}$ & $\frac{4-3c+\sqrt{8 c - 7 c^2}}{32}$ & $ \frac{4-11c-7\sqrt{8 c - 7 c^2}}{32}$ & $0 < c <\frac{23-14\sqrt{2}}{137}=0.023365 \cdots $ & \shortstack{$\lim_{c\rightarrow \frac{23-14\sqrt{2}}{137}} \nu_{7,1,1} = \zeta_7$\\$\lim_{c\rightarrow0} \nu_{7,1,1} = \zeta_8$} \\ \hline
    \end{tabular}
\end{table}

\begin{figure}[htbp]
    \centering
    \includegraphics[width=0.8\textwidth]{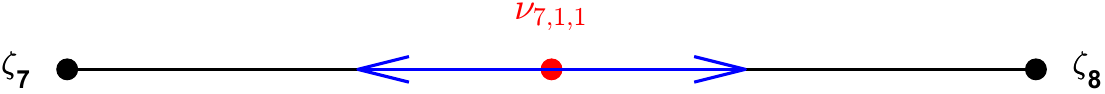}
    \caption{Classification of extreme point $\nu_{7,1,1}$}
    \label{fig:fig7}
\end{figure}

So far we have observed that for cases with three distinct eigenvalues, Lemma \ref{boun-extr} and corresponding methodology can be applied to determine whether a boundary point is an extreme point and, if so, to derive its extreme point form. Moreover, by Lemma \ref{AP3,3_judge_positive_semi-definite}, the necessary and sufficient condition for "belonging to $\mathcal{AP}_{3,3}$" can ignore the inequality \eqref{an inequality_2}. This simplification allows us to conveniently obtain a wide range of extreme points. While this method seems tedious, it is required by Lemma \ref{boun-extr}. Since each case involves the distinct behavior of each of its eigenvalue, a unified parametric treatment seems infeasible. Nevertheless, a more concise proof might be achieved by leveraging the one-to-one correspondence similar to what established in Lemma \ref{le:nu(1,5,3)}, as illustrated after this lemma. These arguments might be extended to the further characterization of two-qutrit extreme points of $\app_{3,3}$ with more than three distinct eigenvalues. We use an integrated three-dimensional Figure \ref{fig:umbrella} to represent the main results in this paper. 

\newpage
\begin{figure}[htbp]
    \centering
    \includegraphics[width=1.1\textwidth]{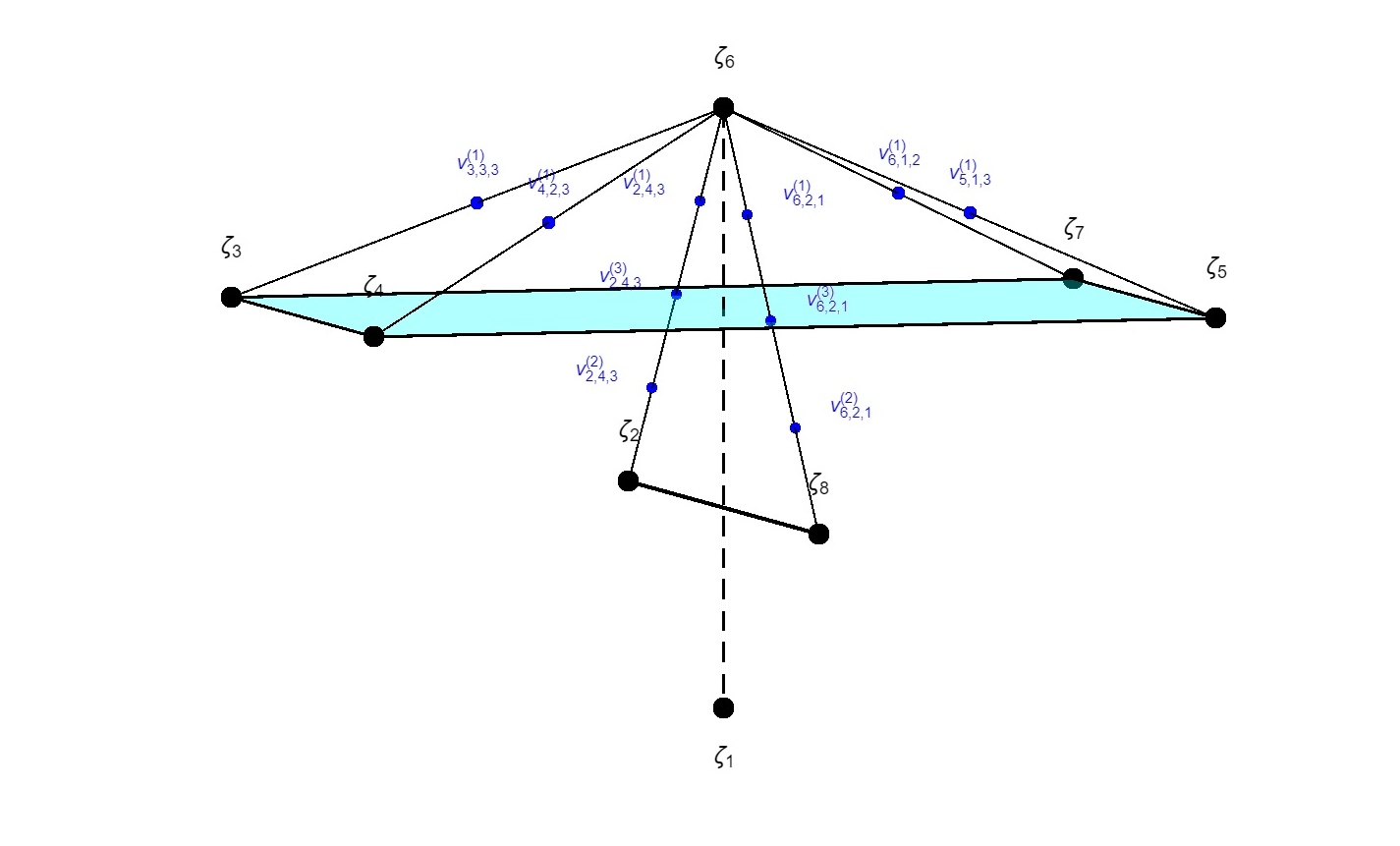}
    \caption{Umbrella Model. The black dots are extreme points with two distinct eigenvalues, while the blue dots are extreme points with three distinct eigenvalues, and there are no extreme points on the dashed line. The blue extreme points are obtained under the condition that $\lambda_6 \neq \lambda_7$, and therefore they all have superscripts. Between $\zeta_6$ and $\zeta_2$, and between $\zeta_6$ and $\zeta_8$, the eigenvalue $c$ has two different ranges, causing the extreme points to tend toward $\nu_{2,4,3}^{(3)}$ and $\nu_{6,2,1}^{(3)}$, respectively. This is a 3D schematic visualization of the hierarchical structure of the extreme points, which we term the 'Umbrella Model' due to its geometric shape.  }
    \label{fig:umbrella}
\end{figure}

\section{Conclusions}
\label{sec:con}

We have explicitly characterized all boundary and extreme points of two-qutrit AP states with exactly three distinct eigenvalues. Most of them contain parameters in intervals, and the extreme points at the two ends of intervals become some extreme points of exactly two eigenvalues. This work raises several open questions for future research. First, it is not hard to see that, if the points in the open subset of interval are separable, then so are the points at the ends of interval, by the fact that every PPT entangled state is detected by a non-decomposable entanglement witness. Is the converse true? Second, can we simplify the findings in this paper, especially those in the tables constructed for the expressions of extreme points and their limits? This is possible due to that many bounds in the tables appear similar. Third, can we extend the techniques and results of this paper to construct two-qutrit extreme AP states of more than three distinct eigenvalues? Fourth, what is the achievable upper bound of distinct eigenvalues of extreme two-qutrit AP states?  

\section*{Acknowledgments}
\label{sec:ack}	

Authors were supported by the NNSF of China (Grant No. 12471427), and the Fundamental Research Funds for the Central Universities (Grant Nos. ZG216S2110).

\appendix 
\section{ } 
\label{Appendix}

In the appendix we provide the detailed proof of Lemma \ref{le:3cbbbbbccc}, Theorem \ref{all_3eigen}, cases (i) and (iii)-(vii) in Table \ref{tab:performance1}, and other cases in Table \ref{tab:performance2}. We shall divide the proof into different section.

\subsection{The proof of Lemma \ref{le:3cbbbbbccc} }
\label{sec:abbcccccc}

(i) As mentioned before, the boundary point $\r \in \mathcal{AP}_{3,3}$ with $\lambda(\r) = (a, b, c, c, c, c, c, c, c)$ $(a>b>c>0)$ is an extreme point.

(ii) Given the boundary point 
\begin{eqnarray}
\label{eq:r_abbc}
    \lambda(\r) = (a, b, b, c, c, c, c, c, c) \quad (a>b>c>0),
\end{eqnarray}
we have $l_{1}(\lambda) = l_{2}(\lambda) = 0$ because $\lambda_6 = \lambda_7$. According to Lemma \ref{boun-extr}, to prove $\r$ in \eqref{eq:r_abbc} is an extreme point, it suffices to prove that the following equations have only the trivial solution:
\begin{eqnarray}
    t_{1} + 2t_{2} + 6t_{9} = 0, \ t_2=t_3, \ t_4 = \cdots =t_9, 
\end{eqnarray}
\begin{eqnarray}
\label{vec_u3_eq2b}
    \begin{bmatrix} 2t_{9} & t_{9} - t_{1} & t_{9} - t_{2} \\ t_{9} - t_{1} & 2t_{9} & t_{9} - t_{2} \\ t_{9} - t_{2} & t_{9} - t_{2} & 2t_{9} \end{bmatrix} \cdot \vec{u_3} = \begin{bmatrix} 0 \\ 0 \\ 0 \end{bmatrix}. 
\end{eqnarray}
Let $\vec{u_3} = \begin{bmatrix} u_{13} \\ u_{23} \\ u_{33} \end{bmatrix}$. Eq. \eqref{vec_u3_eq2b} can be reformulated as
\begin{eqnarray}
\label{vec_t_eq2b}
    \begin{bmatrix} -u_{23} & -u_{33} & 2u_{13} + u_{23} + u_{33} \\ -u_{13} & -u_{33} & u_{13} + 2u_{23} + u_{33} \\ 0 & -u_{13} -u_{23} & u_{13} + u_{23} + 2u_{33} \end{bmatrix} \cdot \begin{bmatrix} t_{1} \\ t_{2} \\ t_{9} \end{bmatrix} = \begin{bmatrix} 0 \\ 0 \\ 0 \end{bmatrix}.
\end{eqnarray}
Performing elementary matrix operations on the coefficient matrix of \eqref{vec_t_eq2b}, we obtain
\begin{align*}
     \begin{bmatrix} -u_{23} & -u_{33} & 2u_{13} + u_{23} + u_{33} \\ -u_{13} & -u_{33} & u_{13} + 2u_{23} + u_{33} \\ 0 & -u_{13} -u_{23} & u_{13} + u_{23} + 2u_{33} \end{bmatrix}
     &\xrightarrow[(1) + (3), (2) + (3)]{}
     \begin{bmatrix} -u_{23} & -u_{33} & 2u_{13} \\ -u_{13} & -u_{33} & 2u_{23} \\ 0 & -u_{13} -u_{23} & 2u_{33} \end{bmatrix}\\
     &\xrightarrow[\frac{1}{2} (3)]{}
     \begin{bmatrix} -u_{23} & -u_{33} & u_{13} \\ -u_{13} & -u_{33} & u_{23} \\ 0 & -u_{13} -u_{23} & u_{33} \end{bmatrix}. 
\end{align*}
We have two cases (a) and (b).

(a) $u_{33} \neq 0$. If at least one of $u_{13}$ and $u_{23}$ is not zero, the first and third columns of the coefficient matrix are linearly independent. If both $u_{13}$ and $u_{23}$ are zero, the second and third columns are linearly independent. Therefore, the rank of the coefficient matrix of $t_{1}, t_{2}, t_{9}$ is at least two.

(b) $u_{33} = 0$. The coefficient matrix above is $\begin{bmatrix} -u_{23} & 0 & u_{13} \\ -u_{13} & 0 & u_{23} \\ 0 & -u_{13} -u_{23} & 0 \end{bmatrix}$. Since $\vec{u_3}$ is a column vector of an orthogonal matrix, $u_{13}^2 + u_{23}^2 \neq 0$, which implies that the second and third columns are linearly independent if $u_{13} + u_{23} \neq 0$. Therefore, the rank of the coefficient matrix of $t_{1}, t_{2}, t_{9}$ is at least two unless $\vec{u_3} = \begin{bmatrix} \pm \frac{\sqrt2}{2} \\ \mp \frac{\sqrt2}{2}  \\ 0 \end{bmatrix}$. 

So in condition (a) and (b), using the rank-nullity theorem, the dimension of the solution space of $\begin{bmatrix} t_1 \\ t_2  \\ t_9 \end{bmatrix}$ is at most one. Hence the solution has the form $\begin{bmatrix} ka \\ kb  \\ kc \end{bmatrix}$ for any $k \in \mathbb{R}$, which means that $t_{1}, t_{2}, t_{9}$ must be all zero. In conclusion, if the boundary point $\r$ is unitarily equivalent to $\diag \{a, b, b, c, c, c, c, c, c\}$ in \eqref{eq:r_abbc} and $\vec{u_3} \neq \begin{bmatrix} \pm \frac{\sqrt2}{2} \\ \mp \frac{\sqrt2}{2}  \\ 0 \end{bmatrix}$ in \eqref{U-orth}, it is an extreme point. 

However, substituting $a, b, c$ and $\vec{u_3} = \begin{bmatrix} \pm \frac{\sqrt2}{2} \\ \mp \frac{\sqrt2}{2}  \\ 0 \end{bmatrix}$ into equation \eqref{U-orth} and computing the matrix multiplication, we obtain
\begin{eqnarray}
    \begin{bmatrix} 2c & c - a & c - b \\ c - a & 2c & c - b \\ c - b & c - b & 2c \end{bmatrix} \cdot \begin{bmatrix} \frac{\sqrt2}{2} \\ -\frac{\sqrt2}{2}  \\ 0 \end{bmatrix} = \begin{bmatrix} 0 \\ 0 \\ 0 \end{bmatrix}, 
\end{eqnarray}
which means that 
\begin{eqnarray}
    \begin{bmatrix} \frac{\sqrt2}{2}(a+c) \\ - \frac{\sqrt2}{2}(a+c)  \\ 0 \end{bmatrix} = \begin{bmatrix} 0 \\ 0 \\ 0 \end{bmatrix}.
\end{eqnarray}
This implies that $a = c = 0$, which leads to a contradiction. Similarly, this kind of discussion can be applied to $\vec{u_3} = \begin{bmatrix} - \frac{\sqrt2}{2} \\ \frac{\sqrt2}{2}  \\ 0 \end{bmatrix}$. Thus, $\vec{u_3} \neq \begin{bmatrix} \pm \frac{\sqrt2}{2} \\ \mp \frac{\sqrt2}{2}  \\ 0 \end{bmatrix}$ in \eqref{U-orth}. Therefore, If $\r$ is a boundary point of $\app_{3,3}$ with \eqref{eq:r_abbc}, then $\r$ is an extreme point of $\app_{3,3}$.

(iii) Given the boundary point 
\begin{eqnarray}
\label{eq:r_abbbc}
    \lambda(\r) = (a, b, b, b, c, c, c, c, c) \quad (a>b>c>0),
\end{eqnarray}
the condition implies that $l_{1}(\lambda) = l_{2}(\lambda) = 0$ because $\lambda_6 = \lambda_7$. According to Lemma \ref{boun-extr}, to prove $\r$ in \eqref{eq:r_abbbc} is an extreme point, it suffices to prove that the following equations have only the trivial solution:
\begin{eqnarray}
    t_{1} + 3t_{2} + 5t_{9} = 0, t_2 = t_3 = t_4, t_5 = \cdots =t_9, 
\end{eqnarray}
\begin{eqnarray}
\label{vec_u3_eq3b}
    \begin{bmatrix} 2t_{9} & t_{9} - t_{1} & t_{9} - t_{2} \\ t_{9} - t_{1} & 2t_{9} & t_{9} - t_{2} \\ t_{9} - t_{2} & t_{9} - t_{2} & 2t_{2} \end{bmatrix} \cdot \vec{u_3} = \begin{bmatrix} 0 \\ 0 \\ 0 \end{bmatrix}. 
\end{eqnarray}
Let $\vec{u_3} = \begin{bmatrix} u_{13} \\ u_{23} \\ u_{33} \end{bmatrix}$, \eqref{vec_u3_eq3b} can thus be reformulated as
\begin{eqnarray}
\label{vec_t_eq3b}
    \begin{bmatrix} -u_{23} & -u_{33} & 2u_{13} + u_{23} + u_{33} \\ -u_{13} & -u_{33} & u_{13} + 2u_{23} + u_{33} \\ 0 & -u_{13} -u_{23} + 2u_{33} & u_{13} + u_{23} \end{bmatrix} \cdot \begin{bmatrix} t_{1} \\ t_{2} \\ t_{9} \end{bmatrix} = \begin{bmatrix} 0 \\ 0 \\ 0 \end{bmatrix}.
\end{eqnarray}
Similarly, performing elementary matrix operations on the coefficient matrix of \eqref{vec_t_eq3b}, we obtain $\begin{bmatrix} -u_{23} & -u_{33} & u_{13} \\ -u_{13} & -u_{33} & u_{23} \\ 0 & -u_{13} -u_{23} + 2u_{33} & u_{33} \end{bmatrix}$. 

(a) $u_{33} \neq 0$. Similar to the last condition, the rank of the coefficient matrix is at least two.

(b) $u_{33} = 0$. The coefficient matrix above is $\begin{bmatrix} -u_{23} & 0 & u_{13} \\ -u_{13} & 0 & u_{23} \\ 0 & -u_{13} -u_{23} & 0 \end{bmatrix}$. Similarly, the rank of the coefficient matrix of $t_{1}, t_{2}, t_{9}$ is at least two unless $\vec{u_3} = \begin{bmatrix} \pm \frac{\sqrt2}{2} \\ \mp \frac{\sqrt2}{2}  \\ 0 \end{bmatrix}$.

So in condition (a) and (b), using the rank-nullity theorem, the dimension of the solution space of $\begin{bmatrix} t_1 \\ t_2  \\ t_9 \end{bmatrix}$ is at most one. Hence the solution has the form $\begin{bmatrix} ka \\ kb  \\ kc \end{bmatrix}$ for any $k \in \mathbb{R}$, which means that $t_{1}, t_{2}, t_{9}$ must be all zero. In conclusion, if the boundary point $\r$ is unitarily equivalent to $\diag \{a, b, b, b, c, c, c, c, c\}$ and $\vec{u_3} \neq \begin{bmatrix} \pm \frac{\sqrt2}{2} \\ \mp \frac{\sqrt2}{2}  \\ 0 \end{bmatrix}$ in \eqref{U-orth}, it is an extreme point. 

On the other hand, substituting $a, b, c$ and $\vec{u_3} = \begin{bmatrix} \pm \frac{\sqrt2}{2} \\ \mp \frac{\sqrt2}{2}  \\ 0 \end{bmatrix}$ into equation \eqref{U-orth} and computing the matrix multiplication, we obtain
\begin{eqnarray}
    \begin{bmatrix} 2c & c - a & c - b \\ c - a & 2c & c - b \\ c - b & c - b & 2b \end{bmatrix} \cdot \begin{bmatrix} \frac{\sqrt2}{2} \\ -\frac{\sqrt2}{2}  \\ 0 \end{bmatrix} = \begin{bmatrix} 0 \\ 0 \\ 0 \end{bmatrix}, 
\end{eqnarray}
which means that 
\begin{eqnarray}
    \begin{bmatrix} \frac{\sqrt2}{2}(a+c) \\ - \frac{\sqrt2}{2}(a+c)  \\ 0 \end{bmatrix} = \begin{bmatrix} 0 \\ 0 \\ 0 \end{bmatrix}.
\end{eqnarray}
This implies that $a = c = 0$, which leads to a contradiction. Similarly, this kind of discussion can be applied to $\vec{u_3} = \begin{bmatrix} - \frac{\sqrt2}{2} \\ \frac{\sqrt2}{2}  \\ 0 \end{bmatrix}$. Thus, $\vec{u_3} \neq \begin{bmatrix} \pm \frac{\sqrt2}{2} \\ \mp \frac{\sqrt2}{2}  \\ 0 \end{bmatrix}$ in \eqref{U-orth}. Therefore, If $\r$ is a boundary point of $\app_{3,3}$ with \eqref{eq:r_abbbc}, then $\r$ is an extreme point of $\app_{3,3}$.

(iv) Given the boundary point $\r$ in \eqref{eq:r_abbbbc}, it is an extreme point. 

(v) Given the boundary point 
\begin{eqnarray}
\label{eq:r_abbbbbc}
    \lambda(\r) = (a, b, b, b, b, b, c, c, c) \quad (a>b>c>0),
\end{eqnarray}
the condition implies that $l_{1}(\lambda) \neq l_{2}(\lambda)$ because $\lambda_6 \neq \lambda_7$. According to Lemma \ref{boun-extr}, there are two cases (v.1) $l_{1}(\lambda) = 0$ and (v.2) $l_{2}(\lambda) = 0$. We respectively discuss them as follows.

(v.1) Suppose $l_{1}(\lambda) = 0$. We consider the following linear equations in terms of $t_1, \cdots t_9$: 
\begin{eqnarray}
    t_{1} + 5t_{2} + 3t_{9} = 0, t_2 = \cdots t_6, t_7 = t_8 =t_9, 
\end{eqnarray}
\begin{eqnarray}
\label{vec_u3_eq5b}
    \begin{bmatrix} 2t_{9} & t_{9} - t_{1} & 0 \\ t_{9} - t_{1} & 2t_{9} & 0 \\ 0 & 0 & 2t_{2} \end{bmatrix} \cdot \vec{u_3} = \begin{bmatrix} 0 \\ 0 \\ 0 \end{bmatrix}. 
\end{eqnarray}
Let $\vec{u_3} = \begin{bmatrix} u_{13} \\ u_{23} \\ u_{33} \end{bmatrix}$, \eqref{vec_u3_eq5b} can thus be reformulated as
\begin{eqnarray}
\label{vec_t_eq5b}
    \begin{bmatrix} -u_{23} & 0 & 2u_{13} + u_{23} \\ -u_{13} & 0 & u_{13} + 2u_{23} \\ 0 & 2u_{33} & 0 \end{bmatrix} \cdot \begin{bmatrix} t_{1} \\ t_{2} \\ t_{9} \end{bmatrix} = \begin{bmatrix} 0 \\ 0 \\ 0 \end{bmatrix}.
\end{eqnarray}
Performing elementary matrix operations on the coefficient matrix of \eqref{vec_t_eq5b}, we obtain
\begin{align*}
     \begin{bmatrix} -u_{23} & 0 & 2u_{13} + u_{23} \\ -u_{13} & 0 & u_{13} + 2u_{23} \\ 0 & 2u_{33} & 0 \end{bmatrix}
     &\xrightarrow[(1) + (3)]{}
     \begin{bmatrix} -u_{23} & 0 & 2u_{13} \\ -u_{13} & 0 & 2u_{23} \\ 0 & 2u_{33} & 0 \end{bmatrix}\\
     &\xrightarrow[\frac{1}{2} (2), \frac{1}{2} (3)]{}
     \begin{bmatrix} -u_{23} & 0 & u_{13} \\ -u_{13} & 0 & u_{23} \\ 0 & u_{33} & 0 \end{bmatrix}. 
\end{align*}
We have  two cases (a) and (b).

(a) $u_{33} \neq 0$. If at least one of $u_{13}$ and $u_{23}$ is not zero, the first and second columns of the coefficient matrix are linearly independent. If both $u_{13}$ and $u_{23}$ are zero, the rank of the coefficient matrix of $t_{1}, t_{2}, t_{9}$ is one. Therefore, if $u_{33} \neq 0$ and $\vec{u_3} \neq \begin{bmatrix} 0 \\ 0 \\ \pm 1 \end{bmatrix}$ the rank of the coefficient matrix of $t_{1}, t_{2}, t_{9}$ is at least two.

(b) $u_{33} = 0$. The coefficient matrix above is $\begin{bmatrix} -u_{23} & 0 & u_{13} \\ -u_{13} & 0 & u_{23} \\ 0 & 0 & 0 \end{bmatrix}$. Therefore, if $u_{33} = 0$, $\vec{u_3} \neq \begin{bmatrix} \pm \frac{\sqrt2}{2} \\ \pm \frac{\sqrt2}{2}  \\ 0 \end{bmatrix}$ and $\vec{u_3} \neq \begin{bmatrix} \pm \frac{\sqrt2}{2} \\ \mp \frac{\sqrt2}{2}  \\ 0 \end{bmatrix}$, the first and third columns are linearly independent, and then the rank of the coefficient matrix of $t_{1}, t_{2}, t_{9}$ is at least two. 

So in case (a) and (b), using the rank-nullity theorem, the dimension of the solution space of $\begin{bmatrix} t_1 \\ t_2  \\ t_9 \end{bmatrix}$ is at most one. Hence the solution has the form $\begin{bmatrix} ka \\ kb  \\ kc \end{bmatrix}$ for any $k \in \mathbb{R}$, which $t_{1}, t_{2}, t_{9}$ must be all zero. In conclusion, if $l_{1}(\lambda) = 0$, and we exclude the following three exceptional cases in \eqref{U-orth},  
\begin{eqnarray}
\label{eq:v=three exceptional cases}    
\vec{u_3} \neq \begin{bmatrix} 0 \\ 0 \\ \pm 1 \end{bmatrix}, \ \begin{bmatrix} \pm \frac{\sqrt2}{2} \\ \mp \frac{\sqrt2}{2}  \\ 0 \end{bmatrix}, \ \text{and} \begin{bmatrix} \pm \frac{\sqrt2}{2} \\ \pm \frac{\sqrt2}{2}  \\ 0 \end{bmatrix},
\end{eqnarray}
then the boundary point $\r$ in \eqref{eq:r_abbbbbc} is an extreme point. 

Next we investigate the three exceptional cases in \eqref{eq:v=three exceptional cases}. First, substituting $a, b, c$ and $\vec{u_3} = \begin{bmatrix} 0 \\ 0  \\ \pm1 \end{bmatrix}$ into equation \eqref{U-orth} and computing the matrix multiplication, we obtain
\begin{eqnarray}
    \begin{bmatrix} 2c & c - a & 0 \\ c - a & 2c & 0 \\ 0 & 0 & 2b \end{bmatrix} \cdot \begin{bmatrix} 0 \\ 0  \\ 1 \end{bmatrix} = \begin{bmatrix} 0 \\ 0 \\ 0 \end{bmatrix}, 
\end{eqnarray}
which means that 
\begin{eqnarray}
    \begin{bmatrix} 0 \\ 0  \\ 2b \end{bmatrix} = \begin{bmatrix} 0 \\ 0 \\ 0 \end{bmatrix}.
\end{eqnarray}
This implies that $b = 0$, which leads to a contradiction. Similarly, this kind of discussion can be applied to $\vec{u_3} = \begin{bmatrix} 0 \\ 0  \\ -1 \end{bmatrix}$. Thus, $\vec{u_3} \neq \begin{bmatrix} 0 \\ 0 \\ \pm 1 \end{bmatrix}$ in \eqref{U-orth}. 

Next, we study the second exceptional point in \eqref{eq:v=three exceptional cases}. Substituting $a, b, c$ and $\vec{u_3} = \begin{bmatrix} \pm \frac{\sqrt2}{2} \\ \mp \frac{\sqrt2}{2}  \\ 0 \end{bmatrix}$ into equation \eqref{U-orth} and computing the matrix multiplication, we obtain
\begin{eqnarray}
    \begin{bmatrix} 2c & c - a & 0 \\ c - a & 2c & 0 \\ 0 & 0 & 2b \end{bmatrix} \cdot \begin{bmatrix} \frac{\sqrt2}{2} \\ -\frac{\sqrt2}{2}  \\ 0 \end{bmatrix} = \begin{bmatrix} 0 \\ 0 \\ 0 \end{bmatrix}, 
\end{eqnarray}
which means that 
\begin{eqnarray}
    \begin{bmatrix} \frac{\sqrt2}{2}(a+c) \\ - \frac{\sqrt2}{2}(a+c)  \\ 0 \end{bmatrix} = \begin{bmatrix} 0 \\ 0 \\ 0 \end{bmatrix}.
\end{eqnarray}
This implies that $a = c = 0$, which leads to a contradiction. Similarly, this kind of discussion can be applied to $\vec{u_3} = \begin{bmatrix} - \frac{\sqrt2}{2} \\ \frac{\sqrt2}{2}  \\ 0 \end{bmatrix}$. Thus, $\vec{u_3} \neq \begin{bmatrix} \pm \frac{\sqrt2}{2} \\ \mp \frac{\sqrt2}{2}  \\ 0 \end{bmatrix}$ in \eqref{U-orth}. 

We have excluded the existence of $\r$ in \eqref{eq:r_abbbbbc} in the first two exceptional cases in \eqref{eq:v=three exceptional cases}. It remains to study the third case. Substituting $a, b, c$ and $\vec{u_3} = \begin{bmatrix} \pm \frac{\sqrt2}{2} \\ \pm \frac{\sqrt2}{2}  \\ 0 \end{bmatrix}$ into equation \eqref{U-orth} and computing the matrix multiplication, we obtain
\begin{eqnarray}
\label{eq:third exceptional case}
    \begin{bmatrix} 2c & c - a & 0 \\ c - a & 2c & 0 \\ 0 & 0 & 2b \end{bmatrix} \cdot \begin{bmatrix} \frac{\sqrt2}{2} \\ \frac{\sqrt2}{2}  \\ 0 \end{bmatrix} = \begin{bmatrix} 0 \\ 0 \\ 0 \end{bmatrix}, 
\end{eqnarray}
which means that 
\begin{eqnarray}
    \begin{bmatrix} \frac{3\sqrt2}{2}c - \frac{\sqrt2}{2}a \\ - \frac{\sqrt2}{2}a + \frac{3\sqrt2}{2}c  \\ 0 \end{bmatrix} = \begin{bmatrix} 0 \\ 0 \\ 0 \end{bmatrix}.
\end{eqnarray}
This implies that 
\begin{eqnarray}
\label{eq:a=3c}    
a= 3c.
\end{eqnarray}
In the following, we verify whether both $L_1(\lambda)$ in \eqref{U-orth} and $L_2(\lambda)$ in \eqref{V-orth} are positive semi-definite matrices under this condition.

Firstly, $L_1(\lambda) = \begin{bmatrix} 2c & c - 3c & 0 \\ c - 3c & 2c & 0 \\ 0 & 0 & 2b \end{bmatrix} = \begin{bmatrix} 2c & -2c & 0 \\ -2c & 2c & 0 \\ 0 & 0 & 2b \end{bmatrix}$, which means that the characteristic polynomial of $L_1(\lambda)$ is 
\begin{eqnarray}
    \left|x I_3 - L_1(\lambda)\right| = \left|\begin{matrix}x-2c&2c&0\\2c&x-2c&0\\0&0&x-2b\\\end{matrix}\right| 
    = (x - 2b)(x - 4c)x, 
\end{eqnarray}
where $I_3$ is a $3 \times 3$ identity matrix. Thus, there exists an order-three real orthogonal matrix $U$ s.t. 
\begin{eqnarray}
    U^TL_1(\lambda)U = \left[\begin{matrix}2b&&\\&4c&\\&&0\\\end{matrix}\right]. 
\end{eqnarray}
So $L_1(\lambda)$ is a positive semi-definite matrix. 

Secondly, $L_2(\lambda) = \begin{bmatrix} 2c & c - 3c & c - b \\ c - 3c & 2b & 0 \\ c -b & 0 & 2b \end{bmatrix} = \begin{bmatrix} 2c & -2c & c - b\\ -2c & 2b & 0 \\ c - b & 0 & 2b \end{bmatrix}$, which means that the characteristic polynomial of $L_2(\lambda)$ is
\begin{align*}
    \left|x I_3 - L_2(\lambda)\right| &=
    \left|\begin{matrix}x-2c&2c&b-c\\2c&x-2b&0\\b-c&0&x-2b\\\end{matrix}\right|\\
    &=(x - 2b)[x^2 - 2(b + c)x -b^2 + 6bc - 5c^2],
\end{align*}
where $I_3$ is a $3 \times 3$ identity matrix. So the three eigenvalues of $L_2(\lambda)$ are $2b, \ b+c \pm \sqrt{2[(b-c)^2+2c^2]}$. Thus, there exists an order-three real orthogonal matrix $V$ s.t. 
\begin{eqnarray}
    V^TL_2(\lambda)V = \left[\begin{matrix}2b&&\\&b+c + \sqrt{2[(b-c)^2+2c^2]}&\\&&b+c - \sqrt{2[(b-c)^2+2c^2]}\\\end{matrix}\right]. 
\end{eqnarray}
In this case, $L_2(\lambda)$ is a positive semi-definite matrix if and only if $b+c - \sqrt{2[(b-c)^2+2c^2]} \geq 0$ that means $c \le b \le 5c$. But we also have $a>b>c>0$ and $a=3c$ in \eqref{eq:a=3c}. So $L_2(\lambda)$ a positive semi-definite matrix. To conclude, we have shown that $\r$ in \eqref{eq:r_abbbbbc} is a two-qutrit AP state. It remains to show whether the boundary $\r$ is an extreme point. 

Substituting the specific $\vec{u_3}$ into the above equation \eqref{vec_u3_eq5b}, we obtain the following linear equations in terms of $t_1, \cdots t_9$, 
\begin{eqnarray}
    t_{1} + 5t_{2} + 3t_{9} = 0, t_2 = \cdots t_6, t_7 = t_8 =t_9, 
\end{eqnarray}
\begin{eqnarray}
    \begin{bmatrix} 2t_{9} & t_{9} - t_{1} & 0 \\ t_{9} - t_{1} & 2t_{9} & 0 \\ 0 & 0 & 2t_{2} \end{bmatrix} \cdot \begin{bmatrix} \frac{\sqrt2}{2} \\ \frac{\sqrt2}{2}  \\ 0 \end{bmatrix} = \begin{bmatrix} 0 \\ 0 \\ 0 \end{bmatrix}. 
\end{eqnarray}
Thus $t_1 = 3t_9, \ t_2 = -\frac{6}{5}t_9$ i.e. the solution has a form of $\begin{bmatrix} 15 \\ -6  \\ 5 \end{bmatrix}k, \ k \in \mathbb{R}$. Using Lemma \ref{boun-extr}, the state $\r$ in \eqref{eq:r_abbbbbc} is not an extreme point. To give a more clear counter-example, we assume
\begin{eqnarray}
\label{eq:alpha}
    \alpha = \diag\{3c+15\epsilon, b-6\epsilon, b-6\epsilon, b-6\epsilon, b-6\epsilon, b-6\epsilon, c+5\epsilon, c+5\epsilon, c+5\epsilon\},
\end{eqnarray}
\begin{eqnarray}
\label{eq:beta}
    \beta = \diag\{3c-15\epsilon, b+6\epsilon, b+6\epsilon, b+6\epsilon, b+6\epsilon, b+6\epsilon, c-5\epsilon, c-5\epsilon, c-5\epsilon\}, 
\end{eqnarray}
where $\epsilon$ is sufficiently small s.t. the diagonal elements (eigenvalues) of $\alpha, \ \beta$ above are still arranged in descending order and consist of three different eigenvalues. According to 
Lemma \ref{AP3,3_judge_positive_semi-definite}, we have
$\alpha, \ \beta \in \app_{3,3}$. For the state $\alpha$, the characteristic polynomial of $L_1(\lambda)$ is 
\begin{align*}
     \left|x I_3 - L_1(\lambda)\right| 
     &= \left|\begin{matrix}x-2(c+5\epsilon)&2(c+5\epsilon)&0\\2(c+5\epsilon)&x-2(c+5\epsilon)&0\\0&0&x-2(b-6\epsilon)\\\end{matrix}\right| ,
\end{align*}
where $I_3$ is a $3 \times 3$ identity matrix. Thus, there exists an order-three real orthogonal matrix $U$ s.t. 
\begin{eqnarray}
    U^TL_1(\lambda)U = \left[\begin{matrix}2(b-6\epsilon)&&\\&4(c+5\epsilon)\epsilon&\\&&0\\\end{matrix}\right]. 
\end{eqnarray}
Then the characteristic polynomial of $L_2(\lambda)$ is 
\begin{align*}
    \left|x I_3 - L_2(\lambda)\right| 
    &=\left|\begin{matrix}x-2(c+5\epsilon)&2(c+5\epsilon)&b-c-11\epsilon\\2(c+5\epsilon)&x-2(b-6\epsilon)&0\\b-c-11\epsilon&0&x-2(b-6\epsilon)\\\end{matrix}\right|,
\end{align*}
where $I_3$ is a $3 \times 3$ identity matrix. Thus, there exists an order-three real orthogonal matrix $V$ s.t. 
\begin{eqnarray}
    V^TL_2(\lambda)V = \left[\begin{matrix}\lambda_1&&\\&\lambda_2&\\&&\lambda_3\\\end{matrix}\right],
\end{eqnarray}
where $\lambda_1, \lambda_2, \lambda_3$ are 
\begin{align*}
    &2(b-6\epsilon), \\
    &b+c-\epsilon + \sqrt{2[(b-c)^2+2c^2]+2(171\epsilon^2-22b\epsilon+42c\epsilon)}, \\
    &b+c-\epsilon - \sqrt{2[(b-c)^2+2c^2]+2(171\epsilon^2-22b\epsilon+42c\epsilon)},
\end{align*}
respectively. Therefore, by appropriately choosing a sufficiently small $\epsilon$, $L_1(\lambda), L_2(\lambda)$ are both positive semi-definite matrices. For the state $\beta$ in \eqref{eq:beta}, we simply change the sign in front of $\epsilon$ and the rest of the proof follows similarly. $L_1(\lambda), L_2(\lambda)$ are both positive semi-definite matrices, too. 

Up to unitary similarity, we have $\r = \frac{1}{2}(\alpha+\beta)$ by \eqref{eq:alpha} and \eqref{eq:beta}. The two-qutrit AP state $\r$ in the third exceptional case in \eqref{eq:v=three exceptional cases} is not an extreme point. 

(v.2) Suppose $l_2(\lambda) = 0$. We consider the following linear equations in terms of $t_1, \cdots t_9$: 
\begin{eqnarray}
    t_{1} + 5t_{2} + 3t_{9} = 0, t_2 = \cdots t_6, t_7 = t_8 =t_9, 
\end{eqnarray}
\begin{eqnarray}
\label{vec_u3_eq5b'}
    \begin{bmatrix} 2t_{9} & t_{9} - t_{1} & t_{9} - t_{2} \\ t_{9} - t_{1} & 2t_{2} & 0 \\ t_{9} - t_{2} & 0 & 2t_{2} \end{bmatrix} \cdot \vec{v_3} = \begin{bmatrix} 0 \\ 0 \\ 0 \end{bmatrix}. 
\end{eqnarray}
Setting $\vec{v_3} = \begin{bmatrix} v_{13} \\ v_{23} \\ v_{33} \end{bmatrix}$, \eqref{vec_u3_eq5b'} can thus be reformulated as
\begin{eqnarray}
\label{vec_t_eq5b'}
    \begin{bmatrix} -v_{23} & -v_{33} & 2v_{13} + v_{23} + v_{33} \\ -v_{13} & 2v_{23} & v_{13} \\ 0 & -v_{13} + 2v_{33} & v_{13} \end{bmatrix} \cdot \begin{bmatrix} t_{1} \\ t_{2} \\ t_{9} \end{bmatrix} = \begin{bmatrix} 0 \\ 0 \\ 0 \end{bmatrix}.
\end{eqnarray}
Performing elementary matrix operations on the coefficient matrix of \eqref{vec_t_eq5b'}, we obtain
\begin{align*}
     \begin{bmatrix} -v_{23} & -v_{33} & 2v_{13} + v_{23} + v_{33} \\ -v_{13} & 2v_{23} & v_{13} \\ 0 & -v_{13} + 2v_{33} & v_{13} \end{bmatrix}
     &\xrightarrow[(1) + (3), (2) + (3)]{}
     \begin{bmatrix} -v_{23} & -v_{33} & 2v_{13} \\ -v_{13} & 2v_{23} & 2v_{23} \\ 0 & -v_{13} + 2v_{33} & 2v_{33} \end{bmatrix}\\
     &\xrightarrow[\frac{1}{2} (2), \frac{1}{2} (3)]{}
     \begin{bmatrix} -v_{23} & -v_{33} & v_{13} \\ -v_{13} & 2v_{23} & v_{23} \\ 0 & -v_{13} + v_{33} & 2v_{33} \end{bmatrix}. 
\end{align*}
(a) $v_{33} \neq 0$. If at least one of $v_{13}$ and $v_{23}$ is not zero, the first and third columns of the coefficient matrix are linearly independent. If both $v_{13}$ and $v_{23}$ are zero i.e. the coefficient matrix is $\begin{bmatrix} 0 & -v_{33} & 0 \\ 0 & 0 & 0 \\ 0 & v_{33} & 2v_{33} \end{bmatrix}$, the second and third columns of the coefficient matrix are linearly independent. Therefore,  the rank of the coefficient matrix of $t_{1}, t_{2}, t_{9}$ is at least two.

(b) $u_{33} = 0$. The coefficient matrix above is $\begin{bmatrix} -v_{23} & 0 & v_{13} \\ -v_{13} & 2v_{23} & v_{23} \\ 0 & -v_{13} & 0 \end{bmatrix}$. Obviously, the rank of the coefficient matrix of $t_{1}, t_{2}, t_{9}$ is at least two. 

So in condition (a) and (b), using the rank-nullity theorem, the dimension of the solution space of $\begin{bmatrix} t_1 \\ t_2  \\ t_9 \end{bmatrix}$ is at most one. Hence the solution has the form $\begin{bmatrix} ka \\ kb  \\ kc \end{bmatrix}$ for any $k \in \mathbb{R}$, which means $t_{1}, t_{2}, t_{9}$ must be all zero. In conclusion, if $l_{2}(\lambda) = 0$ and the boundary point $\r$ is unitarily equivalent to $\diag \{a, b, b, b, b, b, c, c, c\}$, $\r$ is an extreme point.

(vi) Given the boundary point 
\begin{eqnarray}
\label{eq:r_a6bc}
    \lambda(\r) = (a, b, b, b, b, b, b, c, c) \quad (a>b>c>0),
\end{eqnarray}
we consider the equations
\begin{eqnarray}
    t_{1} + 6t_{2} + 2t_{9} = 0, t_2 \cdots = t_7, t_8 =t_9, 
\end{eqnarray}
\begin{eqnarray}
\label{vec_u3_eq6b}
    \begin{bmatrix} 2t_{9} & t_{9} - t_{1} & 0 \\ t_{9} - t_{1} & 2t_{2} & 0 \\ 0 & 0 & 2t_{2} \end{bmatrix} \cdot \begin{bmatrix} u_{13} \\ u_{23} \\ u_{33} \end{bmatrix} = \begin{bmatrix} 0 \\ 0 \\ 0 \end{bmatrix},
\end{eqnarray}
it can be reformulated as
\begin{eqnarray}
\label{vec_t_eq6b}
    \begin{bmatrix} -u_{23} & 0 & 2u_{13} + u_{23} \\ -u_{13} & 2u_{23} & u_{13} \\ 0 & 2u_{33} & 0 \end{bmatrix} \cdot \begin{bmatrix} t_{1} \\ t_{2} \\ t_{9} \end{bmatrix} = \begin{bmatrix} 0 \\ 0 \\ 0 \end{bmatrix}.
\end{eqnarray}
Similarly, performing elementary matrix operations on the coefficient matrix of \eqref{vec_t_eq6b}, we obtain $\begin{bmatrix} -u_{23} & 0 & u_{13} \\ -u_{13} & 2u_{23} & u_{23} \\ 0 & 2u_{33} & u_{33} \end{bmatrix}$. 

In this condition, if the boundary point $\r$ is unitarily equivalent to $\diag \{a, b, b, b, b, b, b, c, c\}$ and $\vec{u_3} \neq \begin{bmatrix} 0 \\ 0  \\ \pm 1 \end{bmatrix}$ in \eqref{U-orth}, it is an extreme point. 

However, substituting $a, b, c$ and $\vec{u_3} = \begin{bmatrix} 0 \\ 0  \\ \pm 1 \end{bmatrix}$ into equation \eqref{U-orth} and computing the matrix multiplication, we obtain $b = 0$, which leads to a contradiction. Therefore, If $\r$ is a boundary point of $\app_{3,3}$ with \eqref{eq:r_a6bc}, then $\r$ is an extreme point of $\app_{3,3}$.

(vii) Given the boundary point 
\begin{eqnarray}
\label{eq:r_a7bc}
    \lambda(\r) = (a, b, b, b, b, b, b, b, c) \quad (a>b>c>0),
\end{eqnarray}
we consider the equation
\begin{eqnarray}
    t_{1} + 7t_{2} + t_{9} = 0, t_2 \cdots = t_8,
\end{eqnarray}
\begin{eqnarray}
\label{vec_u3_eq7b}
    \begin{bmatrix} 2t_{9} & t_{2} - t_{1} & 0 \\ t_{2} - t_{1} & 2t_{2} & 0 \\ 0 & 0 & 2t_{2} \end{bmatrix} \cdot \begin{bmatrix} u_{13} \\ u_{23} \\ u_{33} \end{bmatrix} = \begin{bmatrix} 0 \\ 0 \\ 0 \end{bmatrix},
\end{eqnarray}
it can be reformulated as
\begin{eqnarray}
\label{vec_t_eq7b}
    \begin{bmatrix} -u_{23} & u_{23} & 2u_{13} \\ -u_{13} & u_{13} + 2u_{23} & 0 \\ 0 & 2u_{33} & 0 \end{bmatrix} \cdot \begin{bmatrix} t_{1} \\ t_{2} \\ t_{9} \end{bmatrix} = \begin{bmatrix} 0 \\ 0 \\ 0 \end{bmatrix}.
\end{eqnarray}
Similarly, performing elementary matrix operations on the coefficient matrix of \eqref{vec_t_eq7b}, we obtain $\begin{bmatrix} -u_{23} & 0 & 2u_{13} \\ -u_{13} &  2u_{23} & 0 \\ 0 & 2u_{33} & 0 \end{bmatrix}$. 
In this condition, if the boundary point $\r$ is unitarily equivalent to $\diag \{a, b, b, b, b, b, b, b, c\}$ and $\vec{u_3} \neq \begin{bmatrix} 0 \\ 0  \\ \pm 1 \end{bmatrix}$ in \eqref{U-orth}, it is an extreme point. Substituting $a, b, c$ and $\vec{u_3} = \begin{bmatrix} 0 \\ 0  \\ \pm 1 \end{bmatrix}$ into equation \eqref{U-orth} and computing the matrix multiplication, we obtain $b = 0$, which leads to a contradiction. Therefore, If $\r$ is a boundary point of $\app_{3,3}$ with \eqref{eq:r_a7bc}, then $\r$ is an extreme point of $\app_{3,3}$.

To conclude, for full-rank $\r \in \mathcal{AP}_{3,3}$ with three distinct eigenvalues whose largest eigenvalue is simple, if it is a boundary point of $\app_{3,3}$, then it is an extreme point of $\app_{3,3}$ unless $\lambda(\r) = (3c, b, b, b, b, b, c, c, c) \quad (3c>b>c>0)$.

\subsection{The proof of Theorem \ref{all_3eigen} }
\label{sec:all exe points of 3 distinct eigenvalues}
\begin{proof}
The technique described above can be applied directly to show that the rank of the coefficient matrix is at least two. In a few exceptional cases where it is less than two, the results would imply that $a$, $b$, or $c$ must be zero, which leads to a contradiction. All conditions are as follows. Without loss of generality, we explore the diagonal states. That is, we shall respectively analyze the cases $a=1,2,...,7$ in the following. 

(i) $\nu_{1,k,8-k}$ where $k=1, \cdots , 7$. The boundary point $\r$ is an extreme point of $\app_{3,3}$ unless $\r $ is unitarily equivalent to $ \nu_{1,5,3}$ in Lemma \ref{le:nu(1,5,3)}. 

(ii) $\nu_{2,k,7-k}$ where $k=1, \cdots , 6$. The spectrum vector of $\r$ is 
\begin{eqnarray}
\label{eq:rho216}
    \lambda(\r) = (a, a, b, c, c, c, c, c, c) \quad (a>b>c>0).
\end{eqnarray}
Using an approach similar to that of Lemma \ref{le:nu(1,5,3)}, we obtain a linear system in $t_1$, $t_3$, and $t_9$, whose coefficient matrix has rank at least two, unless $\vec{u_3} = \begin{bmatrix} 0 \\ \pm \frac{\sqrt2}{2}  \\ \mp \frac{\sqrt2}{2} \end{bmatrix}$. But this implies that $b = c = 0$, which leads to a contradiction. Consequently, coefficient matrix always has rank at least two and then only the trivial solution exists. Thus for $\r$ in \eqref{eq:rho216}, the conclusion of the theorem holds.

\begin{eqnarray}
\label{eq:rho225}
    \lambda(\r) = (a, a, b, b, c, c, c, c, c) \quad (a>b>c>0).
\end{eqnarray}
Similarly, we obtain a linear system in $t_1$, $t_3$, and $t_9$, whose coefficient matrix always has rank at least two. And then only the trivial solution exists. Thus for $\r$ in \eqref{eq:rho225}, the conclusion of the theorem holds.

\begin{eqnarray}
\label{eq:rho234}
    \lambda(\r) = (a, a, b, b, b, c, c, c, c) \quad (a>b>c>0).
\end{eqnarray}
Similarly, we obtain a linear system in $t_1$, $t_3$, and $t_9$, whose coefficient matrix always has rank at least two. And then only the trivial solution exists. Thus for $\r$ in \eqref{eq:rho234}, the conclusion of the theorem holds.

\begin{eqnarray}
\label{eq:rho243}
    \lambda(\r) = (a, a, b, b, b, b, c, c, c) \quad (a>b>c>0).
\end{eqnarray}
Firstly, suppose $l_{1}(\lambda) = 0$. Similarly, we obtain a linear system in $t_1$, $t_3$, and $t_9$, whose coefficient matrix always has rank at least two. And then only the trivial solution exists. Secondly, suppose $l_{2}(\lambda) = 0$. Similarly, we obtain a linear system in $t_1$, $t_3$, and $t_9$, whose coefficient matrix has rank at least two, unless $\vec{u_3} = \begin{bmatrix} 0 \\ \pm \frac{\sqrt2}{2}  \\ \mp \frac{\sqrt2}{2} \end{bmatrix}$. But this implies that $b = 0$, which leads to a contradiction. Consequently, coefficient matrix always has rank at least two and then only the trivial solution exists. Thus for $\r$ in \eqref{eq:rho243}, the conclusion of the theorem holds.

\begin{eqnarray}
\label{eq:rho252}
    \lambda(\r) = (a, a, b, b, b, b, b, c, c) \quad (a>b>c>0).
\end{eqnarray}
Similarly, we obtain a linear system in $t_1$, $t_3$, and $t_9$, whose coefficient matrix always has rank at least two. And then only the trivial solution exists. Thus for $\r$ in \eqref{eq:rho252}, the conclusion of the theorem holds.

\begin{eqnarray}
\label{eq:rho261}
    \lambda(\r) = (a, a, b, b, b, b, b, b, c) \quad (a>b>c>0).
\end{eqnarray}
Similarly, we obtain a linear system in $t_1$, $t_3$, and $t_9$, whose coefficient matrix has rank at least two, unless $\vec{u_3} = \begin{bmatrix} 0 \\ \pm \frac{\sqrt2}{2}  \\ \mp \frac{\sqrt2}{2} \end{bmatrix}$. This implies that $b = 0$, which leads to a contradiction. Consequently, coefficient matrix always has rank at least two and then only the trivial solution exists. Thus for $\r$ in \eqref{eq:rho261}, the conclusion of the theorem holds.

(iii) $\nu_{3,k,6-k}$ where $k=1, \cdots , 5$. Firstly we consider the state with $k=1$,
\begin{eqnarray}
\label{eq:rho315}
    \lambda(\r) = (a, a, a, b, c, c, c, c, c) \quad (a>b>c>0).
\end{eqnarray}
Similarly, we obtain a linear system in $t_1$, $t_4$, and $t_9$, whose coefficient matrix has rank at least two, unless $\vec{u_3} = \begin{bmatrix} \pm \frac{\sqrt2}{2} \\ \mp \frac{\sqrt2}{2}  \\ 0 \end{bmatrix}$. It implies that $a = c = 0$, which leads to a contradiction. Consequently, coefficient matrix always has rank at least two and then only the trivial solution exists. Thus for $\r$ in \eqref{eq:rho315}, the conclusion of the theorem holds. Next we consider the state with $k=2$,
\begin{eqnarray}
\label{eq:rho324}
    \lambda(\r) = (a, a, a, b, b, c, c, c, c) \quad (a>b>c>0).
\end{eqnarray}
Similarly, we obtain a linear system in $t_1$, $t_4$, and $t_9$, whose coefficient matrix always has rank at least two. And then only the trivial solution exists. Thus for $\r$ in \eqref{eq:rho324}, the conclusion of the theorem holds. Third we consider the state with $k=3$,
\begin{eqnarray}
\label{eq:rho333}
    \lambda(\r) = (a, a, a, b, b, b, c, c, c) \quad (a>b>c>0).
\end{eqnarray}
Firstly, suppose $l_{1}(\lambda) = 0$. Similarly, we obtain a linear system in $t_1$, $t_4$, and $t_9$, whose coefficient matrix always has rank at least two. And then only the trivial solution exists. Secondly, suppose $l_{2}(\lambda) = 0$. Similarly, we obtain a linear system in $t_1$, $t_4$, and $t_9$, whose coefficient matrix has rank at least two, unless $\vec{u_3} = \begin{bmatrix} 0 \\ \pm \frac{\sqrt2}{2}  \\ \mp \frac{\sqrt2}{2} \end{bmatrix}$. But this implies that $a = b = 0$, which leads to a contradiction. Consequently, coefficient matrix always has rank at least two and then only the trivial solution exists. Thus for $\r$ in \eqref{eq:rho333}, the conclusion of the theorem holds. Fourth we consider the state with $k=4$,
\begin{eqnarray}
\label{eq:rho342}
    \lambda(\r) = (a, a, a, b, b, b, b, c, c) \quad (a>b>c>0).
\end{eqnarray}
Similarly, we obtain a linear system in $t_1$, $t_4$, and $t_9$, whose coefficient matrix always has rank at least two. And then only the trivial solution exists. Thus for $\r$ in \eqref{eq:rho342}, the conclusion of the theorem holds. Fifth we consider the state with $k=5$,
\begin{eqnarray}
\label{eq:rho351}
    \lambda(\r) = (a, a, a, b, b, b, b, b, c) \quad (a>b>c>0).
\end{eqnarray}
Similarly, we obtain a linear system in $t_1$, $t_4$, and $t_9$, whose coefficient matrix has rank at least two, unless $\vec{u_3} = \begin{bmatrix} 0 \\ \pm \frac{\sqrt2}{2}  \\ \mp \frac{\sqrt2}{2} \end{bmatrix}$. It implies $a = b = 0$, which leads to a contradiction. Consequently, coefficient matrix always has rank at least two and then only the trivial solution exists. Thus for $\r$ in \eqref{eq:rho351}, the conclusion of the theorem holds.

(vi) $\nu_{4,k,5-k}$ where $k=1, \cdots , 4$.
\begin{eqnarray}
\label{eq:rho414}
    \lambda(\r) = (a, a, a, a, b, c, c, c, c) \quad (a>b>c>0).
\end{eqnarray}
Similarly, we obtain a linear system in $t_1$, $t_5$, and $t_9$, whose coefficient matrix always has rank at least two. And then only the trivial solution exists. Thus for $\r$ in \eqref{eq:rho414}, the conclusion of the theorem holds.

\begin{eqnarray}
\label{eq:rho423}
    \lambda(\r) = (a, a, a, a, b, b, c, c, c) \quad (a>b>c>0).
\end{eqnarray}
Firstly, suppose $l_{1}(\lambda) = 0$. Similarly, we obtain a linear system in $t_1$, $t_5$, and $t_9$, whose coefficient matrix has rank at least two, unless $\vec{u_3} = \begin{bmatrix} \pm \frac{\sqrt2}{2} \\ \mp \frac{\sqrt2}{2}  \\ 0 \end{bmatrix}$. But this implies that $a = c = 0$, which leads to a contradiction. Consequently, coefficient matrix always has rank at least two and then only the trivial solution exists. Secondly, suppose $l_{2}(\lambda) = 0$. Similarly, we obtain a linear system in $t_1$, $t_5$, and $t_9$, whose coefficient matrix always has rank at least two. And then only the trivial solution exists. Thus for $\r$ in \eqref{eq:rho423}, the conclusion of the theorem holds.

\begin{eqnarray}
\label{eq:rho432}
    \lambda(\r) = (a, a, a, a, b, b, b, c, c) \quad (a>b>c>0).
\end{eqnarray}
Similarly, we obtain a linear system in $t_1$, $t_5$, and $t_9$, whose coefficient matrix always has rank at least two. And then only the trivial solution exists. Thus for $\r$ in \eqref{eq:rho432}, the conclusion of the theorem holds.

\begin{eqnarray}
\label{eq:rho441}
    \lambda(\r) = (a, a, a, a, b, b, b, b, c) \quad (a>b>c>0).
\end{eqnarray}
Similarly, we obtain a linear system in $t_1$, $t_5$, and $t_9$, whose coefficient matrix always has rank at least two. And then only the trivial solution exists. Thus for $\r$ in \eqref{eq:rho441}, the conclusion of the theorem holds.

(v) $\nu_{5,k,4-k}$ where $k=1,2,3$.
\begin{eqnarray}
\label{eq:rho513}
    \lambda(\r) = (a, a, a, a, a, b, c, c, c) \quad (a>b>c>0).
\end{eqnarray}
Firstly, suppose $l_{1}(\lambda) = 0$. Similarly, we obtain a linear system in $t_1$, $t_6$, and $t_9$, whose coefficient matrix always has rank at least two. And then only the trivial solution exists. Secondly, suppose $l_{2}(\lambda) = 0$. Similarly, we obtain a linear system in $t_1$, $t_6$, and $t_9$, whose coefficient matrix always has rank at least two. And then only the trivial solution exists. Thus for $\r$ in \eqref{eq:rho513}, the conclusion of the theorem holds.

\begin{eqnarray}
\label{eq:rho522}
    \lambda(\r) = (a, a, a, a, a, b, b, c, c) \quad (a>b>c>0).
\end{eqnarray}
Similarly, we obtain a linear system in $t_1$, $t_6$, and $t_9$, whose coefficient matrix always has rank at least two. And then only the trivial solution exists. Thus for $\r$ in \eqref{eq:rho522}, the conclusion of the theorem holds.

\begin{eqnarray}
\label{eq:rho531}
    \lambda(\r) = (a, a, a, a, a, b, b, b, c) \quad (a>b>c>0).
\end{eqnarray}
Similarly, we obtain a linear system in $t_1$, $t_6$, and $t_9$, whose coefficient matrix has rank at least two, unless $\vec{u_3} = \begin{bmatrix} 0 \\ \pm \frac{\sqrt2}{2}  \\ \mp \frac{\sqrt2}{2} \end{bmatrix}$. But this implies that $b = 0$, which leads to a contradiction. Consequently, coefficient matrix always has rank at least two and then only the trivial solution exists. Thus for $\r$ in \eqref{eq:rho531}, the conclusion of the theorem holds.

(vi) $\nu_{6,k,3-k}$ where $k=1,2$.
\begin{eqnarray}
\label{eq:rho612}
    \lambda(\r) = (a, a, a, a, a, a, b, c, c) \quad (a>b>c>0).
\end{eqnarray}
Firstly, suppose $l_{1}(\lambda) = 0$. Similarly, we obtain a linear system in $t_1$, $t_7$, and $t_9$, whose coefficient matrix has rank at least two, unless $\vec{u_3} = \begin{bmatrix} 0 \\ 0  \\ \pm 1 \end{bmatrix}$. But this implies that $a = 0$, which leads to a contradiction. Consequently, coefficient matrix always has rank at least two and then only the trivial solution exists. Secondly, suppose $l_{2}(\lambda) = 0$. Similarly, we obtain a linear system in $t_1$, $t_7$, and $t_9$, whose coefficient matrix always has rank at least two. And then only the trivial solution exists. Thus for $\r$ in \eqref{eq:rho612}, the conclusion of the theorem holds.

\begin{eqnarray}
\label{eq:rho621}
    \lambda(\r) = (a, a, a, a, a, a, b, b, c) \quad (a>b>c>0).
\end{eqnarray}
Firstly, suppose $l_{1}(\lambda) = 0$. Similarly, we obtain a linear system in $t_1$, $t_7$, and $t_9$, whose coefficient matrix has rank at least two, unless $\vec{u_3} = \begin{bmatrix} 0 \\ 0  \\ \pm 1 \end{bmatrix}$. But this implies that $a = 0$, which leads to a contradiction. Consequently, coefficient matrix always has rank at least two and then only the trivial solution exists. Secondly, suppose $l_{2}(\lambda) = 0$. Similarly, we obtain a linear system in $t_1$, $t_7$, and $t_9$, whose coefficient matrix has rank at least two, unless $\vec{u_3} = \begin{bmatrix} 0 \\ \pm \frac{\sqrt2}{2}  \\ \mp \frac{\sqrt2}{2} \end{bmatrix}$. But this implies that $a = 0$, which leads to a contradiction. Consequently, coefficient matrix always has rank at least two and then only the trivial solution exists. Thus for $\r$ in \eqref{eq:rho612}, the conclusion of the theorem holds.

(vii) $\nu_{7,1,1}$.
\begin{eqnarray}
\label{eq:rho711}
    \lambda(\r) = (a, a, a, a, a, a, a, b, c) \quad (a>b>c>0).
\end{eqnarray}
Similarly, we obtain a linear system in $t_1$, $t_7$, and $t_9$, whose coefficient matrix has rank at least two, unless $\vec{u_3} = \begin{bmatrix} 0 \\ 0  \\ \pm 1 \end{bmatrix}$. But this implies that $a = 0$, which leads to a contradiction. Consequently, coefficient matrix always has rank at least two and then only the trivial solution exists. Thus for $\r$ in \eqref{eq:rho711}, the conclusion of the theorem holds.

Therefore, Theorem \ref{all_3eigen} is proved. 
\end{proof}

\subsection{Cases (i) and (iii)-(vii) in the proof of Table \ref{tab:performance1}}
\label{(i)-(iii)-(vii)_of_tab1}

(i) We consider the boundary point 
\begin{eqnarray}
\label{nu1,1,7}
    \nu_{1,1,7} = \diag \{a, b, c\cdots, c\} = \diag \{1-b-7c, b, c\cdots, c\}. 
\end{eqnarray}
This condition implies that $l_{1}(\lambda) = l_{2}(\lambda) = 0$ because $\lambda_6 = \lambda_7$. We have
\begin{eqnarray}
\label{eq:nu1,1,7}
    \left|\begin{matrix}2c&b+8c-1&c-b\\b+8c-1&2c&0\\c-b&0&2c\\\end{matrix}\right|
    = -2 c (2 b^2 + 2 b (7 c - 1) + 61 c^2 - 16 c + 1)
    = 0.
\end{eqnarray}
Using \eqref{nu1,1,7}, we obtain
\begin{eqnarray}
    b \in (c,\frac{1-7c}{2}), i.e.\ c \in (0, \frac{1}{9}).
\end{eqnarray}
The solutions $b$ of \eqref{eq:nu1,1,7} are 
\begin{eqnarray}
    b_1 = \frac{1-7c+\sqrt{-73 c^2 + 18 c - 1}}{2}, \ 
    b_2 = \frac{1-7c-\sqrt{-73 c^2 + 18 c - 1}}{2}.
\end{eqnarray}
So we have 
\begin{eqnarray}
\label{c_range1}
    0.084542\cdots = \frac{9-2\sqrt{2}}{73} \le c < \frac{1}{9} = 0.111111\cdots. 
\end{eqnarray}
It is clear that $b_1 + b_2 = 1-7c$, which is also the sum of the largest and the second largest eigenvalue of $\nu_{1,1,7}$ in \eqref{nu1,1,7}. We have
\begin{eqnarray}
\label{a,b_of_nu1,1,7}
    a = \frac{1-7c+\sqrt{-73 c^2 + 18 c - 1}}{2}, \ 
    b = \frac{1-7c-\sqrt{-73 c^2 + 18 c - 1}}{2}. 
\end{eqnarray}
From \eqref{a,b_of_nu1,1,7}, $-73 c^2 + 18 c - 1 > 0$ because $a>b$. Thus, $c$ also satisfies
\begin{eqnarray}
\label{c<b<a_reform}
    c < \frac{1-7c-\sqrt{-73 c^2 + 18 c - 1}}{2} < \frac{1-7c+\sqrt{-73 c^2 + 18 c - 1}}{2}. 
\end{eqnarray}
Considering both \eqref{c_range1} and \eqref{c<b<a_reform}, we obtain 
\begin{eqnarray}
\label{eq:c_range2}
    0.084542\cdots =\frac{1}{9+2\sqrt{2}} =  \frac{9-2\sqrt{2}}{73} < c < \frac{1}{11} = 0.090909\cdots. 
\end{eqnarray}
Therefore, the boundary point $\r = \nu_{1,1,7}$ has the form of
\begin{eqnarray}
    \diag \{\frac{1-7c+\sqrt{-73 c^2 + 18 c - 1}}{2}, \frac{1-7c-\sqrt{-73 c^2 + 18 c - 1}}{2}, c, \cdots, c\} 
\end{eqnarray}
with \eqref{eq:c_range2}, which is an extreme point, too. 

In addition, we consider the limit of the extreme point $\nu_{1,1,7}$. We obtain
\begin{eqnarray}
    \lim_{c\rightarrow\frac{1}{11}} \nu_{1,1,7} = \zeta_1, \ \lim_{c\rightarrow\frac{9-2\sqrt{2}}{73}} \nu_{1,1,7} = \zeta_2. 
\end{eqnarray}

(iii) We consider the boundary point 
\begin{eqnarray}
\label{nu1,3,5}
    \nu_{1,3,5} = \diag \{a, b, b, b, c\cdots, c\} = \diag \{1-3b-5c, b, b, b, c\cdots, c\}.
\end{eqnarray}
This condition implies that $l_{1}(\lambda) = l_{2}(\lambda) = 0$ because $\lambda_6 = \lambda_7$. We have
\begin{eqnarray}
\label{eq:nu1,3,5}
    \left|\begin{matrix}2c&3b+6c-1&c-b\\3b+6c-1&2c&c-b\\c-b&c-b&2b\\\end{matrix}\right|
    = 0,
\end{eqnarray}
i.e. 
\begin{eqnarray}
    -2 [6 b^3 + (38 c - 5) b^2 + (37 c^2 - 14 c + 1) b - 4 c^3 + c^2] = 0,
\end{eqnarray}
and from \eqref{nu1,3,5}, we obtain
\begin{eqnarray}
\label{eq:c in(0,1/9)}
    b \in (c,\frac{1-5c}{4}), \text{i.e.}, \ c \in (0, \frac{1}{9}).
\end{eqnarray}
The solutions of \eqref{eq:nu1,3,5} are 
\begin{eqnarray}
    b_1 = \frac{1-4c}{3},
\end{eqnarray}
\begin{eqnarray}
    b_2 = \frac{1-10c+\sqrt{108c^2-20c+1}}{4},
\end{eqnarray}
\begin{eqnarray}
    b_3 = \frac{1-10c-\sqrt{108c^2-20c+1}}{4}. 
\end{eqnarray}

(iii.1) $b_1 = \frac{1-4c}{3}$. Then $a=1-3b_1-5c=-c<0$. This leads to a contradiction, so we discard this solution. 

Consequently, the solution of $b$ is either $b_2$ or $b_3$. Note that $108c^2-20c+1 \geq 0$ holds for all $c \in \mathbb{R}$. 

(iii.2) $b_2 = \frac{1-10c+\sqrt{108c^2-20c+1}}{4}$. Then $a=1-3b_2-5c=\frac{1+10c-3\sqrt{108c^2-20c+1}}{4}>\frac{1-10c+\sqrt{108c^2-20c+1}}{4}$. This implies
\begin{eqnarray}
\label{c_range1_135}
    0.070805 \cdots = \frac{10-\sqrt{17}}{83} < c < \frac{10+\sqrt{17}}{83} = 0.170157 \cdots. 
\end{eqnarray}
Besides from $b_2>c>0$, we obtain 
\begin{eqnarray}
\label{c_range2_135}
    0<c<\frac{1}{11} = 0.090909 \cdots. 
\end{eqnarray}
From \eqref{eq:c in(0,1/9)},  \eqref{c_range1_135} and \eqref{c_range2_135}, we have
\begin{eqnarray}
\label{c_range3_135}
    0.070805 \cdots = \frac{1}{10+\sqrt{17}} = \frac{10-\sqrt{17}}{83} < c < \frac{1}{11} = 0.090909 \cdots.
\end{eqnarray}

(iii.3) $b_3 = \frac{1-10c-\sqrt{108c^2-20c+1}}{4}$. However, from $b_3 > 0$, we obtain that $1-10c-\sqrt{108c^2-20c+1} > 0$, which implies $100c^2>108c^2$. This leads to a contradiction, so we discard this solution. 

To conclude, the boundary point $\r = \nu_{1,3,5}$ in \eqref{nu1,3,5} with \eqref{c_range3_135} satisfies
\begin{eqnarray}
    a=\frac{1+10c-3\sqrt{108c^2-20c+1}}{4}, \ b=\frac{1-10c+\sqrt{108c^2-20c+1}}{4}.  
\end{eqnarray}
So $\r=\n_{1,3,5}$ is an extreme point of $\app_{3,3}$. 

In addition, we consider the limit of the extreme point $\nu_{1,3,5}$. We obtain
\begin{eqnarray}
\lim_{c\rightarrow\frac{1}{11}} \nu_{1,3,5} = \zeta_1, \ \lim_{c\rightarrow\frac{10-\sqrt{17}}{83}} \nu_{1,3,5} = \zeta_4. 
\end{eqnarray}

(iv) We consider the boundary point 
\begin{eqnarray}
\label{nu1,4,4}
    \nu_{1,4,4} = \diag \{a, b, b, b, b, c, c, c, c\} = \diag \{1-4b-4c, b, b, b, b, c, c, c, c\}.
\end{eqnarray}
From \eqref{nu1,4,4}, we obtain
\begin{eqnarray}
\label{eq:144c in(0,1/9)}
    b \in (c,\frac{1-4c}{5}), \text{i.e.}, \ c \in (0, \frac{1}{9}).
\end{eqnarray}
This condition implies that $l_{1}(\lambda) = l_{2}(\lambda) = 0$ because $\lambda_6 = \lambda_7$. We have
\begin{eqnarray}
&\left|\begin{matrix}2c&4b+5c-1&c-b\\4b+5c-1&2c&0\\c-b&0&2b\\\end{matrix}\right|=0,
\end{eqnarray}
i.e.,
\begin{eqnarray}
\label{eq:nu1,4,4}
    -2[16b^3+(41c-8)b^2+(19c^2-10c+1)b+c^3]=0
\end{eqnarray}
By direct computation, we obtain that $y<c<\frac{1}{11}$, where $y=0.056991\cdots$ is the second root of the equation $481y^3-37y^2-17y+1=0$.
Furthermore, the value of $b$ is the second root of the cubic equation $16x^3+(41c-8)x^2+(19c^2-10c+1)x+c^3=0$, once $c$ is fixed.

In addition, we consider the limit of the extreme point $\nu_{1,4,4}$. We obtain
\begin{eqnarray}
\lim_{c\rightarrow\frac{1}{11}} \nu_{1,4,4} = \zeta_1, \ \lim_{c\rightarrow y} \nu_{1,4,4} = \zeta_5. 
\end{eqnarray}

(v) We consider the boundary point 
\begin{eqnarray}
\label{nu1,5,3}
    \nu_{1,5,3} = \diag \{a, b, \cdots, b, c, c, c\} = \diag \{1-5b-3c, b, \cdots, b, c, c, c\}.
\end{eqnarray}
Based on Lemma \ref{boun-extr}, the following analysis considers two cases: $l_1(\lambda)=0$ and $l_2(\lambda)=0$. 

Firstly, if $l_1(\lambda)=0$, which implies 
\begin{eqnarray}
    \label{eq:nu1,5,3}
    \left|\begin{matrix}2c&5b+4c-1&0\\5b+4c-1&2c&0\\0&0&2b\\\end{matrix}\right|
    = 0.
\end{eqnarray}
\eqref{eq:nu1,5,3} yields two solutions. The first solution $b_1=\frac{1-6c}{5}$ implies $a=3c$, resulting in a boundary point $\nu_{1,5,3}$. However, according to Lemma \ref{le:nu(1,5,3)} stated earlier, this must be discarded. The second solution $b_1=\frac{1-2c}{5}$ implies $a=-c$, which is discarded, too.

Secondly, if $l_2(\lambda)=0$, which implies 
\begin{eqnarray}
    \label{eq:nu1,5,3(2)}
    \left|\begin{matrix}2c&5b+4c-1&c-b\\5b+4c-1&2b&0\\c-b&0&2b\\\end{matrix}\right|
    = 0.
\end{eqnarray}
Taking the only admissible solution of \eqref{eq:nu1,5,3(2)}, we obtain an extreme point in \eqref{nu1,5,3} with
\begin{eqnarray}
\label{a,b_of_nu1,5,3}
    a=\frac{1+7c+5\sqrt{-153c^2+38c-1}}{26},\ b=\frac{5-17c-\sqrt{-153c^2+38c-1}}{26},
\end{eqnarray}
where
\begin{eqnarray}
    0.039382 \cdots = \frac{5-2\sqrt{3}}{39} < c < \frac{1}{11} = 0.090909 \cdots. 
\end{eqnarray}
However, this solution \eqref{a,b_of_nu1,5,3} does not make $L_1(\lambda)$ a positive semi-definite matrix. Therefore, $\nu_{1,5,3}$ in \eqref{nu1,5,3} cannot be an extreme point. In fact, we consider the system: 
\begin{eqnarray}
\label{L1(lambda)>=O}
    2b \geq 0, \ -5b-2c+1 \geq 0, \ 5b+6c-1 \geq 0,
\end{eqnarray}
\begin{eqnarray}
    l_2(\lambda) = 0,
\end{eqnarray}
where \eqref{L1(lambda)>=O} is an equivalent condition for matrix $L_1(\lambda)$ to be positive semi-definite. Therefore, there is no boundary point with the form of \eqref{nu1,5,3} that satisfies condition $l_2(\lambda)=0$ according to Lemma \ref{AP3,3_judge_positive_semi-definite}. 

(vi) We consider the boundary point 
\begin{eqnarray}
\label{nu1,6,2}
    \nu_{1,6,2} = \diag \{a, b, \cdots, b, c, c\} = \diag \{1-6b-2c, b, \cdots, b, c, c\}.
\end{eqnarray}
This condition implies that $l_{1}(\lambda) = l_{2}(\lambda) = 0$ because $\lambda_6 = \lambda_7$. We have
\begin{eqnarray}
\label{eq:nu1,6,2}
    \left|\begin{matrix}2c&6b+3c-1&0\\6b+3c-1&2b&0\\0&0&2b\\\end{matrix}\right|
    = 0.
\end{eqnarray}
Selecting an appropriate solution from \eqref{nu1,6,2} and \eqref{eq:nu1,6,2}, we obtain
\begin{eqnarray}
    a=\frac{2c+\sqrt{6c-17c^2}}{3}, \ b=\frac{3-8c-\sqrt{6c-17c^2}}{18},
\end{eqnarray}
where 
\begin{eqnarray}
    0.023365 \cdots = \frac{23-14\sqrt{2}}{137} < c < \frac{1}{11} = 0.090909 \cdots. 
\end{eqnarray}

In addition, we consider the limit of the extreme point $\nu_{1,6,2}$. We obtain
\begin{eqnarray}
\lim_{c\rightarrow\frac{1}{11}} \nu_{1,6,2} = \zeta_1, \ \lim_{c\rightarrow\frac{23-14\sqrt{2}}{137}} \nu_{1,6,2} = \zeta_7. 
\end{eqnarray}

(vii) We consider the boundary point 
\begin{eqnarray}
\label{nu1,7,1}
    \nu_{1,7,1} = \diag \{a, b, \cdots, b, c\} = \diag \{1-7b-c, b, \cdots, b, c\}.
\end{eqnarray}
This condition implies that $l_{1}(\lambda) = l_{2}(\lambda) = 0$ because $\lambda_6 = \lambda_7$. We have
\begin{eqnarray}
\label{eq:nu1,7,1}
    \left|\begin{matrix}2c&8b+c-1&0\\8b+c-1&2b&0\\0&0&2b\\\end{matrix}\right|
    = 0.
\end{eqnarray}
Selecting an appropriate solution by \eqref{nu1,7,1} and \eqref{eq:nu1,7,1}, we obtain
\begin{eqnarray}
    a=\frac{4-11c+7\sqrt{8c-7c^2}}{32}, \ b=\frac{4-3c-\sqrt{8c-7c^2}}{32},
\end{eqnarray}
where 
\begin{eqnarray}
    0 < c < \frac{1}{11} = 0.090909 \cdots. 
\end{eqnarray}
In addition, we consider the limit of the extreme point $\nu_{1,7,1}$. We obtain
\begin{eqnarray}
\lim_{c\rightarrow\frac{1}{11}} \nu_{1,7,1} = \zeta_1, \ \lim_{c\rightarrow0} \nu_{1,7,1} = \zeta_8. 
\end{eqnarray}

\subsection{Other cases in the proof of Table \ref{tab:performance2}}
\label{other_of_tab2}
(i) We consider the boundary point 
\begin{eqnarray}
\label{nu2,1,6}
    \nu_{2,1,6} = \diag \{a, a, b, c, c, c, c, c, c\} = \diag \{\frac{1-b-6c}{2}, \frac{1-b-6c}{2}, b, c, c, c, c, c, c\}.
\notag\\
\end{eqnarray}
This condition implies that $l_{1}(\lambda) = l_{2}(\lambda) = 0$ because $\lambda_6 = \lambda_7$. We have
\begin{eqnarray}
    \left|\begin{matrix}2c&\frac{b+8c-1}{2}&\frac{b+8c-1}{2}\\\frac{b+8c-1}{2}&2c&c-b\\\frac{b+8c-1}{2}&c-b&2c\\\end{matrix}\right|
    = 0.
\end{eqnarray}
i.e.,
\begin{eqnarray}
\label{eq:nu2,1,6}
    \frac{1}{2}[-b^3+(2-21c)b^2+(-72c^2+18c-1)b+c(-52c^2+16c-1)]=0.
\end{eqnarray}
By direct computation using \eqref{nu2,1,6} and \eqref{eq:nu2,1,6}, we obtain 
\begin{eqnarray}
    a=2c+\sqrt{12c^2-c},\ b=1-10c-2\sqrt{12c^2-c}
\end{eqnarray}
with 
\begin{eqnarray}
    0.083333\cdots=\frac{1}{12}<c<\frac{9-2\sqrt{2}}{73}=0.084542\cdots. 
\end{eqnarray}

In addition, we consider the limit of the extreme point $\nu_{2,1,6}$. We obtain
\begin{eqnarray}
\lim_{c\rightarrow\frac{9-2\sqrt{2}}{73}} \nu_{2,1,6} = \zeta_2, \ 
\lim_{c\rightarrow\frac{1}{12}} \nu_{2,1,6} = \zeta_3. 
\end{eqnarray}

(ii) We consider the boundary point 
\begin{eqnarray}
\label{nu2,2,5}
    \begin{aligned}
        \nu_{2,2,5} &= \diag \{a, a, b, b, c, c, c, c, c\} \\
        &= \diag \{\frac{1-2b-5c}{2}, \frac{1-2b-5c}{2}, b, b, c, c, c, c, c\}.
    \end{aligned}
\end{eqnarray}
This condition implies that $l_{1}(\lambda) = l_{2}(\lambda) = 0$ because $\lambda_6 = \lambda_7$. We have
\begin{eqnarray}
    \left|\begin{matrix}2c&\frac{2b+7c-1}{2}&\frac{2b+7c-1}{2}\\\frac{2b+7c-1}{2}&2c&c-b\\\frac{2b+7c-1}{2}&c-b&2b\\\end{matrix}\right|
    = 0.
\end{eqnarray}
i.e.,
\begin{eqnarray}
\label{eq:nu2,2,5}
   -4b^3+(4-30c)b^2+(-37c^2+14c-1)b-2c^3=0.
\end{eqnarray}
From \eqref{nu2,2,5}, we have 
\begin{eqnarray}
\label{neq:range_2,2,5}
    \frac{1-2b-5c}{2}>b>c>0.
\end{eqnarray}
By solving the system of \eqref{eq:nu2,2,5} and \eqref{neq:range_2,2,5}, we obtain
\begin{eqnarray}
    0.070805 \cdots=\frac{10 - \sqrt{17}}{83}<c<\frac{9-2\sqrt{2}}{73}=0.084542\cdots. 
\end{eqnarray}
Furthermore, the value of $b$ is the second root of the cubic equation \eqref{eq:nu2,2,5}, once $c$ is fixed.

In addition, we consider the limit of the extreme point $\nu_{2,2,5}$. We obtain
\begin{eqnarray}
\lim_{c\rightarrow\frac{9-2\sqrt{2}}{73}} \nu_{2,2,5} = \zeta_2, \ 
\lim_{c\rightarrow\frac{10 - \sqrt{17}}{83}} \nu_{2,2,5} = \zeta_4. 
\end{eqnarray}

(iii) We consider the boundary point 
\begin{eqnarray}
\label{nu2,3,4}
    \begin{aligned}
        \nu_{2,3,4} &= \diag \{a, a, b, b, b, c, c, c, c\} \\
        &= \diag \{\frac{1-3b-4c}{2}, \frac{1-3b-4c}{2}, b, b, b, c, c, c, c\}.
    \end{aligned}
\end{eqnarray}
This condition implies that $l_{1}(\lambda) = l_{2}(\lambda) = 0$ because $\lambda_6 = \lambda_7$. We have
\begin{eqnarray}
    \left|\begin{matrix}2c&\frac{3b+6c-1}{2}&\frac{3b+6c-1}{2}\\\frac{3b+6c-1}{2}&2c&0\\\frac{3b+6c-1}{2}&0&2b\\\end{matrix}\right|
    = 0.
\end{eqnarray}
i.e.,
\begin{eqnarray}
\label{eq:nu2,3,4}
   \frac{1}{2}[-9b^3+(6-45c)b^2+(-56c^2+18c-1)b-c(1-6c)^2]=0.
\end{eqnarray}
By direct computation using \eqref{nu2,3,4} and \eqref{eq:nu2,3,4}, we obtain that $y<c<\frac{9-2\sqrt{2}}{73}=0.084542\cdots$, where $y=0.056991\cdots$ is the second root of the equation $481y^3-37y^2-17y+1=0$.
Furthermore, the value of $b$ is the second root of the cubic equation \eqref{eq:nu2,3,4}, once $c$ is fixed.

In addition, we consider the limit of the extreme point $\nu_{2,3,4}$. We obtain
\begin{eqnarray}
\lim_{c\rightarrow\frac{9-2\sqrt{2}}{73}} \nu_{2,3,4} = \zeta_2, \ 
\lim_{c\rightarrow y} \nu_{2,3,4} = \zeta_5. 
\end{eqnarray}

(v) We consider the boundary point 
\begin{eqnarray}
\label{nu2,5,2}
    \begin{aligned}
        \nu_{2,5,2} &= \diag \{a, a, b, b, b, b, b, c, c\} \\
        &= \diag \{\frac{1-5b-2c}{2}, \frac{1-5b-2c}{2}, b, b, b, b, b, c, c\}.
    \end{aligned}
\end{eqnarray}
This condition implies that $l_{1}(\lambda) = l_{2}(\lambda) = 0$ because $\lambda_6 = \lambda_7$. We have
\begin{eqnarray}
    \left|\begin{matrix}2c&\frac{5b+4c-1}{2}&\frac{7b+2c-1}{2}\\\frac{5b+4c-1}{2}&2b&0\\\frac{7b+2c-1}{2}&0&2b\\\end{matrix}\right|
    = 0.
\end{eqnarray}
i.e.,
\begin{eqnarray}
\label{eq:nu2,5,2}
   -b [37 b^2 + 2 (13 c - 6) b + 10 c^2 - 6 c + 1]=0.
\end{eqnarray}
By direct computation and using \eqref{nu2,5,2} and \eqref{eq:nu2,5,2}, we obtain that 
\begin{eqnarray}
    a= \frac{7 - 9 c + 5 \sqrt{-201 c^2 + 66 c - 1}}{37}, \ 
    b= \frac{6 - 13 c - \sqrt{-201 c^2 + 66 c - 1}}{37}
\end{eqnarray}
with
\begin{eqnarray}
    0.023365 \cdots = \frac{23-14\sqrt{2}}{137} < c < \frac{9-2\sqrt{2}}{73}=0.084542\cdots. 
\end{eqnarray}

In addition, we consider the limit of the extreme point $\nu_{2,5,2}$. We obtain
\begin{eqnarray}
\lim_{c\rightarrow\frac{9-2\sqrt{2}}{73}} \nu_{2,5,2} = \zeta_2, \ 
\lim_{c\rightarrow\frac{23-14\sqrt{2}}{137}} \nu_{2,5,2} = \zeta_7. 
\end{eqnarray}

(vi) We consider the boundary point 
\begin{eqnarray}
\label{nu2,6,1}
    \begin{aligned}
        \nu_{2,6,1} &= \diag \{a, a, b, b, b, b, b, b, c\} \\
        &= \diag \{\frac{1-6b-c}{2}, \frac{1-6b-c}{2}, b, b, b, b, b, b, c\}.
    \end{aligned}
\end{eqnarray}
This condition implies that $l_{1}(\lambda) = l_{2}(\lambda) = 0$ because $\lambda_6 = \lambda_7$. We have
\begin{eqnarray}
    \left|\begin{matrix}2c&\frac{8b+c-1}{2}&\frac{8b+c-1}{2}\\\frac{8b+c-1}{2}&2b&0\\\frac{8b+c-1}{2}&0&2b\\\end{matrix}\right|
    = 0.
\end{eqnarray}
i.e.,
\begin{eqnarray}
\label{eq:nu2,6,1}
   -b [64 b^2 + 8 (c - 2) b + (c - 1)^2]=0.
\end{eqnarray}
By direct computation and using \eqref{nu2,6,1} and \eqref{eq:nu2,6,1}, we obtain that 
\begin{eqnarray}
    a= \frac{2 - 5c + 3\sqrt{4 c - 3 c^2}}{16}, \ 
    b= \frac{2 - c - \sqrt{4 c - 3 c^2}}{16}
\end{eqnarray}
with
\begin{eqnarray}
    0 < c < \frac{9-2\sqrt{2}}{73}=0.084542\cdots. 
\end{eqnarray}

In addition, we consider the limit of the extreme point $\nu_{2,5,2}$. We obtain
\begin{eqnarray}
\lim_{c\rightarrow\frac{9-2\sqrt{2}}{73}} \nu_{2,6,1} = \zeta_2, \ 
\lim_{c\rightarrow0} \nu_{2,6,1} = \zeta_8. 
\end{eqnarray}

\bibliographystyle{unsrt}

\bibliography{Thefinalanswer} 

\end{document}